\documentclass[%
 reprint,
 preprint,
showpacs,preprintnumbers,
 amsmath,amssymb,
 aps,
 pra,
 longbibliography,
 lengthcheck,%
]{revtex4-1}
\usepackage[usenames,dvipsnames,svgnames,table]{xcolor}
\usepackage{empheq}
\usepackage{enumerate}
\usepackage{type1cm}
\usepackage{graphicx}
\usepackage{epstopdf}
\DeclareGraphicsExtensions{.eps,.png}
\usepackage{amsthm}
\usepackage{epigraph}


\newcommand{\<}[1]{}
\newcommand{\pder}[2]{\mathop{\frac{\partial #1}{\partial #2}}}
\newcommand{\der}[2]{\mathop{\frac{d #1}{d #2}}}
\newcommand{\idx}[1]{\mbox{\textnormal{\scriptsize #1}}}


\newcommand{\arccot}{\mathrm{arccot}}
\newcommand{\Sign}{{\mathrm{sign}}}
\newcommand{\Real}{\mathbb{R}}

\newcommand{\umax}{u_{\idx{max}}}

\newcommand{\Tr}{\mathop{\rm{Tr}}}

\newcommand{\vecu}{\mathbf{u}}

\newcommand{\const}{\mathop{\textrm{const}}}

\newcommand{\ncp}{n}
\newcommand{\rM}{\tilde r^{-}}
\newcommand{\rP}{\tilde r^{+}}

\newtheorem{llemma}{Lemma}

\newtheorem{theorem}{Statement}
\newtheorem{proposition}[theorem]{Proposition}

\let\oldgather = \gather
\let\endoldgather = \endgather
\renewenvironment{gather}[0]{\par\nobreak\noindent\oldgather}{\endoldgather}
\let\oldalign = \align
\let\endoldalign = \endalign
\renewenvironment{align}[0]{\par\nobreak\noindent\oldalign}{\endoldalign}

\begin{document}
\title{Role of control constraints in quantum optimal control }
\author{Dmitry V. Zhdanov}
\email{dm.zhdanov@gmail.com}
\affiliation{Department of Chemistry, Northwestern University, Evanston, IL 60208 USA}
\author{Tamar Seideman}
\email{t-seideman@northwestern.edu}
\affiliation{Department of Chemistry, Northwestern University, Evanston, IL 60208 USA}
\pacs{
03.65.-w, 
02.30.Yy, 
03.67.Ac, 
37.10.Jk  
}

\begin{abstract}
\epigraph{Have you stopped drinking brandy in the mornings? Answer: Yes or No?}{Astrid Lindgren\\Karlsson-on-the-Roof is Sneaking Around Again}
The problems of optimizing the value of an arbitrary observable of the two-level system at both a fixed time and the shortest possible time is theoretically explored. Complete identification and classification along with comprehensive analysis of globally optimal control policies and traps (i.e. policies which are locally but not globally optimal) is presented. The central question addressed is whether the control landscape remains trap-free if control constraints of the inequality type are imposed. The answer is astonishingly controversial, namely, although  formally it is always negative, in practice it is positive provided that the control time is fixed and chosen long enough.
\end{abstract}
\maketitle

\section{Introduction\label{@SEC:Intro}}
Coherent control of the two level system is crucial for qubit design. The two-level Landau-Zener system is probably  the most fundamental qubit model with the single control parameter $u$. Its master equation reads:
\begin{gather}\label{01.-evolution_law}
\rho(\tau){=}U_{\tau,0}(u)\rho(0)U_{\tau,0}^{\dagger}(u),
\end{gather}
with the unitary transformation $U_{\tau'',\tau'}(u)$ defined as
$U_{\tau'',\tau'}(u){=}\overrightarrow{\exp}({-}i\int_{\tau{=}\tau'}^{\tau''}(\hat\sigma_x{+}u(\tau)\hat\sigma_z)d\tau)$.
Here $\rho$ is the system's density matrix, $\sigma_{x}$ and $\sigma_{z}$ are Pauli matrices, $\tau$ is a dimensionless time $\tau{=}\alpha t$, and the control parameter is usually proportional to the interaction strength with an external controlled electric or magnetic field ($u{=}\beta{\cal E}$ or $u{=}\beta{\cal B}$).
Depending on the physical meaning of the scaling factors $\alpha$ and $\beta$, Eq.~\eqref{01.-evolution_law} can represent the wide variety of modern experiments on magnetic or optical control of quantum dots \cite{2013-Greve}, vacancy centers in crystals \cite{2012-Kodriano}, spin states of atoms and molecules \cite{2010-Lapert},  Bose-Einstein condensates \cite{2014-Schafer,2013-Malossi}, superconducting circuits \cite{2010-Shevchenko} etc.

We consider the following optimal control problem:
\begin{gather}\label{01.-performance_index}
J{=}\Tr[\rho(T)\hat O]{\to}\max
\end{gather}
\begin{gather}
{-}\umax{\leq}u{\leq}\umax\label{01.-finite_domain},\\
T{<}T_{\idx{max}}\label{01.-finite_time},
\end{gather}
where $\max$ is taken with respect to the program (or control policy) $\tilde u(\tau)$, and possibly also the final time $T$. This task well represents the initial preparation of the qubit in the given initial pure state corresponding to the largest eigenvalue of the observable $\hat O$.

The key question of our study is the extent to which the restrictions \eqref{01.-finite_domain}, \eqref{01.-finite_time} complicate finding the policy $\tilde u^{\idx{opt}}(\tau)$ which maximizes $J[u]$ using the local search methods. The applied value of this question is justified both by technical limitations and also the breakdown of the two-level approximation in strong fields. In addition, the bound \eqref{01.-finite_time} is motivated by the fatal losses of fidelity of quantum gates due to incontrollable decoherence at long times.

The major obstacles in searching for $\tilde u^{\idx{opt}}(\tau)$ are ``traps'' and ``saddle points''. These are the special policies $\tilde u$ such that $J[\tilde u(\tau)]{<}J^{\idx{opt}}$ and $J[\tilde u{+}\delta u]{<}J[\tilde u]$ (trap) or $J[\tilde u{+}\delta u]{=}J[\tilde u]$ (saddle point) for any infinitesimal variation $\delta u(\tau)$ consistent with \eqref{01.-finite_domain}. Presence of saddle points slows down the convergence of any local search strategy whereas reaching the trap leads to its failure. Traps and saddle points are controversial matters for optimal quantum control (OQC) theory. Despite substantial experimental evidence \cite{2011-Moore} supported by theoretical arguments of trap-free ``quantum landscapes'' $J[\tilde u(\tau)]$  \cite{2008-Wu,2009-Brif,2008-Pechen,2006-Ho}, it is also easy to provide simple counterexamples \cite{2012-Hoff,2013-Fouquieres,2011-Pechen}.

The Landau-Zener system is quite special from this perspective since is the only system for which absence of traps in the unconstrained case (i.e{.} when $\umax{=}\infty$ in \eqref{01.-finite_domain}) was formally proven \cite{2001-Khaneja,2012-Pechen}. Moreover, its complete controllability for any finite value of $\umax$ (provided that $T_{\max}$ is chosen sufficiently long) was also justified \cite{1972-Jurdjevic,2000-D'Alessandro,2002-Wu}. Thus, this system provides opportunity to evaluate the effect of constraints \eqref{01.-finite_time} and \eqref{01.-finite_domain} on the landscape complexity in the most pristine form. The existing data portend that this effect should be nontrivial. For example, the unconstrained time-optimal policies $\tilde u(\tau)$ are shown to be $\tilde u(\tau){=}c'\delta(\tau){+}c''\delta(\tau{-}T)$ where $c'$ and $c''$ are constants and $\delta(\tau)$ is the Dirac delta function \cite{2013-Hegerfeldt}. Such solutions are evidently inconsistent with any constraints of form \eqref{01.-finite_domain}.

An additional feature of the Landau-Zener system is its simplicity, which allows us to infer analytically the topology of $J[u]$. We hope that this exceptional opportunity will provide some clue on the expected outcomes in more complex cases, which can be handled only numerically.

It is worth mentioning that the restrictions \eqref{01.-finite_domain} are critical in the foundation of modern theory of optimal control since the corresponding problems can not be solved in the framework of classical calculus of variations and require special methods, such as the Pontryagin's maximum principle (PMP) \cite{BOOK-Pontryagin-1962,BOOK-Agrachev/Sachkov}. For completeness of the presentation, we provide in Sec.~\ref{@SEC:preliminaries} the brief overview of PMP and the known results of the first-order analysis of controlled Landau-Zener system in the PMP framework. In particular, we clarify why the unconstrained problem \eqref{01.-performance_index} is trap-free, and introduce the primary classification of the stationary points (i.e{.} the locally and globally optimal solutions, traps and saddle points) by showing that all of them in the case of time-optimal control as well as traps and saddle points in the case of fixed time control are represented by piecewise-constant controls $\tilde u(\tau)$ which can take only 3 values: 0 and $\pm\umax$.

The rest of the paper is organized as follows. In Sec.~\ref{@SEC:beyond_1st_order} we derive the comprehensive set of criteria which allow to outline the landscape profile and distinguish between various types of its stationary points. The obtained criteria substantially extend, generalize and/or specialize a number of known results \cite{BOOK-Sussman/Agrachev,2005-Boscain,2006-Boscain,2013-Hegerfeldt} obtained for related problems using the index theory \cite{1998-Agrachev} methods, optimal syntheses on 2-D manifolds \cite{BOOK-Boscain}, etc. In this work we propose the technique of ``sliding'' variations which allows to reduce the high-order analysis to methodologically simple and intuitively appealing geometrical arguments.

In subsequent sections~\ref{@SEC:traps(time-optimal_control)} and \ref{@SEC:traps(T=const)} we apply these criteria to identify and classify the traps and saddle points for the cases of time-optimal and time-fixed control, correspondingly. A brief summary of the obtained results and the general conclusions which follow from this analysis are given in the final section~\ref{@SEC:conclusion}.

\section{Regular and singular optimal policies\label{@SEC:preliminaries}}
In this section we review the first-order analysis of problem \eqref{01.-evolution_law} with constraint \eqref{01.-finite_domain} in the PMP framework. For additional details one can refer to extensive literature (e.g.~\cite{BOOK-Agrachev/Sachkov}, pp.280-286,\cite{2005-Boscain}). PMP provides the necessary criterion of local optimality of control $u(\tau)$ in terms of the Hamilton-type Pontryagin function $K(\rho(\tau),\hat O(\tau),u(\tau))$:
\begin{gather}\label{02.-PMP}
\tilde \vecu(\tau){=}\arg\max_{u(\tau)}K(\tilde\rho(\tau),\tilde{\hat{O}}(\tau),u(\tau)).
\end{gather}
The processes satisfying the PMP are called stationary points, or extremals, and will be denoted hereafter with the ~$\tilde{}$~ marks: $\{\tilde u(\tau),\tilde\rho(\tau),\tilde{\hat{O}}(\tau)\}$.

The Pontryagin function among the state variables $\rho$ and controls $u$ depends also on the so called co-state of adjoint variables $\hat O(\tau)$, which are subject to the special evolution equation and boundary transversality conditions. In the case of the control problem \ref{01.-evolution_law}, \eqref{01.-performance_index}, the Pontryagin function takes the form:
\begin{gather}\label{02.-PF}
K(\rho(\tau),\hat O(\tau),u(\tau)){=}{-}i\Tr\left\{[\rho(\tau),\hat O(\tau)](\hat\sigma_x{+}u(\tau)\hat\sigma_z)\right\},
\end{gather}
the evolution equation for $\hat O(\tau)$ coincides with \eqref{01.-evolution_law}:
\begin{gather}
\hat O(\tau''){=}U_{\tau'',\tau'}(u)\hat O(\tau')U^{\dagger}_{\tau'',\tau'}(u)\label{02.-O(t)},
\end{gather}
and the boundary conditions read,
\begin{gather}
\label{02.-O(T)}\hat O(T){=}\hat O;\\
K(T){\begin{cases}{=}0 & \text{ if  $T$ is unconstrained;}\\{\geq}0 & \text{in the case \eqref{01.-finite_time}.}\label{02.-K(T)}\end{cases}}
\end{gather}
Since the Pontryagin function \eqref{02.-PF} linearly depends on $u(\tau)$ the PMP can be satisfied in two ways,
\begin{enumerate}[1)]
\item The switching function $\pder{}{u(\tau)}K{=}{-}i\Tr\left\{[\rho(\tau),\hat O(\tau)]\hat\sigma_z\right\}{\neq}0$. In this case $\tilde u(\tau){=}u_{\idx{max}}\Sign(\pder{}{u(\tau)}K)$, and the corresponding section of the trajectory is called regular. It is clear that the optimal policy $\tilde u(\tau)$ is actively constrained, and relaxing the inequalities \eqref{01.-finite_domain} will improve the optimization result. For this reason, the optimal trajectory containing the regular sections can not be kinematically optimal. An optimal process $\{\tilde\rho(\tau),\tilde{\hat{O}}(\tau),\tilde u(\tau)\}$ for which $\tilde u(\tau){=}\pm\umax$ everywhere except for the finite number of time moments is often referred as the bang-bang control.
\item It may happen that the switching function remains equal to zero over a finite interval of time. In this case, the corresponding segment of the trajectory is called singular, and the associated optimal control can be determined only from higher-order optimality criteria, such as the generalized Legendre-Clebsch conditions, Goh condition etc. \cite{BOOK-Agrachev/Sachkov,1966-Goh,BOOK-Milyutin-1998}
\end{enumerate}

Substituting 
\eqref{01.-evolution_law} and \eqref{02.-O(t)} into \eqref{02.-PF} one can directly check that the Pontryagin function for problem \eqref{01.-evolution_law} is constant along any extremal:
\begin{gather}\label{03.-PF=const<=0}
\forall \tau:K(\tau){=}\tilde K{\geq}0 \mbox{ on each extremal},
\end{gather}
where the strict inequality holds only if the constraint \eqref{01.-finite_time} is active, and
\begin{gather}\label{02.-K(t)=0}
\forall \tau:K(\tau){\equiv}0 \text{ for any kinematically optimal solution}.
\end{gather}

\subsection{Singular extremals of the problem (\texorpdfstring{\ref{01.-evolution_law}}{Lg})}

Every kinematically optimal solution $\tilde u(\tau)$ consist of a single singular subarc. Here we  show that in the case of the Landau-Zener system the converse is also true: every singular extremal $\tilde u(\tau)$ corresponding to inactive constraint \eqref{01.-finite_time} delivers the global kinematic extremum (maximum or minimum) to the problem \eqref{01.-performance_index}. Indeed, let $\tau_1$ be an arbitrary internal point of the singular trajectory. Then, the PMP states that:
\begin{gather}\label{02.-singular_arc_criterion}
\pder{}{u(\tau)}K(\tau){=}{-}i\Tr\left\{[\rho(\tau_1),\hat O(\tau_1)]U^{\dagger}_{\tau,\tau_1}(\tilde u)\hat\sigma_zU_{\tau,\tau_1}(\tilde u)\right\}{\equiv}0
\end{gather}
for any $\tau$ such that $|\tau-\tau_1|{<}\epsilon$ and sufficiently small $\epsilon$, and in particular:

\begin{subequations}\label{02.-singular_arc_sigma_z,y,x}
\begin{gather}\label{02.-singular_arc_sigma_z}
{-}i\Tr\left\{[\rho(\tau_1),\hat O(\tau_1)]\hat\sigma_z\right\}{=}0.
\end{gather}
The two subsequent time derivatives of the equality \eqref{02.-singular_arc_criterion} at $\tau{=}\tau_1$ give,
\begin{gather}
{-}i\Tr\left\{[\rho(\tau_1),\hat O(\tau_1)]\hat\sigma_y\right\}{=}0\\
{-}i\tilde u(\tau_1)\Tr\left\{[\rho(\tau_1),\hat O(\tau_1)]\hat\sigma_x\right\}{=}0.
\end{gather}
\end{subequations}
Equations \eqref{02.-singular_arc_sigma_z,y,x} can be simultaneously satisfied only in the two cases:
\begin{subequations}\label{02.-singular_extremal_criteria}
\begin{align}
&[\rho(\tau),\hat O(\tau)]{=}0; \label{02.-singular_extremal_criterion_1}\\
&[\rho(\tau),\hat{O}(\tau)]{=}i\kappa\hat\sigma_x~~\mbox{and}~~u(\tau){=}0~~(\kappa{=}\const{\ne}0)\label{02.-singular_extremal_criterion_2}.
\end{align}
\end{subequations}
The condition \eqref{02.-singular_extremal_criterion_1} is nothing but the criterion of the global kinematic extremum (maximum or minimum) for our two-level system. In other words, we just proved that all the extrema of the landscape $J(u)$ for the unconstrained Landau-Zener system except for the case of $u(t){\equiv}0$ are its global kinematic maxima and minima. This result was obtained in \cite{2001-Khaneja,2012-Pechen}.

The condition \eqref{02.-singular_extremal_criterion_2} indicates that the only possible everywhere singular non-kinematic extremal of the problem \eqref{01.-performance_index} is $\tilde u(\tau){\equiv}0~ (\tau{\in}[0,T])$. Eq.~\eqref{02.-PF} implies that $K(\tau){=}\kappa$ in this case. Thus, in view of \eqref{02.-K(T)}, this extremal can appear only under the active pressure of the constraint \eqref{01.-finite_time}.

\subsection{Regular and mixed extremals of the problem (\texorpdfstring{\ref{01.-evolution_law}}{Lg})}

According to the PMP and conditions \eqref{02.-singular_extremal_criteria}, the generic non-singular extremal is the piecewise-constant function with $\ncp$ switchings of either bang ($u{=}{\pm}\umax$) or bang-singular ($u{=}{\pm}\umax,0$) type where the singular arcs match \eqref{02.-singular_extremal_criterion_2}. For convenience, we will refer to  extermals with (without) singular arcs as of type II (type I). We will use the subscript $i$ (i.e.~$\tilde\tau_i$, $\tilde\rho_i$ etc., $0{<}i{<}\ncp{+}1$) for the parameters related to the $i$-th control discontinuity (corner point). The durations of the right (left) adjacent arcs and the associated values of $u$ will be labeled as $\tilde{\Delta\tau}_{i}$ ($\tilde{\Delta\tau}_{i{-}1}$) and $\tilde u_i^{+}$ ($\tilde u_i^{-}$). The subscripts $i{=}0$ and $i{=}n{+}1$ will be reserved for the parameters of the trajectory endpoints. We will also sometimes use the notations $^s$I and $^s$II with index $s$ denoting the number of times the control changes the sign.

We first address the properties of type I extremals. The necessary condition of the $i$-th corner point is given by eq.~\eqref{02.-singular_arc_sigma_z}. Combining it with \eqref{03.-PF=const<=0} we get,
\begin{gather}\label{02.-[rho,O]}
{-}i[\tilde\rho(\tilde\tau_i),\tilde{\hat O}(\tilde\tau_i)]{=}c_{i,1}\hat\sigma_x{+}c_{i,2}\hat\sigma_y,~~~c_{i,1},c_{i,2}\in\Real,
\end{gather}
where $c_{i,1}{=}0({>}0)$ when the constraint \eqref{01.-finite_time} is inactive(active) and the case $c_{i,1}{\leq}0$ can result from the optimization with fixed $T$. Consider the adjacent $(i{+}1)$-th bang arc. The PMP criterion \eqref{02.-PMP} for its interior reads,
\begin{gather}\label{03.-PMP_at_1st_regular_segment}
\tilde u(\tau)|_{\tau{>}\tilde\tau_i}{=}\arg\max_u\Tr[U_{\tau,\tilde\tau_i}(c_{i,1}\hat\sigma_x{+}c_{i,2}\hat\sigma_y)U_{\tau,\tilde\tau_i}^{-1}\hat\sigma _z]u,
\end{gather}
which gives $\tilde u^+_i{=}\frac{c_{i,2}}{|c_{i,2}|}\umax$. If the $(i{+}1)$-th arc ends with another corner point $\tilde\tau_{i{+}1}$ then it follows from \eqref{03.-PMP_at_1st_regular_segment} that,
\begin{gather}\label{03.-equation_for_next_junction}
\Tr[U_{\tilde\tau_{i+1},\tilde\tau_i}(c_{i,1}\hat\sigma_x{+}c_{i,2}\hat\sigma_y)U_{\tilde\tau_{i+1},\tilde\tau_i}^{-1}\hat\sigma _z]{=}0.
\end{gather}
Condition~\eqref{03.-equation_for_next_junction} can be reduced to the form,
\begin{gather}
c_{i,2}{\sqrt{\umax^2{+}1}}{=}{-}{c_{i,1}\tilde u_i^{+}\tan(\tilde{\Delta\tau_i}\sqrt{\umax^2{+}1})},
\end{gather}
and resolved relative to $\Delta\tau_{i+1}$. Retaining the physically appropriate solutions consistent with eq.~\eqref{03.-PMP_at_1st_regular_segment} we obtain,
\begin{gather}\label{03.-regular_segment_length}
\tilde{\Delta\tau}_{i+1}{=}
\begin{cases}
 \tilde{\delta\tau}_i,  & c_{i,1}{<}0; \\
 \pi\cos(\alpha){-}\tilde{\delta\tau}_i,  & c_{i,1}{>}0,
\end{cases}
\end{gather}
where $\alpha{=}\arctan(\umax)$ and
\begin{gather}\label{03.-regular_segment_length_2}
\tilde{\delta\tau}_i{=}{\arctan\left(\left|\frac{c_{2,i}}{c_{1,i}\umax}\right|\sec(\alpha)\right)}\cos(\alpha).
\end{gather}
Note that ${-}i[\tilde\rho(\tilde\tau_{i+1}),\tilde{\hat O}(\tilde\tau_{i+1})]{=}c_{1,i}\hat\sigma_x{-}c_{i,2}\hat\sigma_y$, i.e.
\begin{gather}\label{03.-c1,c2-relations}
c_{1,i{+}1}{=}c_{1,i},~~c_{2,i{+}1}{=}{-}c_{2,i}.
\end{gather}
Since eqs.~\eqref{03.-regular_segment_length} and \eqref{03.-regular_segment_length_2} do not depend on the sign of $c_{i,2}$ one obtains that durations of all interior bang segments are equal: ${\forall} i{\geq2},i{\leq}\ncp: \tilde{\Delta\tau}_i{=}\tilde{\Delta\tau}$ (see Fig.~\ref{@FIG.01}a). Moreover, eq.~\eqref{03.-regular_segment_length} admits the estimate $\frac{\pi}{2}\cos\alpha{\leq}\Delta\tau{\leq}\pi\cos\alpha$ for the case of time-optimal problem with constraint \eqref{01.-finite_time}.

\begin{figure}[t!]
\ifpdf
  \includegraphics[width=0.45\textwidth]{{fig.01}.eps}
\else
  \includegraphics[width=0.45\textwidth]{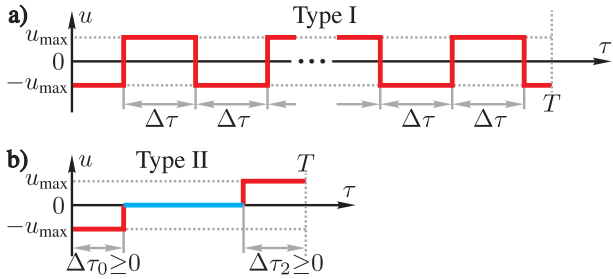} 
\fi
\caption{Possible types of extremals $\tilde u(t)$ associated with nonkinematic optimal solutions and traps as well as the locally time-optimal kinematic optimal solutions.
\label{@FIG.01} }
\end{figure}

Consider now the extremals of type II. Let $\tau{\in}(\tilde\tau_{j{-}1},\tilde\tau_{j})$ be the singular arc where the relations \eqref{02.-singular_extremal_criterion_2} hold. If $\tilde\tau_{j}{\ne}\tilde T$ when it is the corner point between regular and singular arc. Suppose that there exists one more corner point $\tau_{j+1}{>}\tau_{j}$. Then it follows from eqs.~\eqref{03.-c1,c2-relations}~\eqref{03.-regular_segment_length} and \eqref{03.-equation_for_next_junction} that $\tilde{\Delta\tau}_j{=}\pi\cos\alpha$ and $U_{\tilde\tau_{j{+}1},\tilde\tau_j}{=}{-}\hat I$, so that $\tilde\rho(\tilde\tau_{j+1}){=}\tilde\rho(\tilde\tau_{j})$. Using  similar arguments, it is straightforward to derive the analogous result for possible corner points prior to $\tau_j$. Thus, taking any 3-segment ``anzatz'' extremal similar to that shown in Fig.~\ref{@FIG.01}b, one can construct an infinite family ${\cal F}^{[\mathrm k]}(\tilde u(\tau))$ of II$^{[\mathrm k]}$ extremals ($\mathrm k{=}{k_1,k_2}$) by randomly inserting $k_1$ and $k_2$ bang segments of the length $\pi\cos\alpha$ with $u{=}{+}\umax$ and  $u{=}{-}\umax$ into corner points of $\tilde u(\tau)$ or inside its singular arcs. It is clear that each family ${\cal F}^{[\mathrm k]}(\tilde u(\tau))$ constitutes the connected set of solutions, and all the members have equal performances $J$. Thus, the properties of any type II extremal can be reduced to the analysis of the equivalent three-segment $^0$II type or $^1$II type extremal where all the positive and negative bang segments are merged into distinct continuous arcs separated by a singular arc.

The presented first-order analysis outlines the admissible profiles for optimal non-kinematic solutions (see Fig.~\ref{@FIG.01}). Moreover, by continuity argument (i.e. by considering the series of solutions with fixed $T{\to}T_{\mathrm{opt}}$ from below), these profiles should embrace all possible types of the stationary points of the time-optimal problem \eqref{01.-performance_index},\eqref{01.-finite_time}. It is worth stressing that the later include the globally optimal and everywhere singular kinematic solutions for which both segments with $u{=}\pm\umax$ and $u{=}0$ are singular. With this in mind, it is helpful to introduce the following terminological convention for the rest of the paper in order to preserve the integrity of the presentation while avoiding potential confusions: we will reserve the term ``singular'' exclusively for the segments of extremals at which $u{=}0$ whereas the segments with $u{=}\pm\umax$ will be always referred to as ``bang'' ones.

The reviewed results have several serious limitations. First, they do not allow to distinguish the globally time-optimal solution from the trap or saddle point. Second, they do not provide detailed \emph{a priori} knowledge of the characteristic structural features of these stationary points (e.g{.} the expected type, number of switchings etc.) which is necessary to determine the topology of the landscape $J[u]$. These tasks require higher-order analysis, which is the subject of the next section.

\section{Detailed characterization of the stationary points \label{@SEC:beyond_1st_order}}
In this section we will extensively use the geometrical arguments in our reasoning. To make the presentation more visual, it is useful to expand the states and observables in the basis of Pauli matrices and identity matrix $\hat I$: $\rho{=}\frac12\hat I{+}\sum_{i{=}x,y,z}r_i\hat\sigma_i$, $\hat O{=}\frac12\Tr[\hat O]\hat I{+}\sum_{i{=}x,y,z}o_i\hat\sigma_i$. The dynamics induced by eq.~\eqref{01.-evolution_law} corresponds to the rotation of the 3-dimensional Bloch vector $\vec r=\{r_x,r_y,r_z\}$ around the axis $\vec n_u{\propto}\{1,0,u\}$ (note that the angle between $\vec n_{{\pm}\umax}$ and $\vec n_{0}$ is equal to $\alpha$, see e.g{.} Fig.~\ref{@FIG.02}), and the optimization goal \eqref{01.-performance_index} is equivalent to the requirement to arrange the state vector $\vec r$ in parallel to $\vec o$. In what follows we will often refer to the quantum states $\rho$ as the endpoints $r$ of vectors $\vec r$. Hereafter we will also assume that both $r$ and $o$ are renormalized (scaled) such that $|r|{=}|o|{=}1$.

We  start by taking a closer look at type II extremals and their singular arc(s) where $\tilde u(\tau){=}0$. According to criterion \eqref{02.-singular_extremal_criterion_2}, these arcs are always located in the equatorial plane $x{=}0$. The following proposition indicates that such arcs may represent the time-optimal solution at any values of $\umax$ (see Appendix~\ref{@APP:1} for proof):
\begin{proposition}\label{*proposition:equatorial_always_type_II}
The shortest type II singular trajectory connecting any two ``equatorial'' points $\vec r^-{=}\{0,r^-_y,r^-_z\}$ and $\vec r^+{=}\{0,r^+_y,r^+_z\}$ (see Fig.~\ref{@FIG.02}) will represent the (globally) time-optimal solution if $r^-_y r^+_y{>}0$, $(r^+_z{-}r^-_z)r^-_y{>}0$ and the saddle point otherwise.
\end{proposition}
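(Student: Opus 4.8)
The plan is to argue entirely in the Bloch picture introduced above, where the singular control $\tilde u(\tau){\equiv}0$ generates a uniform rotation about the $x$-axis and hence, by criterion~\eqref{02.-singular_extremal_criterion_2}, confines the motion to the equatorial great circle $x{=}0$. On this circle the pure-drift (singular) motion advances at unit angular rate in a single fixed sense, so the ``shortest type~II singular trajectory'' from $\vec r^{-}$ to $\vec r^{+}$ is simply the arc swept by this drift until $\vec r^{+}$ is reached. The only competing admissible pieces are the bang arcs $u{=}{\pm}\umax$, which rotate about the tilted axes $\vec n_{\pm\umax}{\propto}\{1,0,{\pm}\umax\}$ at the strictly larger rate $\sqrt{\umax^2{+}1}$ but necessarily leave the equator. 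Proving the proposition thus reduces to deciding whether any such excursion can reach $\vec r^{+}$ from $\vec r^{-}$ in strictly less time (failure of optimality) or in exactly equal time (the saddle alternative).

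The core is the \emph{sliding-variation} test. I would replace a short sub-arc of the singular trajectory at an interior equatorial point $\vec r{=}\{0,r_y,r_z\}$ by a ``$\pm\umax$ tent'': a brief $u{=}{+}\umax$ pulse followed by a compensating $u{=}{-}\umax$ pulse (or the reverse), tuned so the Bloch vector returns exactly to the equator. A first-order computation shows that, because $\vec n_{+\umax}{+}\vec n_{-\umax}$ points along the $x$-axis and $\sqrt{\umax^2{+}1}\cos\alpha{=}1$, a symmetric tent advances along the equator by precisely the drift amount, so the genuine effect of the variation is of higher order. Carrying the expansion one step further, the net along-equator gain or loss relative to the drift over the same elapsed time is governed by the iterated brackets of the drift generator (rotation about $x$) and the control generator (rotation about $z$); since the bang axes are tilted toward $\pm z$, the sign of this higher-order shift is fixed by $\Sign(r_y)$ at the junction together with the travel direction. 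The faster bang rotation can net a time saving only when this sign matches the required sense of motion along the circle, and forfeits time otherwise.

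It then remains to convert this local sign balance into the stated inequalities. The condition $r^-_y r^+_y{>}0$ says the minor equatorial arc never crosses the meridian $y{=}0$, so $\Sign(r_y)$ is constant along it and the tent test has a definite sign throughout; at a crossing $r_y{=}0$ the tilt effect $\propto r_y$ degenerates, which is exactly the conjugate-point situation that destroys optimality. The condition $(r^+_z{-}r^-_z)r^-_y{>}0$ fixes the sign of $\dot r_z$ relative to $\Sign(r_y)$, i.e.~it selects the drift's natural orientation and the genuinely short arc rather than the long way round. When both hold, every admissible tent---and by concatenation every nearby bang--singular competitor---strictly costs time, so the equatorial arc is locally time-optimal; globality then follows from the low dimensionality of the reduced problem (motion on a sphere), the finite list of admissible extremal profiles (Fig.~\ref{@FIG.01}), and the quantized segment lengths of~\eqref{03.-regular_segment_length}, none of which can undercut it. When either inequality is reversed, I would instead exhibit the neutral sliding direction: the tent (or the reversed-orientation arc) reaches $\vec r^{+}$ in the \emph{same} time, placing the candidate inside a connected family of equal-performance extremals of the type constructed above. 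Hence $J$ is stationary but not strictly extremal along that direction, certifying a saddle point rather than a trap.

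The step I expect to be the main obstacle is the second-order bookkeeping of the tent variation: verifying that the $+\umax$/$-\umax$ pair returns the Bloch vector exactly to the equatorial plane and then extracting the correct sign of the leading nonvanishing time difference as an explicit function of $\Sign(r_y)$ and the travel direction, given that the first-order terms cancel. Closely related is the upgrade from the local (infinitesimal sliding) statement to a genuinely global comparison, which requires ruling out that some large bang--bang detour through the opposite hemisphere beats the arc; here one must lean on the rigidity of the admissible profiles and on the length estimate $\tfrac{\pi}{2}\cos\alpha{\le}\Delta\tau{\le}\pi\cos\alpha$ to close the remaining finite case check.
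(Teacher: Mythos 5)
There is a genuine gap, and it sits exactly at the step you flagged as the main obstacle: the sign you need does not come out of the tent variation. Do the bookkeeping. At an equatorial point every admissible control gives the \emph{same} tangential speed along the equator (the Bloch velocity under $u$ is $\{-ur_y,-r_z,r_y\}$, whose projection on the equator tangent $\{0,-r_z,r_y\}$ equals $1$ for all $u$), so the first order cancels, as you say; but the second-order longitude gains of the two tent legs are equal and opposite and cancel as well, and the leading surviving term is \emph{third} order and sign-definite. Indeed, evaluating the paper's own needle-variation relation \eqref{app.3-variation_timings} at the tent apex, where $r_{i,x}\simeq -u_1 r_{i,y}\,\epsilon$, gives for the drift time minus the tent time $2u_1\epsilon^2 r_{i,x}/r_{i,y}\simeq -2u_1^2\epsilon^3<0$: the singular drift strictly beats every small tent, irrespective of $\Sign(r_y)$, $\Sign(r_z)$ and the travel direction. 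Consequently no infinitesimal sliding/tent test can see the endpoint conditions $r_y^-r_y^+{>}0$ and $(r_z^+{-}r_z^-)r_y^-{>}0$; the dichotomy in the proposition is inherently global, not local. For the same reason your saddle certificate fails: the tent is not time-neutral but strictly slower, so it does not exhibit a flat direction. Degeneracy arises only where the arc passes through $r_y{=}0$ (there $\vec n_u{\times}\vec r$ is independent of $u$), and the actual loss of optimality in the ``otherwise'' case is witnessed only by a \emph{finite} competitor, not by any variation in your class $\Omega$.

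This is precisely where the paper's proof (Appendix~A) takes a different and unavoidable route: it makes no infinitesimal variation at all, but writes closed-form \emph{finite} time differences --- $\delta T_{\idx{a}}$ between the equatorial arc and the bang-bang detour of Fig.~\ref{@FIG.02}a in the case $r_y^-r_y^+{>}0$, and $\delta T_{\idx{b}}$ between the equatorial arc and the bang-singular path of Fig.~\ref{@FIG.02}b in the case $r_y^-r_y^+{<}0$ --- and fixes their signs by showing each vanishes at a degenerate endpoint configuration ($\delta_z{=}0$, respectively $r_z^+{=}1$) and is monotone in the endpoint coordinate. Your concluding appeal to ``low dimensionality, the finite list of admissible profiles and the quantized segment lengths'' does not substitute for these comparisons: nothing in your argument rules out the bang-singular competitor of Fig.~\ref{@FIG.02}b when the arc crosses $y{=}0$, and that competitor is exactly what demotes the extremal to a saddle. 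To repair the proposal you would have to replace the local sign test by such explicit global comparisons (or an equivalent synthesis-type argument), at which point you have essentially reconstructed the paper's proof.
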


\begin{figure}[t!]
\textbf{\begin{tabular}{cc}
\begin{minipage}{0.5\columnwidth}\includegraphics[width=1\columnwidth]{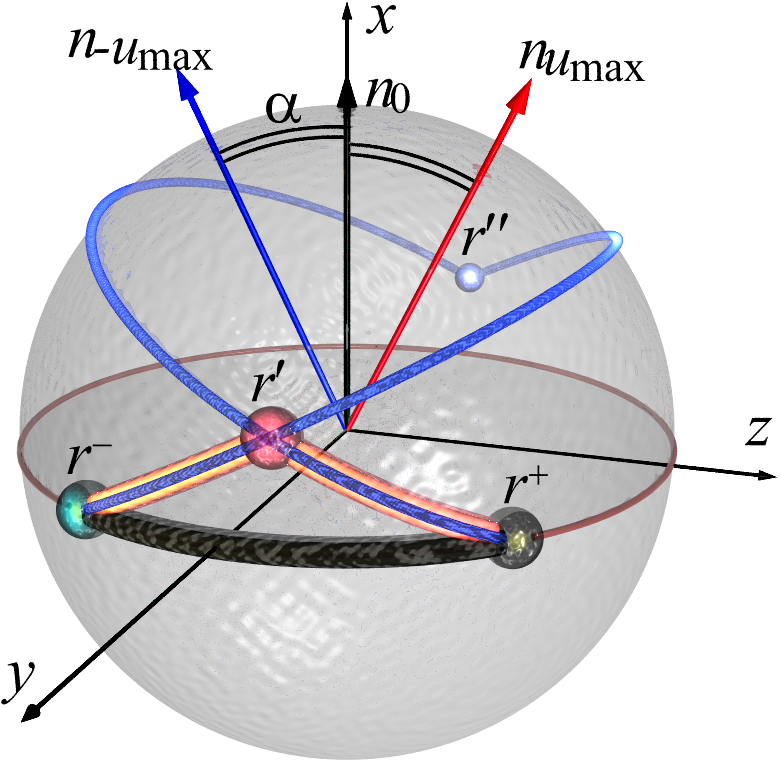}\\
a)\end{minipage} & \begin{minipage}{0.5\columnwidth}\includegraphics[width=1\columnwidth]{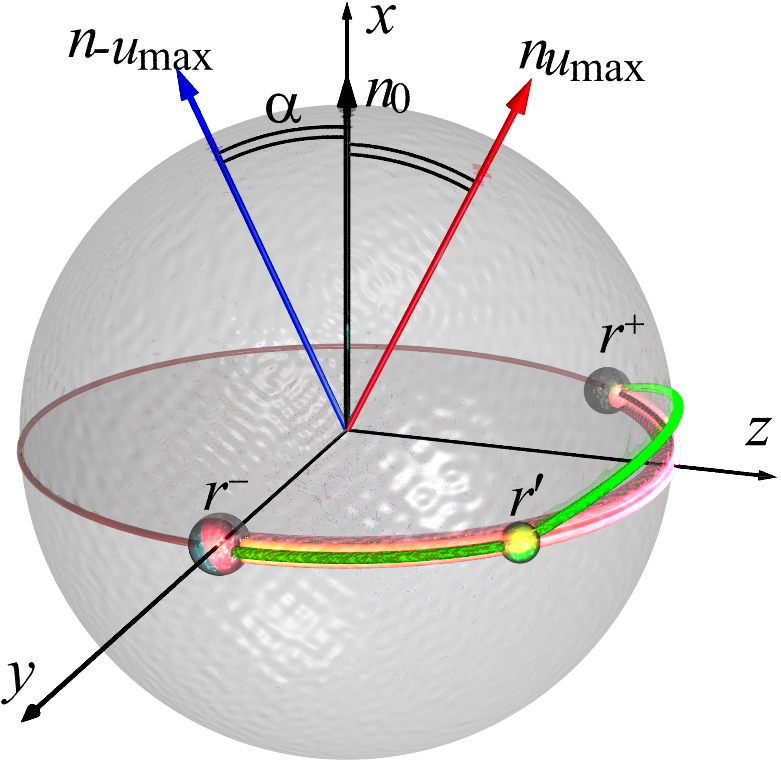}\\
b)\end{minipage}
\end{tabular}}
\caption{(a) The case $r^-_yr^+_y{>}0$: Equatorial singular arc $r^-{\to}r^+$ (thick black line) is more time-effective than bang-bang extremal $r^-{\to}r'{\to}r^+$ (thick orange line). The extremal $r^-{\to}r''{\to}r^+$ (thin blue curve) represents local extremum (trap). (b) The case $r^-_yr^+_y{<}0$: Equatorial singular arc $r^-{\to}r^+$ (thick orange line) is the suboptimal relative to the bang-singular extremal $r^-{\to}r'{\to}r^+$ (thin black line). \label{@FIG.02} }
\end{figure}

Since all $^s$II extremals can be reduced to the effective 3-section anzatz (see the end of the previous section) Proposition~\ref{*proposition:equatorial_always_type_II} has the evident corollary:
\begin{proposition}\label{*proposition:singular_sections_on_one_xz_side}
All singular arcs of the locally optimal type II extremals are located in the same semi-space $y{>}0$ or $y{<}0$, and their total duration can not exceed $\pi/2$.
\end{proposition}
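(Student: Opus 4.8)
The plan is to collapse the statement into a single planar picture in the equatorial plane $x=0$ and then read off both assertions from the dichotomy of Proposition~\ref{*proposition:equatorial_always_type_II}. The first step is to invoke the reduction established at the end of Sec.~\ref{@SEC:preliminaries}: in any type II extremal every bang arc adjacent to a singular arc is a full loop of length $\pi\cos\alpha$ with $U_{\cdot,\cdot}{=}{-}\hat I$, hence it returns the state $\tilde\rho$ to itself. Deleting these trivial loops, all singular arcs concatenate into one continuous arc lying in the plane $x{=}0$ and traversed in a single rotational sense fixed by the drift $\hat\sigma_x$: on a singular segment $u{=}0$, so $U_{\tau'',\tau'}{=}\exp({-}i(\tau''{-}\tau')\hat\sigma_x)$ is a rotation of the Bloch vector about the $+x$ axis at a rate common to all such segments. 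It therefore suffices to analyze this one effective singular arc, whose endpoints $\vec r^-,\vec r^+$ are inherited from the extremal, and whose length equals the total singular duration.

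Next I would apply Proposition~\ref{*proposition:equatorial_always_type_II} to this effective arc, which by the loop-removal freedom can be taken to be the \emph{shortest} equatorial connection of its endpoints. If the extremal is locally optimal it cannot realize the saddle alternative of the proposition, so the two inequalities $r^-_y r^+_y{>}0$ and $(r^+_z{-}r^-_z)r^-_y{>}0$ must hold. The first places both endpoints in one and the same semi-space $y{>}0$ or $y{<}0$. Because the rotational sense is fixed, and because the second inequality fixes the direction of travel ($z$ increasing when $y{>}0$, decreasing when $y{<}0$), the shortest arc realizing this motion advances monotonically and never reaches $y{=}0$. Hence the whole effective arc — and therefore every singular sub-arc of the original, un-reduced extremal, each being one of its consecutive pieces — lies in that single semi-space, which is the first assertion.

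For the duration bound I would translate confinement to a semi-space into an angular statement. Parametrizing the equatorial circle as $(y,z){=}(\cos\phi,\sin\phi)$, the semi-space $y{>}0$ is the arc $\phi{\in}({-}\tfrac{\pi}{2},\tfrac{\pi}{2})$, so a monotone forward arc confined to it spans a Bloch angle strictly below $\pi$, the limiting span $\pi$ being a semicircle. Since $U{=}\exp({-}i\tau\hat\sigma_x)$ rotates the Bloch vector by angle $2\tau$, a semicircle is attained at $\tau{=}\pi/2$; the total singular duration, equal to the length of the effective arc, thus cannot exceed $\pi/2$. This gives the second assertion.

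The main obstacle, in my view, is the bookkeeping of the first step: justifying rigorously that the bang arcs flanking every singular arc are \emph{exactly} the trivial loops $U{=}{-}\hat I$, so that the surviving singular pieces glue into a single monotone arc with well-defined endpoints $\vec r^\pm$ to which Proposition~\ref{*proposition:equatorial_always_type_II} legitimately applies. Once that identification is in place, everything downstream — the same-semi-space conclusion and the $\pi/2$ bound — reduces to the elementary planar geometry of a monotone arc on the unit circle together with the elementary angle-to-time conversion $\vartheta{=}2\tau$.
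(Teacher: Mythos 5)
Your proposal is correct and follows essentially the same route as the paper, which presents this proposition as an ``evident corollary'' of Proposition~\ref{*proposition:equatorial_always_type_II} after the reduction, established at the end of Sec.~\ref{@SEC:preliminaries}, of any type II extremal to the merged three-segment anzatz via removal of the interior $\pi\cos\alpha$ bang loops with $U{=}{-}\hat I$. Your write-up merely makes explicit what the paper leaves implicit: the exclusion of the saddle alternative forces $r^-_yr^+_y{>}0$ and $(r^+_z{-}r^-_z)r^-_y{>}0$ for the effective singular arc, confining it to one semi-space (a Bloch-angle span below $\pi$), and the conversion $\vartheta{=}2\tau$ then yields the $\pi/2$ duration bound.
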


For further analysis we need the following generic necessary condition of the time optimality:
\begin{proposition}\label{*proposition:s*r_x*r_y<0}
If the type I extremal $\{\tilde u(\tau),\tilde r(\tau)\}$ is locally time-optimal then each of its corner points $\tilde r_i$ satisfies the inequality:
\begin{gather}\label{04.-proposition:s*r_x*r_y<0}
\tilde u_i^- \tilde r_{i,x} \tilde r_{i,y}{\geq}0.
\end{gather}
\end{proposition}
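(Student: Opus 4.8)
The plan is to translate the first-order corner conditions into the Bloch picture and then reduce the claim to a single algebraic identity in which the unknown co-state drops out. Writing $\tilde\rho(\tilde\tau_i){=}\tfrac12\hat I{+}\tilde{\vec r}_i\cdot\vec{\hat\sigma}$ and $\tilde{\hat O}(\tilde\tau_i){=}\tfrac12\Tr[\hat O]\hat I{+}\tilde{\vec o}_i\cdot\vec{\hat\sigma}$, a direct expansion of the commutator gives ${-}i[\tilde\rho,\tilde{\hat O}]{=}2(\tilde{\vec r}_i\times\tilde{\vec o}_i)\cdot\vec{\hat\sigma}$. Comparing with the corner condition \eqref{02.-[rho,O]} then supplies the three scalar relations $2(\tilde{\vec r}_i\times\tilde{\vec o}_i){=}\{c_{i,1},c_{i,2},0\}$; in particular the vanishing switching function \eqref{02.-singular_arc_sigma_z} reads $(\tilde{\vec r}_i\times\tilde{\vec o}_i)_z{=}\tilde r_{i,x}\tilde o_{i,y}{-}\tilde r_{i,y}\tilde o_{i,x}{=}0$, i.e. the $xy$-projections of $\tilde{\vec r}_i$ and $\tilde{\vec o}_i$ are collinear.

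First I would fix the two signs entering \eqref{04.-proposition:s*r_x*r_y<0}. The control on the arc to the left of the $i$-th corner equals the one on the arc to the right of the $(i{-}1)$-th corner, so by \eqref{03.-PMP_at_1st_regular_segment} together with the relation $c_{i,2}{=}{-}c_{i{-}1,2}$ of \eqref{03.-c1,c2-relations} it is $\tilde u_i^-{=}{-}\Sign(c_{i,2})\umax$ (equivalently: the switching function has slope $\propto c_{i,2}$ at the corner and changes sign, so the bang value flips). Second, a locally time-optimal type I extremal is non-kinematic, so by \eqref{02.-K(t)=0} and \eqref{03.-PF=const<=0} its Pontryagin constant is strictly positive, $\tilde K{>}0$, which forces $c_{i,1}{=}\tilde K/2{>}0$. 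This is the single place where local time-optimality is used: it guarantees the active time constraint and hence the definite sign of $c_{i,1}$.

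The crux is then the identity obtained by inserting the collinearity $\tilde r_{i,y}\tilde o_{i,x}{=}\tilde r_{i,x}\tilde o_{i,y}$ into $c_{i,2}\tilde r_{i,x}\tilde r_{i,y}$:
\begin{gather}
c_{i,2}\tilde r_{i,x}\tilde r_{i,y}{=}2(\tilde r_{i,z}\tilde o_{i,x}{-}\tilde r_{i,x}\tilde o_{i,z})\tilde r_{i,x}\tilde r_{i,y}{=}{-}2\tilde r_{i,x}^2(\tilde r_{i,y}\tilde o_{i,z}{-}\tilde r_{i,z}\tilde o_{i,y}){=}{-}c_{i,1}\tilde r_{i,x}^2 ,
\end{gather}
where the last step uses $2(\tilde{\vec r}_i\times\tilde{\vec o}_i)_x{=}c_{i,1}$. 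Since $c_{i,1}{>}0$ this yields $c_{i,2}\tilde r_{i,x}\tilde r_{i,y}{\leq}0$, whence $\tilde u_i^-\tilde r_{i,x}\tilde r_{i,y}{=}{-}\Sign(c_{i,2})\umax\,\tilde r_{i,x}\tilde r_{i,y}{\geq}0$; when $c_{i,2}{=}0$ the same identity forces $\tilde r_{i,x}{=}0$ and the product vanishes, so \eqref{04.-proposition:s*r_x*r_y<0} holds in every case. The only genuine content beyond bookkeeping is this cancellation of the co-state $\tilde{\vec o}_i$: the collinearity condition is precisely what recombines the $\tilde o_{i,x}$ and $\tilde o_{i,z}$ terms into $(\tilde{\vec r}_i\times\tilde{\vec o}_i)_x$, whose sign is pinned by $\tilde K{>}0$. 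I expect the sign bookkeeping (the flip $\tilde u_i^-{=}{-}\tilde u_i^+$ and the positivity of $c_{i,1}$) to be the most error-prone part, and I would cross-check it against the slope $\pder{}{\tau}K{\propto}c_{i,2}$ of the switching function at the corner — noting in passing that for fixed-$T$ extremals, where $c_{i,1}{\leq}0$ is admissible, the same identity reverses the inequality, which is exactly why the time-optimality hypothesis is essential.
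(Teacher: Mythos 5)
Your proposal is a genuinely different argument from the paper's. The paper proves this proposition (Appendix~\ref{@APP:3}) with an explicit second-order McShane (needle) variation: it picks points $r_i^\pm$ on the two bang arcs straddling the corner with equal $x$-coordinates, replaces the corner passage by a free-evolution ($u{=}0$) segment connecting them, and Taylor-expands the time budget to find $\delta\tau_i^0{-}\delta\tau_i^+{-}\delta\tau_i^-{=}2\tilde u_i^-(\delta\tau_i^-)^2\tilde r_{i,x}/\tilde r_{i,y}$, whose nonnegativity is forced by local time-optimality. You instead work entirely at the level of the PMP multipliers: your commutator-to-cross-product translation is right ($-i[\rho,\hat O]{=}2(\vec r\times\vec o)\cdot\vec{\hat\sigma}$), the collinearity from the vanishing switching function is right, the identity $c_{i,2}\tilde r_{i,x}\tilde r_{i,y}{=}{-}c_{i,1}\tilde r_{i,x}^2$ checks out, and the sign bookkeeping $\tilde u_i^-{=}{-}\Sign(c_{i,2})\umax$ and $c_{i,1}{=}\tilde K/2$ are both consistent with \eqref{03.-PMP_at_1st_regular_segment} and \eqref{02.-PF}. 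Your route is shorter and makes transparent \emph{why} the time-optimality hypothesis enters (through the sign of the Pontryagin constant), and it cleanly explains the remark after \eqref{02.-[rho,O]} that fixed-$T$ extremals with $c_{i,1}{\leq}0$ evade the conclusion. What the paper's variational route buys instead is independence from the multiplier structure, plus the quantitative timing relations \eqref{app.3-variation_timings}, which are reused verbatim in the proofs of Propositions~\ref{*proposition:type_II_no_interior_bangs} and \ref{*proposition-0^II_trajectories_are_locally_time-optimal}; your proof yields only the sign statement.

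The one genuine gap is your step ``locally time-optimal type I $\Rightarrow$ non-kinematic $\Rightarrow$ $\tilde K{>}0$.'' The cited relations \eqref{02.-K(t)=0}, \eqref{03.-PF=const<=0} give only $\tilde K{\geq}0$, and the premise itself is false in this paper's taxonomy: under the terminological convention of Sec.~\ref{@SEC:preliminaries} (``singular'' is reserved for $u{=}0$ arcs), the everywhere-PMP-singular kinematic solutions whose segments all have $u{=}\pm\umax$ count as type I, and the paper explicitly lists ``locally time-optimal kinematic optimal solutions'' among the stationary points (Fig.~\ref{@FIG.01} caption) --- indeed the globally time-optimal solutions reaching the kinematic maximum are of exactly this kind when lifted with $\hat O(T){=}\hat O$, so the proposition is later applied to cases your premise excludes. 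The abnormal case $\tilde K{=}0$ with $[\tilde\rho,\tilde{\hat O}]{\not\equiv}0$ is actually self-repairing in your scheme (your identity then forces $\tilde r_{i,x}\tilde r_{i,y}{=}0$, so \eqref{04.-proposition:s*r_x*r_y<0} holds with equality), and you should say so; but when the commutator vanishes identically, all $c_{i,1},c_{i,2}$ are zero, the switching function carries no information, and your argument is vacuous --- whereas the paper's needle variation, which never invokes the co-state, covers this case without modification. To close the gap you would need either a separate direct-variation argument for the degenerate lift, or a justification (e.g., via the paper's continuity argument $T{\to}T_{\mathrm{opt}}$ from below, with the adjoint taken normal to the reachable set rather than fixed to $\hat O$) that every locally time-optimal type I extremal admits a nontrivial lift with a non-identically-vanishing commutator.
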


Qualitatively, Proposition~\ref{*proposition:s*r_x*r_y<0} states that the projections of optimal trajectories on the $xz$-plane are always "V"-shaped at the corner points $\tilde r_i$ with $\tilde r_{i,x}{>}0$ and "$\Lambda$"-shaped otherwise (here we assume that the $x$-axis is oriented vertically, like in Fig.~\ref{@FIG.02}).

Proposition~\eqref{*proposition:s*r_x*r_y<0} allows to substantially narrow down the range of the type II candidate trajectories:
\begin{proposition}\label{*proposition:type_II_no_interior_bangs}
Any type $^s$II extremal with $s{>}0$ containing the interior bang arc of duration $\Delta\tau{>}\pi\sec\alpha$ is the saddle point for the time-optimal control.
\end{proposition}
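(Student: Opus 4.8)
The plan is to read the statement off the equatorial geometry of the flanking singular arcs combined with the way the bang rotation meets the plane $x{=}0$, and then to produce an infinitesimal performance-preserving variation. First I would fix the picture. By Proposition~\ref{*proposition:singular_sections_on_one_xz_side} the singular arcs bordering the interior bang arc lie in the equatorial plane $x{=}0$, so the arc joins two equatorial points $\vec r^{-}$ and $\vec r^{+}$ while the Bloch vector precesses about the tilted axis $\vec n_{\pm\umax}{\propto}\{1,0,\umax\}$ at Bloch rate $2\sec\alpha$. Writing $\theta$ for the swept Bloch angle, the bang duration satisfies $\Delta\tau{=}\theta\cos\alpha/2$, so that a full revolution ($\theta{=}2\pi$, for which $U{=}{-}\hat I$ and $\rho$ is unchanged) costs duration $\pi\cos\alpha$, and once $\theta{>}2\pi$ the arc has re-entered the plane $x{=}0$ at interior instants.

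Next I would dispatch non-optimality, which in fact already holds above the weaker threshold $\pi\cos\alpha$. Since $\Delta\tau{>}\pi\sec\alpha{\geq}\pi\cos\alpha$ we have $\theta{>}2\pi$; excising one full revolution from the interior of the bang arc leaves the endpoint $\vec r^{+}$ and the entire subsequent evolution unchanged, because the removed factor ${-}\hat I$ is central and cancels under conjugation, while the trajectory is shortened by $\pi\cos\alpha$. The truncated control therefore reaches the same $\rho(T)$, hence the same value of $J$, strictly faster, so $\tilde u(\tau)$ cannot be the time-optimal solution.

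The heart of the statement is to certify that this failure is of \emph{saddle} (neutral) rather than \emph{trap} (strict-maximum) type, and this is precisely where the sharper bound $\pi\sec\alpha$, rather than the $\pi\cos\alpha$ used above, becomes essential. My plan is to exhibit a one-parameter family of admissible competitors of equal duration along which $J$ stays constant, using a sliding variation anchored at an interior crossing $R$ of the bang arc with the plane $x{=}0$. There the bang continuation can be locally exchanged for an equatorial (singular) continuation, or for an opposite-sign bang continuation made available by the sign change present when $s{>}0$, and Proposition~\ref{*proposition:equatorial_always_type_II} guarantees that the equatorial alternative is itself an equal-length extremal. Sliding the junction between the two alternatives sweeps out a connected family of controls with the same endpoints and the same duration, hence with $\delta J{=}0$; the existence of this neutral direction makes the second variation degenerate, so the extremal is a saddle.

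The principal obstacle is to show that this neutral exchange is simultaneously endpoint-preserving, duration-preserving, and admissible ($|u|{\leq}\umax$) exactly when $\Delta\tau{>}\pi\sec\alpha$, whereas for shorter (but still $>\pi\cos\alpha$) arcs no such infinitesimal neutral family exists and the point could instead be a trap. Quantitatively this reduces to a velocity-matching (Goh-type) condition between the bang and equatorial directions at $R$, evaluated against the angle at which the bang circle meets the plane $x{=}0$. I expect the accompanying sign bookkeeping — governed by the common-semispace property of Proposition~\ref{*proposition:singular_sections_on_one_xz_side} and the corner inequality of Proposition~\ref{*proposition:s*r_x*r_y<0} — to be the delicate step, since these are what force the exchanged arc to remain an extremal of the same length rather than a strictly shorter or inadmissible competitor.
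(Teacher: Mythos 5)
Your excision step is sound as far as it goes, but, as you yourself note, it only rules out \emph{global} optimality; the whole content of the proposition is local, and your plan for the local part has two genuine gaps. First, the neutral exchange you propose is unsupported and, as stated, incorrect: Proposition~\ref{*proposition:equatorial_always_type_II} does not say that an equatorial singular arc and a bang alternative between the same two equatorial points have \emph{equal} duration — it says one is strictly more time-efficient than the other (which one depends on the sign conditions). So ``sliding the junction'' between a bang continuation and a singular continuation does not preserve $T$, and the ``velocity-matching (Goh-type) condition'' you defer to is precisely the step that would have to carry the proof and is never supplied. Second, even a genuine one-parameter family with $\delta J{=}0$ and $\delta T{=}0$ would not establish the saddle claim: within this very paper the $^0$II families ${\cal F}^{[\mathrm k]}$ carry exactly such neutral deformations (the inserted full-revolution loops of length $\pi\cos\alpha$ can be translated continuously at fixed $J$ and $T$), yet by Proposition~\ref{*proposition-0^II_trajectories_are_locally_time-optimal} they can be locally optimal — these are the ``perfect loop'' \emph{traps}. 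A degenerate second variation is therefore not what separates saddle from trap here; what must be proved is failure of local optimality.

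The paper's proof does this by a mechanism absent from your proposal. An interior bang segment of a type II extremal is quantized at length $m\pi\cos\alpha$ and implements $U{=}{-}\hat I$ per revolution, hence is a closed loop on the Bloch sphere with $\tilde r_i{=}\tilde r_{i{+}1}$; translating this loop along the trajectory produces an exact continuous family $u(\kappa,\tau)$ with both $J$ and $T$ unchanged, so local optimality of $\tilde u$ would force local optimality of every member. Because $s{>}0$ there exists a bang arc of the \emph{opposite} sign, and the loop can be parked at an interior point of it with $\tilde r_x{\neq}0$; there the deformed trajectory is simultaneously $\Lambda$- and V-shaped at the junction, so one of the two resulting corners violates the inequality $\tilde u_i^-\tilde r_{i,x}\tilde r_{i,y}{\geq}0$ of Proposition~\ref{*proposition:s*r_x*r_y<0}, and the McShane needle variation of Appendix~\ref{@APP:3} then yields a strictly faster nearby control — contradiction. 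This is exactly where the hypothesis $s{>}0$ enters (for $s{=}0$ no opposite-sign arc exists, translation never manufactures a forbidden corner, and one gets traps instead), a dichotomy your sketch gestures at via ``opposite-sign bang continuation'' but never operationalizes. Your opening geometry (equatorial flanking arcs, re-entry into the plane $x{=}0$) is consistent with the paper, but the proof does not live there, and nothing in the paper's argument hinges on the threshold playing the delicate admissibility role you assign to it.
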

In other words, all the type $^s$II$|_{s{>}0}$ locally time-optimal solutions reduce to the 3-piece anzatz shown in Fig.~\ref{@FIG.01}b where two regular arcs of duration $\tilde\Delta\tau_0,\tilde\Delta\tau_2{<}\pi\sec\alpha$
``wrap'' the singular section where $u{=}0$. Accordingly, the  number of control switchings is bounded by $\ncp_{\mathrm{II}}{\leq}2$.

The properties of the $^0$II type extremals are richer:
\begin{proposition}\label{*proposition-0^II_trajectories_are_locally_time-optimal}
Suppose that the $^0$II type extremal $\tilde u(\tau)$ is the member of family ${\cal F}^{[\mathrm k]}(\tilde u^{\idx{anz}}(\tau))$, and its anzatz $\tilde u^{\idx{anz}}(\tau)$ has nonzero durations $\tilde\Delta\tau_0$ and $\tilde\Delta\tau_2$ of opening and closing bang segments. Then $\tilde u(\tau)$ is locally optimal iif $\tilde u^{\idx{anz}}(\tau)$ is locally optimal
\end{proposition}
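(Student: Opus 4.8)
The plan rests on the one structural feature separating a member $\tilde u(\tau)$ from its anzatz $\tilde u^{\idx{anz}}(\tau)$: each inserted segment is a bang of duration $\pi\cos\alpha$, hence --- as already used in Sec.~\ref{@SEC:preliminaries} --- a full $2\pi$ Bloch rotation with propagator $U{=}{-}\hat I$. Since the state, its first-order variation, and the costate all evolve by the same conjugation $X{\mapsto}UXU^{\dagger}$, each such loop leaves $\tilde r(\tau)$, $\delta\tilde r(\tau)$ and $\tilde{\hat O}(\tau)$ unchanged across it; I will call this the \emph{transparency} of the loops. It follows that $\tilde u(\tau)$ and $\tilde u^{\idx{anz}}(\tau)$ carry identical data $\{\tilde r,\tilde{\hat O}\}$ at every \emph{core} point (the ends of the singular arc and of the two bang arcs $\tilde\Delta\tau_0,\tilde\Delta\tau_2$) and reach the same terminal state $\tilde r(T){=}\vec o$ in times differing only by the total loop length; the loops are pure retracings. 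As the two controls have different durations, the relevant notion throughout is local \emph{time}-optimality (reaching $\vec o$ in locally shortest time).

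First I would fix the correspondence between admissible infinitesimal variations. Variations of the \emph{core} switching instants and arc durations of $\tilde u(\tau)$ are in bijection with those of $\tilde u^{\idx{anz}}(\tau)$, and by transparency they produce identical first- and second-order increments $(\delta J,\delta T)$. This settles the implication ``$\tilde u(\tau)$ optimal $\Rightarrow$ $\tilde u^{\idx{anz}}(\tau)$ optimal'': any anzatz variation lifting an improvement carries over verbatim to a core variation of the member, so a non-optimal anzatz would render the member non-optimal as well.

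The converse is the substantive half and is where the sliding-variation technique enters. The loops supply only three extra kinds of freedom. (i)~Translating a loop along a same-sign bang arc, or through the singular arc, is a null mode with $\delta J{=}\delta T{=}0$ --- this is exactly the equal-performance connectedness of ${\cal F}^{[\mathrm k]}$ noted in Sec.~\ref{@SEC:preliminaries}. (ii)~Changing a loop's duration by $\epsilon$ shifts $T$ by $\epsilon$ and advances the state past it by the extra Bloch angle $2\epsilon\sec\alpha$ about $\vec n_{\pm\umax}$; because a $^0$II loop has the same sign as its neighbouring core bang, sliding it (via (i)) to abut that bang renders the two controls literally equal, so the variation is identical to a duration change of the core bang itself. (iii)~A needle lowering $|u|$ inside a bang loop is controlled by the switching function $\pder{}{u(\tau)}K$, which is nonvanishing there (case~1) of Sec.~\ref{@SEC:preliminaries}), so by the PMP it cannot supply a time-reducing, endpoint-preserving first-order direction. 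Hence every loop variation is either null, non-improving, or collapses onto a core variation already available to the anzatz.

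The conclusion then follows by contraposition: given any improving variation of $\tilde u(\tau)$, split it into its core part and its loop parts, drop the null slides (i) and the non-improving needles (iii), and absorb each loop-duration change (ii) into the abutting core bang. What survives is a core variation carrying the entire improving increment $(\delta J,\delta T)$, which projects onto an improving variation of $\tilde u^{\idx{anz}}(\tau)$ and contradicts its optimality; thus optimality of the anzatz forces optimality of the member, closing the ``iif''. The main obstacle is to secure this reduction at \emph{second} order: one must verify that sliding a loop onto its abutting bang keeps $\delta\tilde r(T)$ and $\delta T$ fixed to first order, and that several loops stacked against the same bang contribute no new negative mode to the second variation. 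Transparency is precisely what resolves this --- it pins every intermediate state and costate to the core trajectory, so all loop-duration directions coincide exactly with the two core-bang-duration directions and the loop subspace splits cleanly as (null)$\,\oplus\,$(core-equivalent), leaving the second variation of $\tilde u(\tau)$ equal to that of $\tilde u^{\idx{anz}}(\tau)$.
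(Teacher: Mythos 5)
Your forward direction (member optimal $\Rightarrow$ anzatz optimal, by lifting anzatz variations through the ``transparent'' $2\pi$-loops) is sound and agrees with what the paper leaves implicit. The substantive converse, however, has a genuine gap: your decomposition of an arbitrary admissible variation into (i) loop translations, (ii) loop-duration changes and (iii) interior needles does not span the variation space, and the two modes you lean on fail exactly where the danger lies. Along an inserted loop the switching function is proportional to $1{-}\cos\phi$ ($\phi$ the Bloch rotation angle along the loop), so it vanishes \emph{to second order} at the loop's junctions; needle variations placed near a junction --- in particular those that split a loop into two bang arcs separated by a stretch of free evolution --- are first-order degenerate, and your PMP argument in (iii) gives no control over them. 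These are precisely the variations the paper's proof is built to handle: it first localizes an arbitrary McShane variation to neighborhoods of points with $\tilde r_x{=}0$ using the timing estimate \eqref{app.3-variation_timings}, and then dominates the varied trajectory piecewise, via eq.~\eqref{App.1.-equatorial_variation} and the dedicated Lemma~\ref{*lemma-antiequatorial_optimality} (comparing free evolution against passage through a bang-bang junction with $r'_x{=}0$), by a member $u''$ of a family ${\cal F}^{[\mathrm k]}({u''}^{\idx{anz}})$ whose anzatz is an \emph{infinitesimal variation} of $\tilde u^{\idx{anz}}$; only then is the anzatz's local optimality invoked. Nothing in your proposal plays the role of that comparison lemma.

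Your mode (ii) is also incorrect as stated. A duration change $\epsilon$ of a loop sitting at an interior point $\tau_*$ of the singular arc shifts the endpoint by $\epsilon$ times the generator of rotation about $\vec n_{\pm\umax}$ conjugated by $U_{T,\tau_*}$, a direction that depends on $\tau_*$; these directions do \emph{not} coincide with the two core-bang-duration directions (they correspond instead to inserting an $\epsilon$-needle at the matching point of the anzatz's singular arc together with a time reparametrization). The repair you propose --- sliding the loop until it abuts a core bang --- is a \emph{finite} deformation: an improving variation found after sliding lies in a neighborhood of a different member of ${\cal F}^{[\mathrm k]}$ and contradicts neither the local optimality of the original $\tilde u$ nor vice versa; transferring local optimality along the connected equal-performance family is essentially the content of the proposition itself, so this step is circular. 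A telltale symptom is that your argument never uses the hypothesis $\tilde\Delta\tau_0,\tilde\Delta\tau_2{\neq}0$, which the paper needs in order to exclude variations near the trajectory endpoints. Consequently the claimed equality of second variations of $\tilde u$ and $\tilde u^{\idx{anz}}$ is an assertion, not a proof.
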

\noindent(for proof see Appendix~\ref{@APP:9}).

\begin{figure}[t!]
   \includegraphics[width=0.35\textwidth]{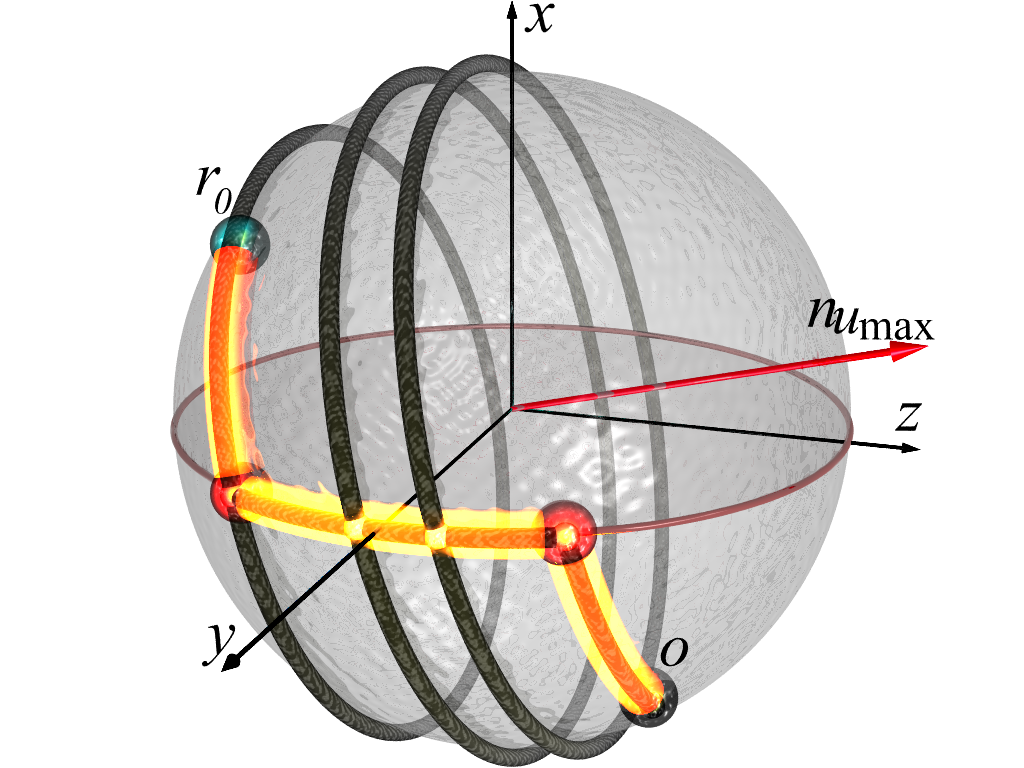}
\caption{The globally time-optimal $^0$II type trajectory $\tilde u_{\idx{anz}}(\tau)$ (thick bright yellow curve) and the locally time-optimal trapping solution (black curve) of the ${\cal F}^{[3]}(\tilde u^{\idx{anz}}(\tau))$ family connecting the points $r_0{\propto}\{1,1,{-}1\}$ and $o{\propto}\{{-}1,1,1\}$.\label{@FIG.11}}
\end{figure}

The analysis of type I extremals is somewhat more complicated. We begin by determining the loci of corner points $\tilde r_i$ on the Bloch sphere. Denote as $\theta{=}2\tilde{\Delta\tau}\sec\alpha$ the rotation angles on the Bloch sphere associated with the inner bang sections of the type I extremals. Note that it follows from \eqref{03.-regular_segment_length}, \eqref{03.-regular_segment_length_2} that $\pi{<}\theta{<}2\pi$ in the case of time-optimal control problem.

\begin{proposition}\label{*proposition:corner_points_locations}
All the corner points $\tilde r_i$ of any locally optimal type I solution $\tilde u(\tau)$ of problem \eqref{01.-performance_index}, \eqref{01.-finite_domain} are located on the circular intersections of the Bloch sphere with the two planes~$\lambda_{\pm1}$ (see Fig.~\ref{@FIG.03}):
\textnormal{\begin{gather}\label{04.-switching_points_r_i-explicit}
\tilde r_i{=}\{\Sign(\tilde u_{i}^+)\sin (\gamma_i)\sin(\frac{\xi }{2}),{-}\sin(\gamma_i)\cos (\frac{\xi}{2}),\cos(\gamma_i)\}
\end{gather}}
\noindent Here \textnormal{$\xi{=}{-}2\arctan\left(\frac{\umax}{2} \tan(\frac{\theta}{2})\cos(\alpha)\right)$}
is the dihedral angle between the planes $\lambda_{\pm1}$, and $\gamma_{i+1}{=}\gamma_1{+}i\eta$,
where \textnormal{$\eta{=}{-}2 \arctan\left(\frac{\sin(\frac{\theta}{2})}{\sqrt{\umax^2{+}\cos ^2(\frac{\theta }{2})}}\right)$}.
\end{proposition}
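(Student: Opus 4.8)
The plan is to translate the problem into the Bloch picture and follow the rigid triad $\{\vec r,\vec o,\vec w\}$ that is carried along each bang arc by one common rotation. Using $-i[\rho,\hat O]{=}2(\vec r{\times}\vec o){\cdot}\vec\sigma$, I would read \eqref{02.-[rho,O]} as the statement that the commutator vector $\vec w_i{\equiv}2\,\vec r_i{\times}\vec o_i$ at the $i$-th corner lies in the equatorial plane, $\vec w_i{=}\{c_{i,1},c_{i,2},0\}$. By \eqref{03.-c1,c2-relations} the component $c_{i,1}{=}c_1$ is fixed while $c_{i,2}$ only flips sign from corner to corner; since $\vec r$ and $\vec o$ are rotated by the \emph{same} Bloch rotation on every arc, $\vec w$ is transported rigidly with them, so $\vec w_i$ takes only the two values $\{c_1,{\pm}|c_2|,0\}$.

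The first assertion is then immediate from the identity $\vec w{\perp}\vec r$: at each corner $\vec r_i$ lies in the plane through the origin orthogonal to $\vec w_i$, and because $\vec w_i$ is equatorial this plane always contains the $z$-axis. As $\vec w_i$ assumes only two values, there are exactly two such planes $\lambda_{{\pm}1}$, meeting the Bloch sphere in two great circles that share the $z$-axis as common diameter. Measuring, on the circle $\lambda_{s_i}$ with $s_i{=}\Sign(c_{i,2}){=}\Sign(\tilde u_i^+)$, the polar angle $\gamma_i$ from $+z$ reproduces the $z$-component $\cos\gamma_i$ and the prefactor $\Sign(\tilde u_i^+)$ in \eqref{04.-switching_points_r_i-explicit}, while orthogonality $\vec r_i{\perp}\vec w_i$ fixes the in-plane components through $\tan(\xi/2){=}|c_2|/c_1$.

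To pin down the two remaining angles I would use the bang dynamics directly. Demanding that the rotation by $\theta$ about $\vec n_{{\pm}\umax}$ carry $\{c_1,{+}|c_2|,0\}$ into $\{c_1,{-}|c_2|,0\}$ is exactly the junction relation already reduced to $c_{i,2}\sqrt{\umax^2{+}1}{=}{-}c_{i,1}\tilde u_i^+\tan(\tilde{\Delta\tau_i}\sqrt{\umax^2{+}1})$ above; with $\tilde{\Delta\tau}\sqrt{\umax^2{+}1}{=}\theta/2$ this fixes $c_2/c_1{=}{-}\sin\alpha\tan(\theta/2)$, and $\tan(\xi/2){=}|c_2/c_1|$ then determines $\xi$. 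For the increment I would apply Rodrigues' formula to $\vec r_{i+1}{=}R_{\vec n_{s_i}}(\theta)\vec r_i$ and match the $z$-component to $\cos(\gamma_i{+}\eta)$; this yields $\cos\eta{=}\cos^2\!\alpha\cos\theta{+}\sin^2\!\alpha{=}(\umax^2{+}\cos\theta)/(\umax^2{+}1)$, which is the stated $\eta$, and the accompanying $x,y$-components reproduce the plane swap $s_{i+1}{=}{-}s_i$.

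The main obstacle is this last matching. One must verify that a single rotation by $\theta$ about an axis tilted by $\alpha$ acts on the alternating two-plane family as a clean uniform shift $\gamma_i{\mapsto}\gamma_i{+}\eta$ together with the plane flip, consistently in all three Cartesian components and for both signs $s_i$, and not merely for the $z$-projection; equivalently, that the composition of two successive bang rotations advances $\gamma$ by $2\eta$ while restoring the plane. Handling the bookkeeping of the alternating signs and reconciling the two in-plane parametrizations is where the care is required; once the uniform increment is established, the closed forms for $\xi$ and $\eta$ follow by routine trigonometry.
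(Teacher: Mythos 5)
Your proposal is correct, and it departs from the paper's proof at the crucial step. The paper argues in two stages: the kinematic decomposition \eqref{04.-S+-decomposition} only places the corner points on two circles lying in planes \emph{parallel} to $\vec\epsilon_z$ (i.e. $\vec r\cdot\vec n_{\pm\umax}{=}c_0$ with an undetermined offset $c_0$), and a separate variational argument --- the sliding variations with the first-order condition \eqref{04.-corner_point_first_order_optimality}, yielding $\frac{\tilde r_{i,x}}{\tilde r_{i,y}}\tilde u_i^-{=}\const$ --- is then needed to force $c_0{=}0$, i.e. $\vec\epsilon_z{\in}\lambda_{\pm1}$. You obtain that relation for free from first-order PMP algebra: with $-i[\rho,\hat O]{=}2(\vec r{\times}\vec o)\cdot\vec\sigma$, eqs.~\eqref{02.-[rho,O]}, \eqref{03.-PF=const<=0} and \eqref{03.-c1,c2-relations} give $\vec w_i{=}\{c_1,{\pm}|c_2|,0\}$, and the identity $\vec r_i{\perp}\vec r_i{\times}\vec o_i$ yields $c_1\tilde r_{i,x}{+}c_{i,2}\tilde r_{i,y}{=}0$ --- precisely the constancy relation the paper extracts from optimality, now as an algebraic consequence of the corner conditions; this also makes \emph{Remark 1} (validity for any extremal, fixed-$T$ included) automatic. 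What the paper's longer route buys is reusable machinery: the sliding variations power Proposition~\ref{*proposition:second-order_Q-criterion} via \eqref{app4.-sliding_variation-second-order-Tailor_expansion}--\eqref{app4.-Q-form} and the analysis of Sec.~\ref{@SEC:traps(T=const)}, and the decomposition \eqref{04.-S+-decomposition} makes the uniform shift $\gamma_{i+1}{=}\gamma_i{+}\eta$ with the plane swap manifest --- exactly the bookkeeping you flag as your main obstacle; your componentwise Rodrigues matching closes it, but in effect re-derives \eqref{04.-S+-decomposition}. Reassuringly, your $\cos\eta{=}(\umax^2{+}\cos\theta)/(\umax^2{+}1)$ reproduces the paper's $\eta$ exactly, and your $\tan(\frac{\xi}{2}){=}{-}\sin\alpha\tan(\frac{\theta}{2}){=}{-}\umax\cos\alpha\tan(\frac{\theta}{2})$ is the value consistent with the junction relation and with \eqref{04.-switching_points_r_i-explicit}; the factor $\frac{\umax}{2}$ printed in the Proposition appears to be a typo (as $\umax{\to}\infty$ a bang arc becomes a pure $z$-rotation, demanding $\xi{=}{-}\theta$, which your formula gives and the printed one does not). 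Do handle separately the degenerate cases $c_1{=}0$ (the two circles merge into the $xz$ great circle, $\xi{=}\pi$) and $c_{i,2}{=}0$ (a bang-singular junction, excluded for type I).
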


\begin{proposition}\label{*proposition:second-order_Q-criterion}
Denote $q_i{=}q(\gamma_i){=}\cot^2(\gamma_i){-}\cot^2(\frac{\eta }{2})$ $(i{=}1,...,n)$. The set $\{q_i\}$ associated with any locally time-optimal extremal $\tilde u(t)$ contains at most one negative entry $q'$, and $|q'|{=}\min(|\{q_i\}|)$.
\end{proposition}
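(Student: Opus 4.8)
The plan is to recast local time-optimality of a type~I extremal as the nonnegativity of a second-variation quadratic form and then to read off its inertia directly from the geometry of Proposition~\ref{*proposition:corner_points_locations}. First I would introduce the sliding variations: a shift $\delta_i$ of the $i$-th switching time $\tilde\tau_i$ glides the corner point $\tilde r_i$ along the trajectory while keeping $\tilde u{=}{\pm}\umax$, so that \emph{every} such variation is automatically admissible with respect to \eqref{01.-finite_domain} (this is precisely why sliding variations are the natural tool for the actively constrained type~I extremals, and both signs of each $\delta_i$ are available). To first order the family $\{\delta_i\}$ displaces the final state $\tilde r(T)$ and alters the elapsed time; imposing the fixed target condition selects a linear subspace of admissible shifts, and on this subspace the leading correction to the control time is a quadratic form $Q(\delta)$. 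Local time-optimality is then equivalent to the absence of a time-decreasing sliding variation, i.e.\ to $Q$ having the appropriate sign on the constraint subspace, and the whole proposition becomes a statement about the inertia of $Q$.

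Next I would evaluate $Q$ using eq.~\eqref{04.-switching_points_r_i-explicit} together with the arithmetic progression $\gamma_{i+1}{=}\gamma_1{+}i\eta$. Because all interior bang arcs subtend the \emph{same} rotation angle $\theta$ and the corner points alternate between the planes $\lambda_{\pm1}$ (cf.~\eqref{03.-c1,c2-relations}), propagating a shift at corner $i$ through the subsequent equal-angle rotations produces strong cancellations: I expect $Q$ to reduce, in suitably rescaled coordinates $\psi_i$, to the diagonal form $Q{\propto}\sum_i q_i\psi_i^2$ with coefficients exactly $q_i{=}\cot^2(\gamma_i){-}\cot^2(\tfrac{\eta}{2})$. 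The local factor $\cot^2(\gamma_i)$ records the latitude of the $i$-th corner, whereas the common offset $\cot^2(\tfrac{\eta}{2})$ is the fingerprint of the equal-angle recursion. The same symmetry makes every corner contribute identically to the net displacement, so I expect the fixed-endpoint constraint to take the symmetric equal-coefficient form $\sum_i\psi_i{=}0$; Proposition~\ref{*proposition:s*r_x*r_y<0} fixes the admissible orientation of the elementary shifts and hence of the descent directions used below.

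Finally I would analyse the inertia of $Q$ on the hyperplane $\sum_i\psi_i{=}0$. The key observation is that for any pair $j{\neq}k$ the line $\{\psi_j{=}t,\ \psi_k{=}{-}t,\ \text{all others }0\}$ lies in this hyperplane, and along it $Q{\propto}(q_j{+}q_k)t^2$. Thus a time-decreasing sliding variation exists whenever some pair obeys $q_j{+}q_k{<}0$, so local time-optimality forces $q_i{+}q_j{\geq}0$ for \emph{every} pair. An elementary check shows this pairwise condition is equivalent to the assertion of the proposition: it fails if two entries are negative (their sum is negative) or if a lone negative entry $q'$ is accompanied by a positive $q_k$ with $q_k{<}|q'|$ (again $q'{+}q_k{<}0$), while conversely, if at most one entry is negative and it has the least magnitude, then every pair sum is nonnegative. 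This delivers both ``at most one negative entry'' and ``$|q'|{=}\min(|\{q_i\}|)$'' simultaneously. I expect the main obstacle to be the second step: carrying out the second-variation computation and verifying that the propagated corner shifts recombine into the clean diagonal form $\sum_i q_i\psi_i^2$ with the equal-coefficient constraint. The delicate cancellations there rest on the equal-duration property $\tilde{\Delta\tau}_i{=}\tilde{\Delta\tau}$ and on the plane-alternation \eqref{03.-c1,c2-relations}, and it is these structural facts—rather than the final, elementary inertia count—that carry the real weight of the argument.
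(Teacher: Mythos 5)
Your proposal takes essentially the same route as the paper's proof in Appendix~\ref{@APP:4}: sliding variations whose second-order effect on the time change is the diagonal form $\sum_i q_i\,\delta\tau_i^2$ on the constraint hyperplane $\sum_i \delta\tau_i{=}0$, followed by an inertia count, where your explicit pairwise test $q_j{+}q_k{\geq}0$ (via the directions $\psi_j{=}t$, $\psi_k{=}{-}t$) is precisely the paper's ``easily rewritten'' reduction of the form $Q_{kj}{=}\delta_{kj}q_k{+}q_n$ to the stated criterion. The one step you defer---the Taylor expansion showing the diagonal coefficients are exactly $\cot^2(\gamma_i){-}\cot^2(\frac{\eta}{2})$---is carried out in the paper as eq.~\eqref{app4.-sliding_variation-second-order-Tailor_expansion}, and you correctly flagged it as the step bearing the real weight.
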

The proofs of the above two propositions are given in Appendix~\ref{@APP:4}.
\begin{figure}[t!]
   \ifpdf
      \includegraphics[width=0.45\textwidth]{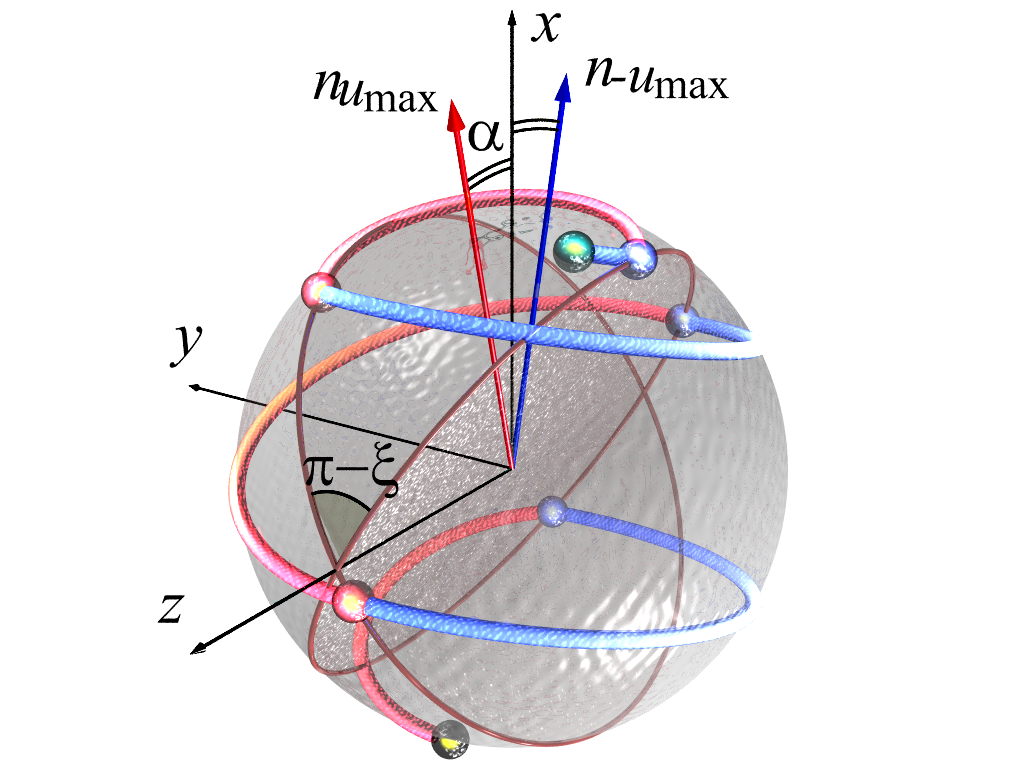}
   \else
   \fi
\caption{Illustration of the statement of Proposition~\ref{*proposition:corner_points_locations}. The thick colored curve depicts the band-bang extremal. Its red and blue segments correspond to $u{=}\max$ and $u{=}{-}\max$. All interior corner points (red and blue balls) lie on two circles (associated with switchings $\umax{\to}{-}\umax$ and ${-}\umax{\to}\umax$, correspondingly) whode planes $\lambda_{\pm1}$ intersect along the $z$-axis.
\label{@FIG.03} }
\end{figure}

To use Proposition~\ref{*proposition:second-order_Q-criterion} it is convenient to introduce the parameters $\zeta_i$ through, $\zeta_1{=}\gamma_1{+}\frac{\pi}{2}(1{-}\Sign(u_1^+))$, $\zeta_{i+1}=\zeta_1{+}i(\pi{+}\eta)$. It is evident that $q(\gamma_i){=}q(\zeta_i)$. The relation between the sign of $q_i$ and the index $i$ of the corner point can be illustrated by associating each $q_i$ with the point on the unit cycle whose position is specified by $\zeta_i$, as shown in Fig.~\ref{@FIG.05}. One can see that the maximal number $n_{\mathrm{max}}$ of sequential parameters $q_i$ having at most one negative term can not exceed $\frac{\pi{+}|\eta|}{\pi{-}|\eta|}{+}1{\leq}\frac{\pi}{\alpha}$, i.e.,

\begin{proposition}\label{*proposition:n<=pi/alpha}
Type I locally optimal extremals can have at most $\frac{\pi}{\alpha}$ switchings.
\end{proposition}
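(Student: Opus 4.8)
The plan is to convert the algebraic criterion of Proposition~\ref{*proposition:second-order_Q-criterion} into a counting problem for equally spaced points on a circle and then estimate that count in terms of $\alpha$. First I would record the reduction already prepared in the text: the auxiliary phases $\zeta_i$ form an arithmetic progression with common difference $\zeta_{i+1}-\zeta_i=\pi+\eta$, and since $\zeta_i$ and $\gamma_i$ differ by an integer multiple of $\pi$ while $\cot^2$ has period $\pi$, one has $q(\zeta_i)=q(\gamma_i)$. Thus the signs of the $q_i$ can be read off from the positions of $n$ points placed at equal angular spacing $\Delta=\pi+\eta=\pi-|\eta|$ (recall $\eta<0$) on the unit circle of Fig.~\ref{@FIG.05}. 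The next step is to locate the sign regions: $q(\zeta)<0$ is equivalent to $\cot^2\zeta<\cot^2(\eta/2)$, i.e. to $\zeta\bmod\pi\in(|\eta|/2,\pi-|\eta|/2)$, which on the full circle is the union of two arcs centered at $\pi/2$ and $3\pi/2$, each of angular width exactly $\pi-|\eta|$.

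The crucial observation is that the width $\pi-|\eta|$ of each ``negative'' arc coincides with the spacing $\Delta$ of the points, and that the two arcs are diametrically separated (their centers lie $\pi$ apart). Because an arc whose length equals the spacing captures exactly one of the equally spaced points per revolution, a full turn of $2\pi/\Delta$ points deposits precisely one point in each negative arc, i.e. two negative entries $q_i$ per revolution; consecutive negative entries are therefore separated by about $\pi/\Delta$ indices. Hence a block of consecutive indices containing at most one negative $q_i$ must exclude the two neighboring negatives and so can span at most two such gaps, giving a length below $2\pi/\Delta=\frac{2\pi}{\pi-|\eta|}=\frac{\pi+|\eta|}{\pi-|\eta|}+1$. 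Since Proposition~\ref{*proposition:second-order_Q-criterion} guarantees that every locally time-optimal type~I extremal has at most one negative $q_i$ among all its $n$ corner points, this yields $n\le\frac{\pi+|\eta|}{\pi-|\eta|}+1$.

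It remains to convert this into the $\alpha$-bound. Using the expression for $\eta$ from Proposition~\ref{*proposition:corner_points_locations}, I would write $|\eta|=2\arctan\!\big(\sin(\tfrac{\theta}{2})/\sqrt{\umax^2+\cos^2(\tfrac{\theta}{2})}\big)$ and note that the argument, as a function of $x=\cos^2(\tfrac{\theta}{2})\in[0,1]$, equals $\sqrt{(1-x)/(\umax^2+x)}$ and is monotonically decreasing, so its maximum is attained at $x=0$ and equals $1/\umax=\cot\alpha$. Consequently $|\eta|\le 2\arctan(\cot\alpha)=\pi-2\alpha$, with strict inequality throughout the time-optimal range $\pi<\theta<2\pi$. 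Substituting gives $n\le\frac{2\pi}{\pi-|\eta|}\le\frac{2\pi}{2\alpha}=\frac{\pi}{\alpha}$, as claimed. I expect the main obstacle to be making the circle-counting of the middle paragraph fully rigorous: one must handle the boundary configurations (a point falling exactly on an arc endpoint, where $q_i=0$) and justify the ``one point per arc per revolution'' claim uniformly across the two opposite regimes $|\eta|\to0$ (spacing $\approx\pi$, points alternating between the two positive arcs) and $|\eta|\to\pi$ (spacing $\to0$, long runs inside one wide positive arc). By contrast, the monotonicity estimate of the last paragraph and the periodicity reduction of the first are routine.
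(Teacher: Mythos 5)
Your proposal is correct and takes essentially the same route as the paper: the published proof is precisely the Fig.~\ref{@FIG.05} circle argument, reading the signs of the $q_i$ from the equally spaced points $\zeta_i$ (spacing $\pi{-}|\eta|$) against negative arcs of the same width to get $n\leq\frac{\pi+|\eta|}{\pi-|\eta|}{+}1{=}\frac{2\pi}{\pi-|\eta|}$, with your monotonicity estimate $|\eta|\leq\pi{-}2\alpha$ merely spelling out the step the paper leaves implicit in concluding $n\leq\pi/\alpha$. The boundary configurations you flag are glossed over in the paper as well (``One can see\ldots''), so your write-up is, if anything, a more explicit version of the same proof.
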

\noindent This helpful upper bound was first obtained by Agrachev and Gamkrelidze \cite{BOOK-Sussman/Agrachev}. As shown in Appendix~\ref{@APP:5}, we can further refine this result via more detailed inspection of the criterion $|q'|{=}\min(|\{q_i\}|)$:
\begin{proposition}\label{*proposition:n<=2_for_u>sqrt(1+sqrt(2))}
\textnormal{
$\ncp_{\mathrm{I,max}}{\leq}2$} if \textnormal{$\umax{>}\sqrt{1{+}\sqrt{2}}$
}
\end{proposition}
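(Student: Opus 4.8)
The plan is to show that the hypothesis $\umax>\sqrt{1+\sqrt2}$ is incompatible with the existence of three or more corner points obeying Proposition~\ref{*proposition:second-order_Q-criterion}. First I would recast that criterion geometrically. Since $\cot^2 x=\csc^2 x-1$, one has $q_i=\csc^2\gamma_i-\csc^2(\eta/2)$, so with $L:=\csc^2(\eta/2)$ and $v_i:=\csc^2\gamma_i$ the sign of $q_i$ is negative exactly when $\gamma_i$ lies in the equatorial band $(|\eta|/2,\pi-|\eta|/2)$ (of length $\pi-|\eta|$) and positive when $\gamma_i$ lies in the two polar caps $(0,|\eta|/2)\cup(\pi-|\eta|/2,\pi)$ (of combined length $|\eta|$). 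On the circle of circumference $\pi$ (the period of $\csc^2$) the corner points form an arithmetic progression $\gamma_{i+1}=\gamma_i+\eta$ of step $t:=|\eta|$. A preparatory estimate is $t<\pi-2\alpha$, obtained by maximizing the given expression for $|\eta|$ over the admissible range $\theta\in(\pi,2\pi)$: the argument of the $\arctan$ is increasing in $\sin(\theta/2)$, so the supremum is approached as $\theta\to\pi$, giving $2\arctan(1/\umax)=\pi-2\alpha$. Because $\umax>\sqrt{1+\sqrt2}$ forces $\alpha>\pi/4$, this yields $t<\pi-2\alpha<\pi/2$, which I will use to forbid wrap-around of the progression.

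Next I would exclude $\ncp\geq4$ outright. If four consecutive corner points had at most one negative $q_i$, then at least three of $\{\gamma_1,\gamma_1+t,\gamma_1+2t,\gamma_1+3t\}$ would be positive; but any three of these span at least $2t$, which for $t<\pi/2$ exceeds the polar length $t$ and does not wrap, so at most two of the four can lie in the polar region. Hence any four consecutive points contain at least two negative entries, contradicting the ``at most one negative entry'' clause of Proposition~\ref{*proposition:second-order_Q-criterion}. This leaves only $\ncp=3$, which carries the real content.

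For $\ncp=3$ I would first show the configuration is forced, then extract the threshold. Exactly one of $q_1,q_2,q_3$ is negative (three points cannot all fit in the polar region, and at most one may be negative). If the negative one were an endpoint, the other two would be adjacent, hence separated by exactly $t$ and thus pinned to the two boundaries of the polar band, where $q=0$; the clause $|q'|=\min|q_i|$ would then force $|q'|\leq0$, a contradiction. Therefore the middle point is the negative one and the endpoints $\gamma_1,\gamma_3$ straddle the pole. Using the reflection symmetry $\gamma\mapsto\pi-\gamma$ I would argue that the symmetric placement $\gamma_1=\pi/2-t,\ \gamma_2=\pi/2,\ \gamma_3=\pi/2+t$ is the configuration most favourable to feasibility: a perturbation $\gamma_2=\pi/2+\epsilon$ leaves $2L-v_2$ stationary to first order (as $v_2$ has its minimum at the equator) while strictly decreasing $\min(v_1,v_3)$, so the feasibility margin of the min-modulus inequality has a local maximum at $\epsilon=0$. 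In the symmetric configuration that inequality reads $L-1\leq\sec^2 t-L$, i.e. $\sec^2 t+1\geq2\csc^2(t/2)$; clearing denominators (legitimate since $0<t<\pi/2$) gives $\cos^3 t+3\cos^2 t+\cos t-1\leq0$, whose relevant root is $\cos t=\sqrt2-1$. Thus feasibility requires $t\geq t^{\ast}:=\arccos(\sqrt2-1)$, and one checks $\cos t^{\ast}=\sqrt2-1\iff\cos2\alpha^{\ast}=1-\sqrt2\iff\umax^2=\tan^2\alpha^{\ast}=1+\sqrt2$. Since $\umax>\sqrt{1+\sqrt2}$ gives $t<\pi-2\alpha<t^{\ast}$, the symmetric configuration is excluded, and hence — via the extremality argument — every $\ncp=3$ configuration, leaving $\ncp\leq2$.

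The main obstacle is the extremality claim: promoting ``$\epsilon=0$ is a local maximum of the feasibility margin'' to a global statement, so that the threshold read off the symmetric configuration is genuinely the smallest admissible $t$. This requires controlling $\min(v_1,v_3)-(2L-v_2)$ as the phase ranges over the whole interval on which both endpoints stay positive, together with the boundary cases where one endpoint reaches the edge of the polar band (at which point a second $q_i$ would turn negative and violate the one-negative clause). The remaining ingredients — the bound $t<\pi/2$, the exclusion of $\ncp\geq4$, and the cubic threshold computation — are routine once the geometric reformulation is in place.
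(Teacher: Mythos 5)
Your argument is, at bottom, the same mechanism as the paper's proof in Appendix~\ref{@APP:5}: both feed the neighbours of the unique negative entry into the min-modulus clause of Proposition~\ref{*proposition:second-order_Q-criterion} (i.e.\ require $q_{i'\pm1}{+}q_{i'}{\geq}0$) and extract the same threshold angle. Indeed your cubic factors as $\cos^3t{+}3\cos^2t{+}\cos t{-}1{=}(\cos t{+}1)(\cos^2t{+}2\cos t{-}1)$, so your feasibility condition $\cos t{\leq}\sqrt2{-}1$ is exactly the quadratic criterion hiding behind the paper's $\arccos\bigl(\sqrt{\sin^2(\eta/2)(\cos\eta{+}2)}\bigr)$ window for $\delta\gamma_{i'}$, evaluated at the symmetric placement $\delta\gamma_{i'}{=}0$. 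The difference is organisational: the paper shows an interior negative point is impossible and then counts corner points via Fig.~\ref{@FIG.05}, whereas you exclude $\ncp{\geq}4$ by a packing argument and reduce $\ncp{=}3$ to the symmetric configuration. What your route buys is that the threshold computation becomes a transparent one-variable extremal problem; what it costs is precisely the globalization step you flag.

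That flagged gap is real but closes in two lines, so your proof does go through. Both endpoints lie in the cap only if $|\epsilon|{<}\epsilon_0{:=}\tfrac{3t}{2}{-}\tfrac{\pi}{2}$ (so $t{\leq}\pi/3$ is vacuously excluded, and $\epsilon_0{<}t/2$ since $t{<}\pi/2$). On this window $\min(v_1,v_3){=}\sec^2(t{-}|\epsilon|)$, because $\cos^2(t{+}\epsilon){-}\cos^2(t{-}\epsilon){=}{-}\sin(2t)\sin(2\epsilon){\leq}0$ for $0{\leq}\epsilon{<}\epsilon_0$; hence $M(\epsilon){=}\sec^2(t{-}|\epsilon|){+}\sec^2\epsilon{-}2\csc^2(t/2)$ and $\tfrac{d}{d\epsilon}M{=}2\bigl[\sec^2\epsilon\tan\epsilon{-}\sec^2(t{-}\epsilon)\tan(t{-}\epsilon)\bigr]{<}0$ for $0{<}\epsilon{<}\epsilon_0$, since $\sec^2x\tan x$ increases on $(0,\pi/2)$ and $\epsilon{<}\epsilon_0{<}t/2{<}t{-}\epsilon$. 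So $M(\epsilon){\leq}M(0){<}0$ globally whenever $t{<}t^{\ast}$, with no further case analysis; exits from the cap are exactly the one-negative-clause violations you already noted. One imprecision should be repaired in the $\ncp{\geq}4$ step: for $t{\in}(\pi/3,\pi/2)$ the triple $\{\gamma_1,\gamma_1{+}t,\gamma_1{+}3t\}$ \emph{does} wrap mod $\pi$ (it clusters into an arc of length exactly $t$), so ``span at least $2t$ and does not wrap'' is false as stated. Replace it by the sharper fact your $\ncp{=}3$ analysis already uses: consecutive points sit at circular distance exactly $t$ and therefore can never both lie in the open cap of length $t$ (wrapping would require $2t{>}\pi$), and every $3$-subset of four consecutive indices contains a consecutive pair. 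The residual boundary patterns with some $q_i{=}0$ also die: e.g.\ $q_2{=}q_3{=}0$ forces $\gamma_2{=}{-}t/2$, $\gamma_3{=}t/2$ (mod $\pi$, up to reflection), whence $\gamma_1{\equiv}\pi{-}3t/2$ and $\gamma_4{=}3t/2$ lie strictly in the band and give two negative entries.
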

\noindent (The latter roughly corresponds to $\alpha{>}1$).

\begin{figure}[t]
\includegraphics[width=0.5\linewidth]{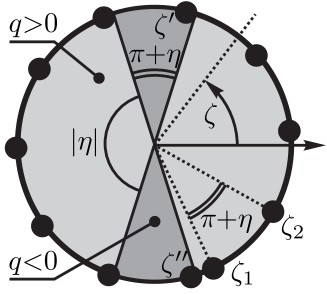}
\caption{Signs of parameters $q(\zeta)$ as function of $\zeta$. Black dots indicate the values $\zeta{=}\zeta_i$ associated with $i$-th corner point.\label{@FIG.05}}
\end{figure}

The analysis in this section so far is equally valid for both global and local extrema (traps) of optimal control. It is clear that any globally time-optimal type II solution includes at most $2$ corner points that separate the central singular section from the outside regular arcs (see Fig.~\ref{@FIG.01}b). The case of type I solutions is not as evident. The following propositions impose more stringent necessary criteria on the globally time-optimal extremals (see Appendices~\ref{@APP:6} and \ref{@APP:7} for proofs).

\begin{proposition}\label{*proposition-n_max_for_time-optimal_case}
Any corner point $\tilde r_{i'}$ such that $q(\gamma_{i'}){<}0$ must be either the first or the last corner point of the globally time optimal solution, so that the total number of switchings $n_{\mathrm{I},\max}{\leq}\frac{\pi}{2\alpha}{+}1$.
\end{proposition}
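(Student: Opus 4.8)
The plan is to upgrade the local criterion of Proposition~\ref{*proposition:second-order_Q-criterion} to the stronger global statement by a finite (``sliding'') comparison argument, and then to re-run the counting of Proposition~\ref{*proposition:n<=pi/alpha} with the extra information that the lone admissible negative entry is pinned to an endpoint. Since a globally time-optimal extremal is in particular locally time-optimal, Proposition~\ref{*proposition:second-order_Q-criterion} already guarantees that $\{q_i\}$ contains at most one negative element $q'{=}q(\gamma_{i'})$ and that $|q'|{=}\min_i|q_i|$. I would argue by contradiction: suppose the negative corner $\tilde r_{i'}$ is interior, i.e.~$2{\leq}i'{\leq}\ncp{-}1$, so that both flanking corners $\tilde r_{i'{-}1},\tilde r_{i'{+}1}$ exist and the two bang arcs adjacent to $\tilde r_{i'}$ are full interior arcs of the common rotation angle $\theta$.

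The heart of the proof is a finite sliding variation that transports the negative corner to an endpoint with a strict saving in time. Because the interior arcs are rigidly of equal angle $\theta$ while the opening and closing durations $\tilde{\Delta\tau}_0,\tilde{\Delta\tau}_{\ncp}$ are free, the whole lattice $\{\gamma_i\}$ of Proposition~\ref{*proposition:corner_points_locations} can be rigidly shifted by varying $\gamma_1$, the two free end durations being re-tuned so as to keep $\tilde r(0){=}r_0$ and the terminal alignment with $o$. I would parametrize this one-parameter family of equal-endpoint competitors by the shift of $\gamma_1$ and evaluate $dT$ along it using the explicit loci \eqref{04.-switching_points_r_i-explicit}. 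Invoking $q(\gamma){<}0\Leftrightarrow\gamma\in(|\eta|/2,\pi{-}|\eta|/2)$ together with the minimality $|q'|{=}\min_i|q_i|$, I would show that shifting the lattice so as to move the negative corner toward the nearer endpoint strictly decreases $T$ until that corner reaches the end; equivalently, a switch lying inside the equatorial band $(|\eta|/2,\pi{-}|\eta|/2)$ marks the loss of global optimality, which an interior arc cannot tolerate. The strict sign of $dT$ and the feasibility of the end-duration re-tuning (so that the shifted family genuinely still reaches $o$) are the points I expect to be most delicate, and constitute the main obstacle.

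The endpoint exemption then falls out of the same construction: at $i'{=}1$ or $i'{=}\ncp$ the single adjacent free arc can absorb the shift without a matching arc on the other side, while the fixed initial datum and the transversality condition \eqref{02.-K(T)} at the target fix the relevant sign, so the time saving vanishes and a negative corner is admissible there. Hence on a globally time-optimal type~I extremal a negative $q_{i'}$ can occur only as the first or the last corner, as claimed.

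It remains to convert this into the switching bound. Passing to the variables $\zeta_i$ with $\zeta_{i{+}1}{=}\zeta_i{+}(\pi{+}\eta)$, I would use the observation already visible in Fig.~\ref{@FIG.05} that the step $\pi{+}\eta{=}\pi{-}|\eta|$ coincides with the width of the band on which $q(\zeta){<}0$. With the lone negative $\zeta_{i'}$ now forced to an endpoint, the remaining $\ncp{-}1$ points form a single \emph{one-sided} consecutive run that must avoid the negative band; fitting points spaced by $\pi{-}|\eta|$ inside the complementary positive band of width $|\eta|$ gives $\ncp{-}1{\leq}\frac{|\eta|}{\pi{-}|\eta|}$, i.e.~$\ncp{\leq}\frac{\pi}{\pi{-}|\eta|}$. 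Maximizing over the admissible range $|\eta|{<}\pi{-}2\alpha$, whose supremum is attained as $\theta{\to}\pi^{+}$ in the formula for $\eta$ in Proposition~\ref{*proposition:corner_points_locations} (with $\alpha{=}\arctan\umax$), yields $\ncp_{\mathrm{I},\max}{\leq}\frac{\pi}{2\alpha}{+}1$, the additive unit reflecting the endpoint-inclusive count exactly as in Proposition~\ref{*proposition:n<=pi/alpha}. This halves the local bound precisely because the single tolerated negative switch no longer sits in the interior where it could anchor a two-sided run.
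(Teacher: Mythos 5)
There is a genuine gap at the heart of your argument: the mechanism by which you pin the negative corner to an endpoint cannot work. You propose a one-parameter family of equal-endpoint competitors along which $T$ \emph{strictly} decreases until the corner with $q_{i'}{<}0$ reaches the trajectory end. But a globally time-optimal extremal is in particular locally time-optimal, so by the very results you invoke, every endpoint-preserving sliding variation is first-order neutral --- eq.~\eqref{04.-corner_point_first_order_optimality} forces $\der{\delta\gamma_i}{\delta\tau_i}$ to be equal at all corners, whence $\delta T{=}0$ to first order --- and second-order non-negative, since Proposition~\ref{*proposition:second-order_Q-criterion} with $|q'|{=}\min(|\{q_i\}|)$ is exactly the condition that the form \eqref{app4.-Q-form} is positive semi-definite. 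Hence no strictly $T$-decreasing family can emanate from the extremal, and your transport argument stalls at its starting point; the condition $q_{i'}{<}0$ with $|q'|$ minimal is fully compatible with local optimality, so no amount of second-order bookkeeping will exclude an interior negative corner. Moreover, the family itself is not well defined: rigidly shifting the lattice via $\delta\gamma_1{\neq}0$ produces a net rotation ${\cal S}_{\vec n}(\sum_i\delta\gamma_i)$ of the final point, and the single scalar freedom in the closing arc duration cannot generically restore a prescribed point on the two-dimensional sphere, so your competitors do not actually reach $o$.

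What is missing is the paper's key idea, which is not variational but geometric: using the explicit loci \eqref{04.-switching_points_r_i-explicit}, $\tilde r_{i,x}{\propto}\sin(\zeta_i)$, and $q_{i'}{<}0$ together with the fixed spacing of the $\zeta_i$ implies $(\tilde r_{i',x}{-}\tilde r_{i'{\pm}1,x})/\tilde r_{i',x}{>}0$ for an \emph{interior} corner; combined with the V/$\Lambda$-shape criterion of Proposition~\ref{*proposition:s*r_x*r_y<0}, both adjacent bang arcs must then cross the plane $x{=}\tilde r_{i',x}$ twice and meet again at the mirror point $\{\tilde r_{i',x},{-}\tilde r_{i',y},\tilde r_{i',z}\}$, so the trajectory self-intersects --- which a globally time-optimal trajectory cannot do, since the loop could be excised. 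This excision argument is precisely what exploits \emph{global} (as opposed to local) optimality, the distinction your approach never operationalizes. For what it is worth, your final counting step is sound and essentially reproduces the paper's deduction from Fig.~\ref{@FIG.05}: with the lone negative entry at an end, a run of $\ncp{-}1$ positive corners spaced by $\pi{-}|\eta|$ inside a band of width $|\eta|$ gives $\ncp{\leq}\frac{\pi}{\pi{-}|\eta|}{+}1$, and $|\eta|{\leq}\pi{-}2\alpha$ yields $\ncp_{\mathrm{I},\max}{\leq}\frac{\pi}{2\alpha}{+}1$; but without a valid proof that the negative corner sits at an end, this bound is unsupported.
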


\begin{proposition}\label{*proposition-r_i-global_optimality_bounds}
The corner points $\tilde r_i$ of any globally optimal solution of type I satisfy the inequality:
\textnormal{\begin{gather}\label{*proposition-r_i-global_optimality_bounds_inequality}
\min(0,\tilde r_{0,x},\tilde r_{n{+}1,x}){<}\tilde r_{i,x}{<}\max(0,\tilde r_{0,x},\tilde r_{n{+}1,x}),
\end{gather}}
where $\tilde r_{0,x}$ and $\tilde r_{n{+}1,x}$ are the trajectory endpoints.
\end{proposition}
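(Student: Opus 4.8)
The plan is to prove the two-sided bound by contradiction, after first collapsing it to a single inequality by symmetry. Observe that the half-turn $R_y:\vec r\mapsto\{-r_x,r_y,-r_z\}$ about the $y$-axis maps each bang axis $\vec n_{\pm\umax}\propto\{1,0,\pm\umax\}$ onto itself as a line while reversing its orientation; composed with time reversal it therefore sends a $\pm\umax$ bang arc to a $\pm\umax$ bang arc and carries any globally time-optimal type I process for the problem $r_0\to r_{n+1}$ onto a globally time-optimal type I process for $R_y r_{n+1}\to R_y r_0$, permuting the corner points and flipping the sign of every $x$-component. Under this map the lower inequality in \eqref{*proposition-r_i-global_optimality_bounds_inequality} for one problem is literally the upper inequality for the reflected one, so it suffices to prove $\tilde r_{i,x}<\max(0,\tilde r_{0,x},\tilde r_{n+1,x})$ for all corner points of every globally optimal type I trajectory.

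The geometric engine is the local shape of $r_x(\tau)$ at corners. Since $\dot r_x\propto-u\,r_y$, the sign of $\dot r_x$ reverses at every switching, and Proposition~\ref{*proposition:s*r_x*r_y<0} (the ``V''/``$\Lambda$'' rule) then says that a corner with $\tilde r_{i,x}>0$ is a local minimum of $r_x(\tau)$, while a corner with $\tilde r_{i,x}<0$ is a local maximum. I would set $M:=\max(0,\tilde r_{0,x},\tilde r_{n+1,x})\geq0$, suppose the bound fails, and let $\tilde r_{i^\ast}$ be the corner of largest $x$-coordinate $\mu:=\max_i\tilde r_{i,x}\geq M$. If $\mu>0$ then $\tilde r_{i^\ast}$ is a local minimum of $r_x$, so both adjacent bang arcs rise strictly above the level $r_x=\mu\geq M$, whereas both endpoints sit at height $\leq M\leq\mu$. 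The trajectory thus makes an excursion into the cap $r_x>M$ that is not demanded by its endpoints, and the claim is that such an excursion can always be excised.

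To build the shorter competitor I would use two bookkeeping facts: on a bang arc the projection $p_\pm=\cos\alpha\,r_x\pm\sin\alpha\,r_z$ is conserved, and the time elapsed on an arc equals $\cos\alpha$ times its rotation angle, so reducing the total swept angle reduces $T$. Let $A$ and $B$ be the first up-crossing and last down-crossing of the level $r_x=M$; the piece between them lies in the cap $r_x\geq M$ and, near $\tilde r_{i^\ast}$, strictly inside it. I would replace this piece by a concatenation of $\pm\umax$ (and, if helpful, $u=0$) arcs from $A$ to $B$ staying in $r_x\leq M$, and show, by comparing swept angles through the conserved $p_\pm$, that the replacement is strictly shorter, contradicting global optimality. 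As a cross-check that localizes the tight cases, note that by Proposition~\ref{*proposition-n_max_for_time-optimal_case} only the first and last corners of a global optimum can carry a large $|x|$ (equivalently $q(\gamma_i)<0$), so every corner with $2\leq i\leq n-1$ satisfies the uniform bound $|\tilde r_{i,x}|\leq|\sin(\eta/2)\sin(\xi/2)|$ read off from the explicit locus~\eqref{04.-switching_points_r_i-explicit}; the genuinely tight comparison is therefore against the adjacent endpoints at the two boundary corners $\tilde r_1,\tilde r_n$, which I would handle by propagating $p_\pm$ across the opening and closing arcs.

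The hardest step, and the reason this cannot be read off from the second-order criteria of Propositions~\ref{*proposition:second-order_Q-criterion}--\ref{*proposition-n_max_for_time-optimal_case}, is making the excision both admissible and strictly time-decreasing: one must verify that the rerouted cap-crossing splices transversally into the rest of the path so that the result is a genuine concatenation of allowed arcs, that it joins exactly the same $r_0$ and $r_{n+1}$, and that the swept angle drops by a strictly positive rather than merely nonnegative amount. The degenerate equality case is precisely $\tilde r_{i,x}=M$, which forces the overshooting corner onto the equatorial plane $x=0$ or onto an endpoint, and this is what accounts for the strictness of \eqref{*proposition-r_i-global_optimality_bounds_inequality}. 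Because this is an intrinsically finite (non-infinitesimal) comparison of competing trajectories, it is carried out separately in Appendix~\ref{@APP:7}.
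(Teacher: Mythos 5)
Your symmetry reduction via $R_y$ plus time reversal and your use of the V/$\Lambda$ rule (Proposition~\ref{*proposition:s*r_x*r_y<0}) to conclude that the extremal corner $\tilde r_{i^\ast}$ forces both adjacent arcs across the level plane $x{=}\mu$ are sound, and up to that point you track the paper's reasoning. The gap is everything after: the whole weight of the proof rests on the claim that the cap excursion ``can always be excised'' by an admissible, strictly shorter concatenation of bang and singular arcs from $A$ to $B$, and this is never established --- you flag it yourself as the hardest step and then defer it to Appendix~\ref{@APP:7}, i.e., to the very proof you were asked to supply. As stated, that step is a fresh constrained time-optimal problem (reachability within $r_x{\leq}M$ with a strict time gain), and the proposed bookkeeping does not solve it: bang and singular arcs run at different angular speeds ($2\sec\alpha$ versus $2$, so a bang angle $\theta$ costs time $\frac{\cos\alpha}{2}\theta$, not $\cos\alpha\,\theta$), hence once $u{=}0$ arcs enter the splice, ``reducing the total swept angle'' is no longer equivalent to reducing $T$; moreover no competitor trajectory is ever exhibited, and the transversality of the splice and the strictness of the gain are exactly the points left open.

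What you are missing is a one-line geometric observation that makes any excision construction unnecessary. The rotation axes $\vec n_{\pm\umax}{\propto}\{1,0,\pm\umax\}$ lie in the $xz$-plane, so every bang circle is invariant under the reflection $y{\mapsto}{-}y$; consequently its intersection with any plane $x{=}\mathrm{const}$ consists of a point and its mirror image. At your extremal corner, both adjacent arcs leave the level $x{=}\tilde r_{i^\ast,x}$ on the same side (V- or $\Lambda$-shape) while their far endpoints lie on the other side of, or on, that level, so each arc must cross the plane $x{=}\tilde r_{i^\ast,x}$ a second time --- necessarily at the \emph{same} point $\{\tilde r_{i^\ast,x},{-}\tilde r_{i^\ast,y},\tilde r_{i^\ast,z}\}$. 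The trajectory therefore intersects itself, which is impossible for a globally time-optimal solution (cutting the closed loop is the only ``excision'' needed, and it is trivially admissible and strictly time-saving). This is precisely how the paper argues in Appendix~\ref{@APP:7}, reusing the self-intersection mechanism from the proof of Proposition~\ref{*proposition-n_max_for_time-optimal_case}; your side remark that interior corners obey $|\tilde r_{i,x}|{\leq}|\sin(\frac{\eta}{2})\sin(\frac{\xi}{2})|$ by Proposition~\ref{*proposition-n_max_for_time-optimal_case} is correct but does not close the argument at the boundary corners $\tilde r_1,\tilde r_n$, which is exactly where your plan remains open.
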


Proposition~\ref{*proposition-r_i-global_optimality_bounds} can be used to establish the following more accurate upper bound on the number of switchings (see Appendix~\ref{@APP:8} for proof).
\begin{proposition}\label{*proposition-n_max_for_time-optimal_case_refined}
The number of corner points of the globally time-optimal type I solution $\tilde u(\tau)$ is bounded by the following inequalities:
\begin{widetext}\textnormal{
\begin{subequations}\label{*proposition-n_max_for_time-optimal_case_refined-eqs}
\begin{alignat}{2}[left ={\ncp_{\mathrm{I}}{\leq}\empheqlbrace}]
&\max(
\frac{\arccos(\frac{\rM_x}{\rP_x})}{|2\arctan(\frac{\umax}{\rP_x})|},
\frac{\pi}{|2\arctan(\frac{\umax}{\rM_x})|}
){+}1 &\mbox{ if } \rM_x\rP_x{<}0; \label{*proposition-n_max_for_time-optimal_case_refined(mp<0)}
\\
&\min(
\frac{\arccos(\frac{\rM_x}{\rP_x})}{|2\arctan(\frac{\umax}{\rP_x})|}{+}3,
\frac{\pi}{|4\arctan(\frac{\umax}{\rP_x})|}
){+}1 &\mbox{ if } \rM_x\rP_x{>}0,\label{*proposition-n_max_for_time-optimal_case_refined(mp>0)}
\end{alignat}
\end{subequations}}
\end{widetext}
where $\rP$ and $\rM$ are new notations for the trajectory endpoints $\tilde r_0$ and $\tilde r_{n{+}1}$, such that $|\rP_x|{\geq}|\rM_x|$.
\end{proposition}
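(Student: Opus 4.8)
The plan is to recast the statement as a counting problem for an arithmetic progression of angles trapped inside a band, in the same spirit as the proof of Proposition~\ref{*proposition:n<=pi/alpha} (cf.\ Fig.~\ref{@FIG.05}), but now using the sharper geometric restriction supplied by Proposition~\ref{*proposition-r_i-global_optimality_bounds}. First I would record that, combining the explicit corner coordinates \eqref{04.-switching_points_r_i-explicit} with the sign bookkeeping that produced the variables $\zeta_i$, the $x$-coordinate of the $i$-th corner point is $\tilde r_{i,x}{=}\Sign(\tilde u_i^+)\sin(\gamma_i)\sin(\tfrac{\xi}{2}){=}\sin(\zeta_i)\sin(\tfrac{\xi}{2})$, where $\zeta_i{=}\zeta_1{+}(i{-}1)(\pi{+}\eta)$ is arithmetic with common step $\pi{+}\eta{=}\pi{-}|\eta|\in(0,\pi)$ (here $\eta{<}0$ because $\theta\in(\pi,2\pi)$). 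Proposition~\ref{*proposition-r_i-global_optimality_bounds} then forces every interior $\tilde r_{i,x}$ into the open interval $I{=}(\min(0,\rP_x,\rM_x),\max(0,\rP_x,\rM_x))$, so that bounding $\ncp_{\mathrm I}$ reduces to counting how many consecutive progression points $\zeta_i$ can keep $\sin(\zeta_i)$ inside the rescaled band $I/\sin(\tfrac{\xi}{2})$.

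I would then split according to the sign of $\rP_x\rM_x$, which governs the shape of the admissible phase set on the unit circle. If $\rP_x\rM_x{<}0$ the band $I$ straddles the equatorial value $x{=}0$, so $\sin(\zeta_i)$ carries no sign constraint and the admissible phases form one wide arc per half-turn; the progression may then advance across almost the whole phase span before leaving the band, giving the additive, $\max$-type estimate \eqref{*proposition-n_max_for_time-optimal_case_refined(mp<0)}. If instead $\rP_x\rM_x{>}0$ the whole band lies on one side of $x{=}0$, so the extra requirement $\Sign(\sin(\zeta_i)){=}\const$ must hold; since the step $\pi{-}|\eta|$ is close to $\pi$, the values $\sin(\zeta_i)$ nearly alternate in sign, so this requirement eliminates roughly every other corner of the chain, doubling the effective step and thereby replacing $2\arctan(\cdots)$ by $4\arctan(\cdots)$ in the $\min$-type estimate \eqref{*proposition-n_max_for_time-optimal_case_refined(mp>0)}. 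Throughout I would invoke Proposition~\ref{*proposition-n_max_for_time-optimal_case} to pin the (at most one) corner with $q(\gamma_i){<}0$ to an end of the chain, which is what allows the first and last arcs to be peeled off and treated separately and which ultimately produces the additive constants ($+1$ converting a step count to a point count, and the extra $+3$ absorbing the two non-generic terminal arcs).

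The remaining and, I expect, most laborious step is the trigonometric reduction that converts these angular counts into the closed forms quoted in the statement. Concretely, I would re-express the step $\pi{+}\eta$ and the angular half-width of the admissible band in terms of the endpoint data, eliminating the inner-arc angle $\theta$ through the relations that tie $\theta$ and $\alpha$ to the terminal arcs reaching $\tilde r_0{=}\rP$ and $\tilde r_{n{+}1}{=}\rM$. The starting identity is $\pi{+}\eta{=}2\arctan\!\bigl(\sqrt{\umax^2{+}\cos^2(\tfrac{\theta}{2})}\,/\,\sin(\tfrac{\theta}{2})\bigr)$, which after this elimination takes the form $2\arctan(\umax/\rP_x)$ (respectively $2\arctan(\umax/\rM_x)$), while the phase separation between the first and last admissible corners—whose $x$-values are pinned near $\rP_x$ and $\rM_x$ by global optimality—collapses to $\arccos(\rM_x/\rP_x)$.

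The main obstacle is the correct bookkeeping of the two terminal arcs, which need not share the common inner length $\tilde{\Delta\tau}$ and whose contribution fixes the precise constants and the $\max$ versus $\min$ alternative; alongside this, one must check that the rescaling by $\sin(\tfrac{\xi}{2})$ is compatible with the convention $|\rP_x|{\geq}|\rM_x|$ so that the binding constraint is attributed to the correct endpoint. Once those identifications are verified, the two inequalities \eqref{*proposition-n_max_for_time-optimal_case_refined-eqs} follow from the elementary count of progression points inside the sine-band established in the first two steps.
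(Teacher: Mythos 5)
Your opening move --- writing $\ncp_{\mathrm I}{=}\left|\frac{\zeta_+{-}\zeta_-}{\pi+\eta}\right|{+}1$ for the arithmetic progression $\zeta_i$ and using Proposition~\ref{*proposition-r_i-global_optimality_bounds} to confine $\tilde r_{i,x}{=}\sin(\zeta_i)\sin(\frac{\xi}{2})$ to a band --- is exactly how the paper's Appendix~\ref{@APP:8} begins, but the steps that actually produce the quoted closed forms are missing or wrong. The step size $\pi{+}\eta{=}2\arctan(\umax\phi)$ with $\phi{=}1/\sin(\frac{\xi}{2})$ depends on the unknown inner-arc angle $\theta$, and there is no identity ``eliminating $\theta$ through the terminal arcs'' that converts it into $2\arctan(\umax/\rP_x)$: the terminal arcs do not determine $\theta$. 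What the paper does instead is rewrite the count as a ratio of two integrals over $\phi$ (eq.~\eqref{app8.-ncp(phi)-integral_form}), observe that the numerator's integrand is increasing and the denominator's decreasing in $\phi$, and use $\sin(\frac{\xi}{2}){\geq}|\tilde r_{+,x}|$ to bound the ratio by its value at the extreme $\phi{=}1/|\tilde r_{+,x}|$, which yields $\ncp_{\mathrm I}{\leq}\frac{\arccos(\tilde r_{-,x}/\tilde r_{+,x})}{2\arctan(\umax/|\tilde r_{+,x}|)}{+}1$ --- a bound in terms of the \emph{first and last corner points} $\tilde r_1,\tilde r_n$, not the endpoints $\tilde r_0,\tilde r_{n+1}$, which your sketch conflates (``$x$-values pinned near $\rP_x$ and $\rM_x$ by global optimality'' is precisely the gap). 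Passing from corner points to endpoints requires the separate monotonicity/concavity analysis of this closed form in $(\tilde r_{+,x},\tilde r_{-,x})$ combined with the estimates $|\tilde r_{-,x}|{<}|\tilde r_{+,x}|{<}|\rP_x|$ and $0{<}|\tilde r_{-,x}|{<}|\rM_x|$ supplied by Proposition~\ref{*proposition-r_i-global_optimality_bounds}; you have no substitute for either step.

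Both constants you single out are also mis-attributed. The $4$ in \eqref{*proposition-n_max_for_time-optimal_case_refined(mp>0)} does not come from sign alternation ``eliminating every other corner'': for a globally time-optimal extremal the $x$-coordinates of the corner points are \emph{monotonic} in the index $i$ (see Proposition~\ref{*proposition-n_max_for_time-optimal_case} and Fig.~\ref{@FIG.05}), so no alternation occurs, and your premise contradicts the very structure being counted. The factor arises elementarily: when the band is one-sided the worst case allowed by Proposition~\ref{*proposition-r_i-global_optimality_bounds} is $\tilde r_{-,x}{\to}0$, whence $\arccos(0){=}\frac{\pi}{2}$ and $\frac{\pi/2}{2\arctan(\umax/|\rP_x|)}{=}\frac{\pi}{4\arctan(\umax/|\rP_x|)}$. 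Likewise the $+3$ does not ``absorb two non-generic terminal arcs''; it comes from splitting the trajectory at the point $\tilde r_{c,x}{=}\rM_x$ into $R_1$ (corners with $\tilde r_{i,x}{\in}(\rM_x,\rP_x]$, counted by the $\arccos$ formula) and $R_2$ (corners with $\tilde r_{i,x}{\in}[0,\rM_x]$), and bounding $\ncp_{R_2}{\leq}3$ by a \emph{duration} argument absent from your proposal: the segment $R_2$ cannot last longer than $\pi$, while each interior bang arc lasts at least $\frac{\pi}{2}\cos\alpha$ by \eqref{03.-regular_segment_length}, allowing at most two interior bang segments there. Without the integral-monotonicity bound, the corner-to-endpoint transfer, and the duration count, the proposal cannot reach either of the inequalities \eqref{*proposition-n_max_for_time-optimal_case_refined-eqs}.
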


Let us denote $\phi_{\xi}{=}\left|\theta_{r_0,\xi}{-}\theta_{o,\xi}\right|$ ($\xi=x,z$), where $\theta_{r,\xi}$ is the angle between the axes $\vec\epsilon_{\xi}$ and $\vec r$. One can geometrically show that the maximal possible change $\Delta\theta^{\max{}}_{r,\xi}$ in $\theta_{r,\xi}$ generated by rotation around any of the axes $\vec n_{\pm\umax}$ is $\Delta\theta^{\max{}}_{r,x}{=}2\alpha$ and $\Delta\theta^{\max{}}_{r,z}{=}\pi{-}2\alpha$ (see Fig.~\ref{@FIG.07}). This fact allows us to establish the following lower bounds on the number of corner points:

\begin{figure}[t!]
   \includegraphics[width=0.45\columnwidth]{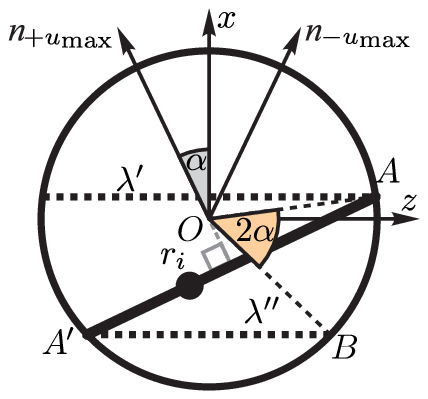} 
\caption{Geometrical calculation of the value of $\Delta\theta^{\max{}}_{r,x}$. Rotation  ${\cal S}_{\vec n_{{-}\umax}}$ around vector $\vec n_{{-}\umax}$ transfers any point $r_i$ on Bloch sphere into new point in $AA'$ plane. The $x$-coordinate of this new point is bounded by planes $\lambda'$ and $\lambda''$. Thus, the associated change in $\theta_{r,x}$ is less than $\angle AOB{=}2\alpha$. \label{@FIG.07}}
\end{figure}

\begin{proposition}\label{*proposition-n_min-bounds}
The minimal number of corner points in locally time-optimal solutions reaching the global maximum of $J$ is bounded by the inequalities:
\textnormal{
\begin{subequations}\label{*proposition-n_min-bounds-eqs}
\begin{gather}
\ncp{\geq}\frac{|\arcsin(r_{0,x}){-}\arcsin(o_x)|}{2\arctan(\umax)}{-}1;\label{*proposition-n_min-bounds-x}\\
\ncp_{\mathrm{I}}{\geq}\frac{|\arcsin(r_{0,z}){-}\arcsin(o_z)|}{2\arccot(\umax)}{-}1\label{*proposition-n_min-bounds-z};
\end{gather}
\end{subequations}
}
\end{proposition}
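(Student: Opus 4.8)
The plan is to convert the per-arc angular bounds $\Delta\theta^{\max{}}_{r,x}{=}2\alpha$ and $\Delta\theta^{\max{}}_{r,z}{=}\pi{-}2\alpha$ just established (the cone argument of Fig.~\ref{@FIG.07}) into a global lower bound on the switching count through a triangle-inequality estimate. First I would observe that a solution attaining the global maximum of $J$ must rotate $\vec r$ into coincidence with $\vec o$, so its terminal endpoint obeys $\tilde r_{\ncp{+}1}{=}o$. Hence the net angular displacement that the trajectory must accumulate in the coordinate $\xi$ equals $\phi_\xi{=}|\theta_{r_0,\xi}{-}\theta_{o,\xi}|$. Since $\vec r$ is a unit vector, $\theta_{r,\xi}{=}\arccos(r_\xi){=}\frac{\pi}{2}{-}\arcsin(r_\xi)$, so that $\phi_x{=}|\arcsin(r_{0,x}){-}\arcsin(o_x)|$ and $\phi_z{=}|\arcsin(r_{0,z}){-}\arcsin(o_z)|$, reproducing the numerators in \eqref{*proposition-n_min-bounds-eqs}.

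Next I would split the trajectory into its $\ncp{+}1$ arcs $[\tilde r_i,\tilde r_{i{+}1}]$ ($i{=}0,\dots,\ncp$) delimited by the $\ncp$ corner points. Because $\theta_{r,\xi}(\tau)$ is continuous and each arc is a single rotation about one fixed axis, the net change of $\theta_{r,\xi}$ across a single arc is bounded by $\Delta\theta^{\max{}}_{r,\xi}$. The triangle inequality then yields
\begin{gather}
\phi_\xi{=}\Big|\sum_{i{=}0}^{\ncp}(\theta_{r_{i{+}1},\xi}{-}\theta_{r_i,\xi})\Big|{\leq}\sum_{i{=}0}^{\ncp}|\theta_{r_{i{+}1},\xi}{-}\theta_{r_i,\xi}|{\leq}(\ncp{+}1)\,\Delta\theta^{\max{}}_{r,\xi},
\end{gather}
and solving for $\ncp$ gives \eqref{*proposition-n_min-bounds-x} with $\Delta\theta^{\max{}}_{r,x}{=}2\alpha{=}2\arctan(\umax)$ and \eqref{*proposition-n_min-bounds-z} with $\Delta\theta^{\max{}}_{r,z}{=}\pi{-}2\alpha{=}2\arccot(\umax)$.

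The only step needing care is the treatment of singular arcs, which dictates why the $x$-estimate applies to the full count $\ncp$ while the $z$-estimate is restricted to type I ($\ncp_{\mathrm{I}}$). A singular arc is a rotation about $\vec n_0{\propto}\{1,0,0\}$, i.e.\ about the $x$-axis; it leaves $r_x$ and hence $\theta_{r,x}$ invariant, so its contribution to the $x$-sum is $0{\leq}\Delta\theta^{\max{}}_{r,x}$ and the per-arc bound persists for both extremal types. For the $z$-coordinate, however, a rotation about the $x$-axis may shift $\theta_{r,z}$ by as much as $\pi{>}\pi{-}2\alpha$, breaking the per-arc bound; the $z$-estimate therefore holds only once every arc is a bang arc about $\vec n_{\pm\umax}$, i.e.\ for type I. The genuinely geometric content --- the per-arc bounds --- is already supplied before the proposition, so what remains is the elementary counting step above; the main obstacle is merely to verify the singular-arc bookkeeping that separates the two inequalities.
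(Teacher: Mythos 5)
Your proposal is correct and follows essentially the same route as the paper, which proves this proposition directly in the text by combining the per-arc bounds $\Delta\theta^{\max}_{r,x}{=}2\alpha$ and $\Delta\theta^{\max}_{r,z}{=}\pi{-}2\alpha$ from the Fig.~\ref{@FIG.07} cone argument with the count of $\ncp{+}1$ arcs. Your explicit triangle-inequality step and the singular-arc bookkeeping (rotations about $\vec\epsilon_x$ preserve $r_x$ but can shift $\theta_{r,z}$ by up to $\pi$) correctly spell out what the paper leaves implicit, including why the $z$-bound is restricted to type I extremals.
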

\noindent It is worth stressing that the bound \eqref{*proposition-n_min-bounds-z} is valid only for type I solutions.

Combination of the upper bounds on $\ncp$ imposed by Propositions~\ref{*proposition:type_II_no_interior_bangs} and \ref{*proposition-n_max_for_time-optimal_case} with inequalities \eqref{*proposition-n_min-bounds-eqs} leads to the following conclusion:
\begin{proposition}\label{*proposition-psi-criteria_of_type_I_II_solutions}
The globally time-optimal solution(s) of problem \eqref{01.-performance_index} is of type I if
\begin{subequations}\label{*proposition-psi-criteria_of_type_I_II_solutions-eqs}
\textnormal{
\begin{gather}\label{*proposition-psi-criteria_of_type_I_II_solutions(x)}
\phi_x{=}\left|{\arcsin(r_{0,x}){-}\arcsin(o_x)}\right|{>}
4\alpha 
\end{gather}
}
\noindent and of type II if
\textnormal{\begin{gather}\label{*proposition-psi-criteria_of_type_I_II_solutions(z)}
\phi_z{=}\left|{\arcsin(r_{0,z}){-}\arcsin(o_z)}\right|{>}\left[\frac{\pi}{2\alpha }{+}2\right](\pi{-}2\alpha).
\end{gather}
}
\end{subequations}
\end{proposition}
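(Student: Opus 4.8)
The plan is to argue by contraposition in each of the two cases, exploiting the dichotomy established by the first-order analysis: every stationary point of the time-optimal problem, and in particular the global optimum, is either of type I (purely bang-bang, no singular arc) or of type II (containing a singular arc with $u{=}0$). To obtain the type-I criterion I would show that $\phi_x{>}4\alpha$ rules out any type II global optimum; to obtain the type-II criterion I would show that $\phi_z{>}[\frac{\pi}{2\alpha}{+}2](\pi{-}2\alpha)$ rules out any type I global optimum. In both directions the mechanism is identical: confront a lower bound on the required angular displacement (Proposition~\ref{*proposition-n_min-bounds}) with an upper bound on the number of bang arcs available (Propositions~\ref{*proposition:type_II_no_interior_bangs} and \ref{*proposition-n_max_for_time-optimal_case}).

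For the type-I criterion I would use the structural fact (Proposition~\ref{*proposition:type_II_no_interior_bangs} together with the family reduction of Sec.~\ref{@SEC:preliminaries}) that every locally time-optimal type II extremal reduces to the three-piece ``bang--singular--bang'' anzatz of Fig.~\ref{@FIG.01}b, whose singular arc lies entirely in the equatorial plane $x{=}0$. Hence both endpoints of the singular arc have $\arcsin(r_x){=}0$, and the opening and closing bang arcs are the only pieces that can move the elevation $\theta_{r,x}{=}\arcsin(r_x)$ away from $0$. Since a single rotation about $\vec n_{\pm\umax}$ changes $\theta_{r,x}$ by at most $\Delta\theta^{\max}_{r,x}{=}2\alpha$ (the geometric fact of Fig.~\ref{@FIG.07}), the opening arc forces $|\arcsin(r_{0,x})|{\le}2\alpha$ and the closing arc forces $|\arcsin(o_x)|{\le}2\alpha$. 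The triangle inequality then gives $\phi_x{=}|\arcsin(r_{0,x}){-}\arcsin(o_x)|{\le}4\alpha$ for any type II global optimum, so $\phi_x{>}4\alpha$ forces the optimum to be type I.

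For the type-II criterion I would instead squeeze a hypothetical type I global optimum from both sides. Proposition~\ref{*proposition-n_max_for_time-optimal_case} caps its number of switchings at $\ncp_{\mathrm I}{\le}\frac{\pi}{2\alpha}{+}1$, while the type-I-only lower bound \eqref{*proposition-n_min-bounds-z}, after inserting the identity $2\arccot(\umax){=}\pi{-}2\alpha$, reads $\ncp_{\mathrm I}{\ge}\frac{\phi_z}{\pi{-}2\alpha}{-}1$. Chaining these two inequalities and clearing denominators yields $\phi_z{\le}[\frac{\pi}{2\alpha}{+}2](\pi{-}2\alpha)$ as a necessary condition for a type I optimum to exist, so its violation forces the optimum to be type II.

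The main obstacle is not the arithmetic but the correct bookkeeping of which arcs move which elevation. For the $x$-criterion the delicate point is recognising that the singular arc contributes zero change to $\arcsin(r_x)$ (it never leaves $x{=}0$), so that only two bang arcs are effective and the bound tightens from the naive $6\alpha$ one would get by blindly feeding $\ncp_{\mathrm{II}}{\le}2$ into \eqref{*proposition-n_min-bounds-x} down to the sharp $4\alpha$. For the $z$-criterion the complementary subtlety is that a singular arc (a rotation about $\vec n_0{\propto}\{1,0,0\}$) \emph{does} move $r_z$, which is exactly why \eqref{*proposition-n_min-bounds-z} is valid only for type I and hence constrains type I optima and nothing else. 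Verifying these two ``which-arc-moves-what'' statements, and confirming that all three supporting propositions apply to global rather than merely local optima, is where the real care lies; once they are in place the two threshold inequalities follow immediately.
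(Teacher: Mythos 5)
Your proposal is correct and follows essentially the same route as the paper, which derives this proposition precisely by chaining the switching-number upper bounds of Propositions~\ref{*proposition:type_II_no_interior_bangs} and \ref{*proposition-n_max_for_time-optimal_case} against the lower bounds of Proposition~\ref{*proposition-n_min-bounds} (with $2\arccot(\umax){=}\pi{-}2\alpha$). Your explicit observation that the singular arc of a type II optimum lies in the plane $x{=}0$ and so contributes nothing to $\arcsin(r_x)$ --- tightening the naive $6\alpha$ obtained from blindly inserting $\ncp_{\mathrm{II}}{\leq}2$ into \eqref{*proposition-n_min-bounds-x} down to the stated $4\alpha$ --- is exactly the step the paper leaves implicit, and it is indeed needed to obtain the stated constant.
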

Note that this estimate can be further detailed if combined with the refined upper bounds stated in Proposition~\ref{*proposition-n_max_for_time-optimal_case_refined}.
The statement of Proposition~\ref{*proposition-psi-criteria_of_type_I_II_solutions} is graphically illustrated in Fig.~\ref{@FIG.08} which clearly shows that the type I and type II solutions are dominant in the opposite limits of tight and loose control restriction $\umax{\to}0$ and $\umax{\to}\infty$, correspondingly. Neither type, however, totally suppresses the other one at any finite positive value of $\umax$. This coexistence sets the origin for the generic structure of suboptimal solutions (traps), whose analysis will be the subject of next two sections.

\begin{figure}[t!]
   \includegraphics[width=0.25\textwidth]{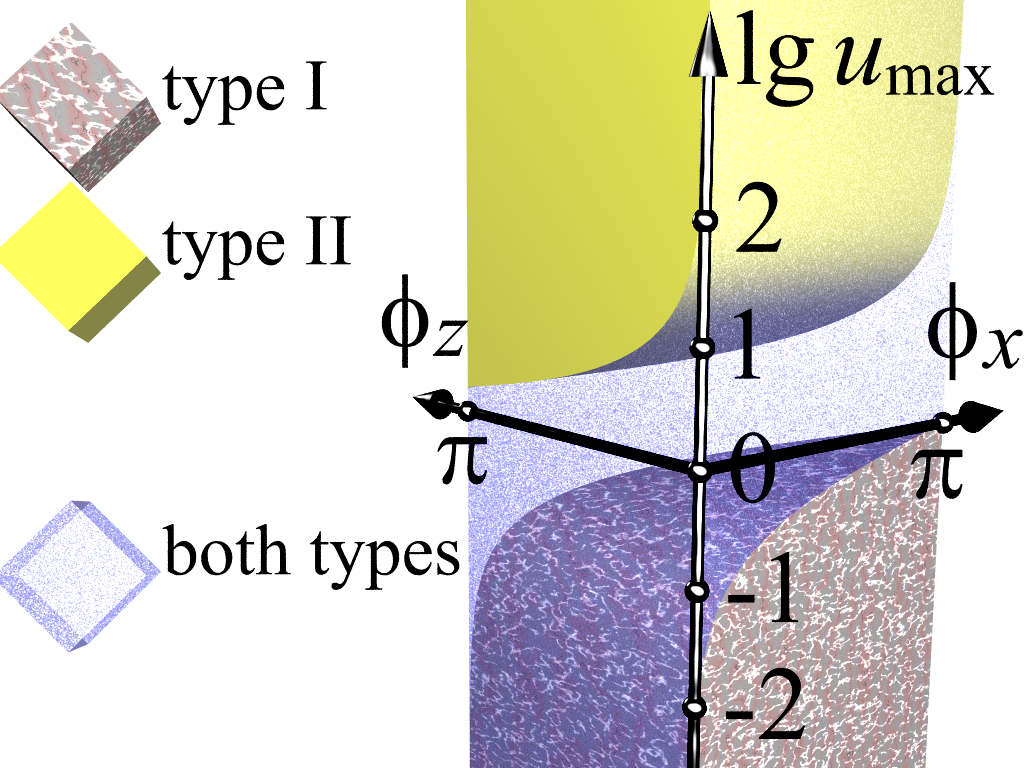}
\caption{Distribution of types of globally optimal solutions according to Proposition~\ref{*proposition-psi-criteria_of_type_I_II_solutions}. Note that the admissible values of $\phi_x$ and $\phi_z$ are restricted by inequality $\phi_x{+}\phi_z{\leq}\pi$.\label{@FIG.08}}
\end{figure}

\section{Traps in time-optimal control \label{@SEC:traps(time-optimal_control)}}

The globally time-optimal solution (hereafter denoted as $\tilde u^{\mathrm{opt}}$) of the problem \eqref{01.-performance_index} can be supplemented by a number of trapping suboptimal solutions $\tilde u$ (characterized by $\tilde J{<}\tilde J^{\mathrm{opt}}$ and/or $\tilde T{>}\tilde T^{\mathrm{opt}}$) which are however optimal with respect to any infinitesimal variation of $\tilde u(\tau)$ and $T$. In particular, Proposition~\eqref{*proposition:singular_sections_on_one_xz_side} implies that each locally optimal solution of type $^0$II gives rise to the infinite family of traps of the form shown in Fig.~\ref{@FIG.11}. In what follows, we will call such traps as "perfect loops". Proposition~\ref{*proposition:equatorial_always_type_II} indicates that the perfect loops may exist at any value of $\umax$. Nevertheless, their presence does not stipulate sufficient additional complications in finding the globally optimal solution by gradient search methods. Indeed, these ``simple'' traps can be identified at no cost by the presence of the continuous bang arc of the duration $\tilde{\Delta\tau_i}{\geq}\pi\sec(\alpha)$. Moreover, one can easily escape any such trap by inverting the sign of the control $u(\tau)$ at any continuous subsegment of this arc of duration $\pi\sec(\alpha)$ or by removing the respective time interval from the control policy.

For this reason, the primary objective of this section is to investigate the other, ``less simple'' types of traps which can be represented by type I and $^s$II$|_{s{>}0}$ suboptimal extremals. Propositions~\ref{*proposition:n<=pi/alpha}, \ref{*proposition-n_max_for_time-optimal_case}, \ref{*proposition-n_max_for_time-optimal_case_refined},  and \ref{*proposition-n_min-bounds} show that the number of switchings $n$ in such extremals is always bounded (at least by $\pi/\alpha$). Thus, the maximal number of such traps is also finite and decreases with increasing $\umax$. It will be convenient to loosely classify the traps into the ``deadlock'', ``loop'' and ``topological'' ones as follows. The first two kinds of traps are represented by type I extremals. The deadlock traps are defined by inequalities $\tilde J{<}\tilde J^{\mathrm{opt}}$ $\tilde T{<}\tilde T^{\mathrm{opt}}$. They usually also satisfy the inequalities $n{<}n^{\mathrm{opt}}$. Their existence is mainly related to the fact that the distance to the destination point $o$ for most of extremals non-monotonically changes with time. The trajectory of the loop trap has the intersection with itself other than the perfect loop. These solutions require longer times $\tilde T{>}\tilde T^{\mathrm{opt}}$ and typically also larger numbers of switchings $n{>}n^{\mathrm{opt}}$ in order to reach the kinematic extremum $\tilde J{=}\tilde J^{\mathrm{opt}}$. Finally, the topological traps are associated with extremals of the type distinct from the type of the globally optimal solution. Of course, real traps can combine the features of all these three kinds.

\newcommand{\tblForFigNine}{
\setlength{\tabcolsep}{0pt}
\begin{minipage}{\columnwidth}
\begin{tabular}
{|c|c|c|c|c|c|}
\hline \rowcolor{gray!30}
~extremal ~&~$\Sign(\tilde u^-_{1})$ ~&~$\ncp$ ~&~$\tilde{\Delta\tau}_1$ ~&~$\tilde{\Delta\tau}$ ~&~$\tilde{\Delta\tau}_{n{+}1}$ ~\\
\hline red      &$+$    &    0 & 0.23 & - & -\\
\hline green    &$-$    &    2 & 0.88 & 1.52 & 0.88\\
\hline blue     &$+$    &    4 & 0.33 & 1.78 & 0.33\\
\hline black    &$-$    &    5 & 1.15 & 1.72 & 0.57\\
\hline
\end{tabular}
\end{minipage}
}

Examples of the deadlock and loop traps are shown in Fig~\ref{@FIG.09}. In this case the globally time optimal solution with $\ncp^{\mathrm{opt}}{=}4$ is accompanied by two deadlock traps and two degenerate loop traps corresponding to $\ncp{=}5$ (only one is shown; the remaining solution can be obtained via subsequent reflections of the black trajectory relative to the $yz$ and $xy$-planes). At the same time, no traps exist for $\ncp{=}1,3$ and $\ncp{>}5$.
\begin{figure}[t!]
   \includegraphics[width=0.35\textwidth]{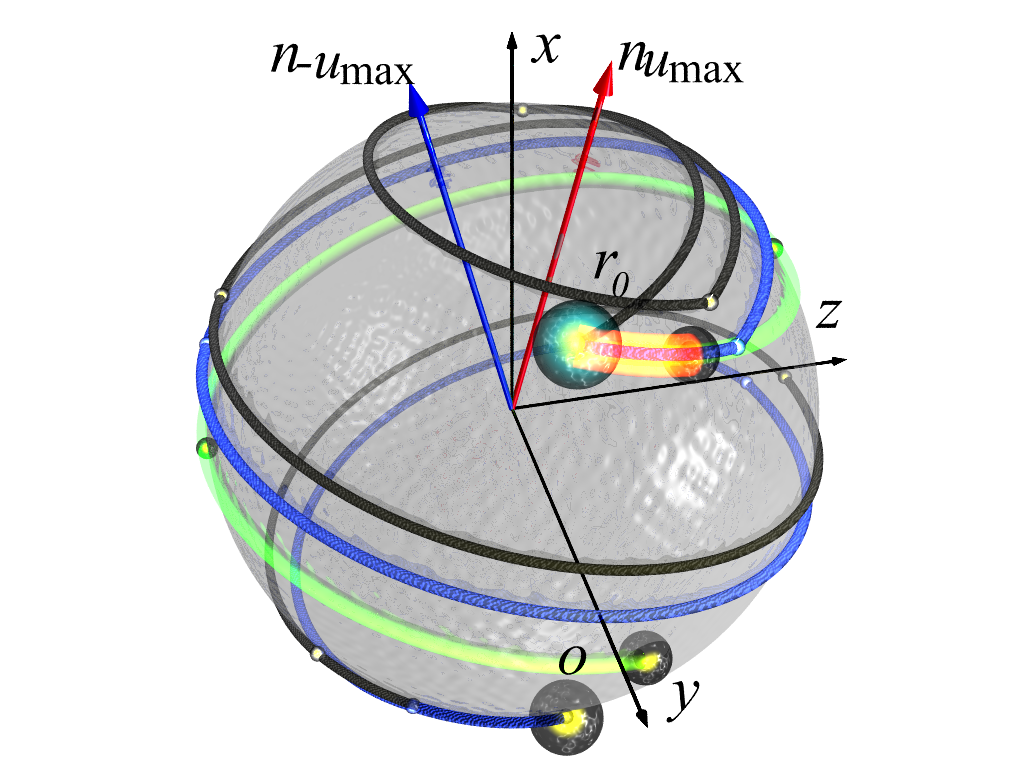}
\caption{Globally optimal solution (blue line), deadlock traps (light-red and green lines) and loop trap (black line) for the time-optimal control problem \eqref{01.-performance_index},\eqref{01.-finite_domain},\eqref{01.-finite_time} with $\umax{=}\frac14$, $r(0){=}\{\frac1{\sqrt 2},\frac1{\sqrt 2},0\},$ (big emerald dot) and $o{=}\{\frac1{\sqrt 2},{-}\frac1{\sqrt 2},0\}$ (big black-yellow dot). Small dots indicate the positions of corner points. The parameters of extremals are listed in the table:\\
\protect\tblForFigNine
\label{@FIG.09}}
\end{figure}

The bang-bang extremal $r^-{\to}r''{\to}r^+$ (blue curve) in Fig.~\ref{@FIG.02}a provides another example of the loop trap that is also the topological trap relative to type II optimal trajectory $r^-{\to}r^+$ (the specific parameters used in this example are: $\umax{=}\frac12$, $r^{-}{=}r_0{\propto}\{0,1,{-}\frac12\}$, $r^{+}{=}o{\propto}\{0,1,1\}$). In general, once the endpoints $r^-$ and $r^+$ satisfy the conditions of Proposition~\ref{*proposition:equatorial_always_type_II}, the time-optimal solution remains the same type II trajectory even in the limit $\umax{\to}0$, where the most  time optimal trajectories are of type I (see Proposition~\ref{*proposition-psi-criteria_of_type_I_II_solutions} and Fig.~\ref{@FIG.08}). Moreover the traps of the shown form will exist for any value of $\umax{<}{\sqrt{4{-}{(r^{-}_z{+}r^{+}_z)^2}}}/{|r^-_z-r^+_z|}$.

Another generic example of the traps of all three types can be straightforwardly constructed in the case $\umax{\gg}1$ (see Fig.~\ref{@FIG.10}) by selecting $o{\propto}\{1,0,\umax\}$ and choosing the initial state in vicinity of $z{=}1$: $r_0{\propto}\{c_1,c_2,\umax\}$, where $0{<}c_1{<}1$ and $c_2$ is any sufficiently small number. Although the vast majority of time-optimal solutions are of type II in the limit $\umax{\to}\infty$ (see Proposition~\ref{*proposition-psi-criteria_of_type_I_II_solutions}), for this special choice the optimal solution is of type I for any finite value of $\umax$ whereas the complementary type II extremal represents the topological trap. In the case $c_2{<}0$, there also exist a deadlock trap structurally similar to the ones shown in Fig.~\ref{@FIG.09}.

\newcommand{\tblForFigTen}{
\setlength{\tabcolsep}{0pt}
\begin{minipage}{\columnwidth}
\begin{tabular}
{|c|c|c|c|c|c|}
\hline \rowcolor{gray!30}
~extremal ~ & type &~$\ncp$ ~&~$\tilde{\Delta\tau}_1$ ~&~$\tilde{\Delta\tau}_2$ ~&~$\tilde{\Delta\tau}_3$ ~\\
\hline deadlock trap    & I   &    0 & 0.020  &   -   &  -    \\
\hline optimal solution & I   &    2 & 0.0327 & 0.262 & 0.017 \\
\hline topological trap & II  &    2 & 0.075  & 0.031 & 0.324 \\
\hline
\end{tabular}
\end{minipage}
}

\begin{figure}[t!]
   \includegraphics[width=\columnwidth]{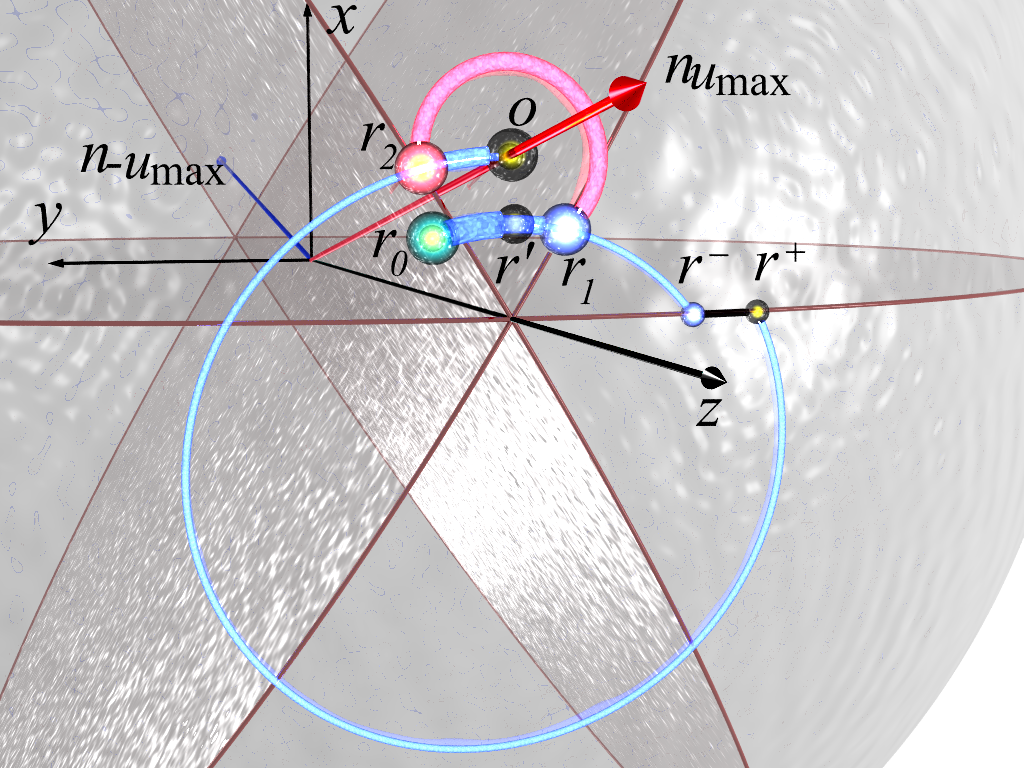}
\caption{The optimal solution (medium-thick trajectory $r_0{\to}r_1{\to}r_2{\to}o$), topological trap (thin trajectory $r_0{\to}r^-{\to}r^+{\to}o$) and deadlock trap (thick trajectory $r_0{\to}r'$) for the time-optimal control problem \eqref{01.-performance_index},\eqref{01.-finite_domain},\eqref{01.-finite_time} with $\umax{=}8$, $r(0){\propto}\{\frac12,\frac12,\umax\}$, $o{\propto}\{1,0,\umax\}$. The segments colored blue$\backslash$black$\backslash$red correspond to $u(\tau){=}{-}\umax\backslash0\backslash{+}\umax$ and are associated with rotations about the axes $\vec n_{-\umax}\backslash\vec\epsilon_x\backslash\vec n_{-\umax}$. The durations $\Delta\tau_i$ of the consequent bang arcs are summarized in the table:\\
\protect\tblForFigTen
\label{@FIG.10}}
\end{figure}

These observations lead to the following key proposition:
\begin{proposition}\label{*proposition-Traps_Never_Die(time-optimal)}
For any value of $\umax$ there exist initial states $\rho_0$, observables $\hat O$ and locally time-optimal control policy $\tilde u(\tau)$ whose constitutes the non-simple traps of time optimal control problem \eqref{01.-performance_index},\eqref{01.-finite_domain},\eqref{01.-finite_time}.
\end{proposition}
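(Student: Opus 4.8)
The plan is to prove the statement constructively: for every prescribed value of $\umax$ I would exhibit an explicit triple $\{r_0,o,\tilde u(\tau)\}$ that is a locally but not globally time-optimal extremal of type I or $^s$II$|_{s{>}0}$, i.e.\ a non-simple trap in the sense of Sec.~\ref{@SEC:traps(time-optimal_control)}. Since a single family of examples does not conveniently span the whole range $\umax{\in}(0,\infty)$, I would assemble the argument from the two complementary constructions already displayed in Figs.~\ref{@FIG.02} and \ref{@FIG.10}, verify that each is a genuine local optimum, and show that together they cover every $\umax$.

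For the bulk of the range I would use the equatorial configuration of Fig.~\ref{@FIG.02}a. Place both endpoints on the equator, $r^{\mp}{=}\{0,r^{\mp}_y,r^{\mp}_z\}$, chosen so that $r^-_yr^+_y{>}0$ and $(r^+_z{-}r^-_z)r^-_y{>}0$. By Proposition~\ref{*proposition:equatorial_always_type_II} the singular type II arc $r^-{\to}r^+$ is then the global time optimum, so any competing extremal reaching $r^+$ in longer time is automatically globally suboptimal. The bang-bang arc $r^-{\to}r''{\to}r^+$ is precisely such a competitor: being of type I it is a topological trap relative to the type II optimum, and its self-intersection makes it a loop trap. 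The quantitative input is that this bang-bang loop can close on $r^+$ only when $\umax{<}\sqrt{4{-}(r^-_z{+}r^+_z)^2}/|r^-_z{-}r^+_z|$; deriving this bound from the rotation geometry on the Bloch sphere is the substantive computation here. Because the bound depends only on the endpoint latitudes, by clustering $r^-$ and $r^+$ near a common latitude so that $|r^-_z{-}r^+_z|$ is small, its right-hand side can be made to exceed any prescribed finite $\umax$. Hence a non-simple trap exists for every finite $\umax$.

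To confirm robustness in the opposite limit and to furnish a trap in which the optimum is of the other type, I would invoke the configuration of Fig.~\ref{@FIG.10}: take $o{\propto}\{1,0,\umax\}$ and $r_0{\propto}\{c_1,c_2,\umax\}$ with $0{<}c_1{<}1$ and $|c_2|$ small. Here Proposition~\ref{*proposition-psi-criteria_of_type_I_II_solutions}, together with direct inspection, shows that the global optimum is of type I for every finite $\umax$, while the complementary type II extremal $r_0{\to}r^-{\to}r^+{\to}o$ is a topological trap; for $c_2{<}0$ one additionally obtains a type I deadlock trap of the form seen in Fig.~\ref{@FIG.09}. This supplies explicit non-simple traps in the $\umax{\gg}1$ regime.

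The hardest part, carrying the real content, is certifying that each constructed candidate is genuinely a \emph{trap}, i.e.\ a bona fide local time-optimum rather than a saddle point or a non-extremal. Global suboptimality is immediate from the comparison with the known optimum of the competing type. Local optimality, however, must be checked against the higher-order criteria of Sec.~\ref{@SEC:beyond_1st_order}: one verifies that the corner points obey the $\Lambda$/V sign condition of Proposition~\ref{*proposition:s*r_x*r_y<0}, that the parameters $\{q_i\}$ satisfy the single-negative-entry rule of Proposition~\ref{*proposition:second-order_Q-criterion}, and, for the loop trap, that it is genuinely non-simple, i.e.\ contains no continuous bang arc of duration $\tilde{\Delta\tau_i}{\geq}\pi\sec(\alpha)$ and therefore admits no trivial sign-inversion escape. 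Assembling these sign verifications for the specific trajectories, together with the derivation of the existence bound above, constitutes the technical core of the proof.
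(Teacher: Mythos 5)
Your proposal is correct and follows essentially the same route as the paper, which establishes the proposition constructively from exactly these two examples: the equatorial configuration of Fig.~\ref{@FIG.02}a, whose loop/topological trap persists for all $\umax{<}\sqrt{4{-}(r^-_z{+}r^+_z)^2}/|r^-_z{-}r^+_z|$ (a bound made arbitrarily large by bringing the endpoint latitudes together), and the Fig.~\ref{@FIG.10} configuration with $o{\propto}\{1,0,\umax\}$, $r_0{\propto}\{c_1,c_2,\umax\}$ covering the $\umax{\gg}1$ regime. Your added insistence on verifying local optimality through Propositions~\ref{*proposition:s*r_x*r_y<0} and \ref{*proposition:second-order_Q-criterion} is slightly more explicit than the paper, which leaves that certification to Proposition~\ref{*proposition:equatorial_always_type_II} and the stated properties of the displayed extremals, but the substance is the same.
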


\section{Traps in the fixed-time optimal control\label{@SEC:traps(T=const)}}

Consider the problem \eqref{01.-performance_index}, \eqref{01.-finite_domain} where the control time $T$ is fixed. Specifically, we will be interested in the case
\begin{gather}
T{=}\mbox{const}{\gg}\frac{\pi^2}{\alpha}
\end{gather} when the kinematically optimal solutions exist for any given $\rho_0$ and $\hat O$. We again will exclude the class of perfect loop traps from the analysis for the same reasons as in the previous section. Intuitively one can expect that the probability of trapping in the local extrema (other than perfect loops) should be small at large $T$. However, it is not clear if there exists such value of $T$ that the functional \eqref{01.-performance_index} will become completely free of such traps.

To answer this question, note that in line with the analysis given in Sec.~\ref{@SEC:preliminaries} any trap should be represented by either type I or type II extremal. However, the maximal number of switchings is no longer limited by inequalities similar to Proposition~\ref{*proposition:n<=pi/alpha}. At the same time, Proposition~\ref{*proposition:corner_points_locations} remains applicable (see \emph{Remark 1} in Appendix~\ref{@APP:4}). Recall that its proof is based on introduction of the ``sliding'' variations $\delta\gamma_i$ which shift the angular positions of the ``images'' of corner points on the diagram of Fig.~\ref{@FIG.05} (see Appendix~\ref{@APP:4}). The explicit expression for the ``sliding'' variation around the $i$-th corner point up to the third order in the associated control time change $\delta\tau_i$ is given by eq.~\eqref{app4.-sliding_variation-second-order-Tailor_expansion}. By definition, if the trajectory $\tilde u(\tau)$ is type I trap, then no admissible control variation $\delta u$ can improve the performance index \eqref{01.-performance_index}. Consider the subset $\Omega$ of such variations composed of infinitesimal sliding variations $\delta\gamma_i$ that preserve the total control time $T$. Then, the necessary condition of trap $\tilde u(\tau)$ is absence of the non-uniform sliding variation $\delta u(\tau){\in}\Omega$ that leaves the trajectory endpoint $r_{\ncp{+}1}$ intact. Indeed, the trajectory associated with varied control $\tilde u{+}\delta u$ would deliver the same value of the performance index but at the same time is not the locally optimal solution (since it is no longer the type I extremal) which implies that $\tilde u$ is not locally optimal.

Using \eqref{app4.-sliding_variation-second-order-Tailor_expansion} the stated necessary condition can be rewritten as the requirement of definite signature of the quadratic form \eqref{app4.-Q-form}, where the parameters $q_i$ were introduced in Proposition \eqref{*proposition:second-order_Q-criterion}. The necessary condition of the sign definiteness is that all (probably except one) parameters $q_i$ are either non-positive or non-negative. Using Fig.~\ref{@FIG.05} one can see that in the case of long $T$ only the second option can be realized with $\eta{\simeq}0$, $\eta{\simeq}{-}\frac{\pi}2$ and $\eta{\simeq}{-}\frac{\pi}{3}$ (the case $\eta{\simeq}{{-}\pi}$ must be eliminated  because it implies $\umax{=}0$). One can show that the last two variants lead to saddle points rather that to the local extrema. The remaining case $\eta{\simeq}0$ leaves the two options $\theta{\simeq}0$ and $\theta{\simeq}2\pi$. The last option corresponds to positive constant $c_{i,1}$ in \eqref{02.-[rho,O]}, which indicates the possibility of increasing  $J$ via monotonic ``stretching'' the time: $T{\to}T{+}\delta T(T)$,  $u(\tau){\to}u(\tau{-}\delta T(\tau))$, where $\delta T(\tau)$ is an infinitesimal positive monotonically increasing function. At the same time, the associated parameters $q_i$ are all negative, so there exists the combination of  variations $\delta\tau$ of arcs durations $\Delta\tau$ which will result in achieving the same value of the performance index at shorter time. Thus, we can conclude that it is also possible to increase $J$ at fixed time T via proper combination of these two variations, so the variant $\theta{\simeq}2\pi$ should be dismissed as a saddle point. Only the remaining choice $\theta{\simeq}0$ is consistent with an arbitrary number of $q_i$ of the same sign. However, in this case the length of each bang arc also reduces to zero. As result, the maximal duration of such optimal trajectories is limited by the inequality $T{\lesssim}\pi$.

This analysis leads us to remarkable conclusion:
\begin{proposition}\label{*proposition-complex_traps_die(time-fixed)}
The fixed-time optimal control problem \eqref{01.-performance_index} is free of non-simple traps for sufficiently long control times $T$.
\end{proposition}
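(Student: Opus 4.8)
The plan is to reduce the statement to a sign-definiteness test on the sliding-variation quadratic form and then to show that, once $T$ is large, this test can be satisfied only by trajectories whose own duration is bounded by $\sim\pi$. By the first-order analysis of Sec.~\ref{@SEC:preliminaries} every trap is a type I or a type II extremal. The type II case is disposed of at once: a singular trap that is not a perfect loop (the latter excluded by hypothesis) has, by Proposition~\ref{*proposition:singular_sections_on_one_xz_side}, total singular duration at most $\pi/2$, while its bang portions are subject to exactly the second-variation criterion derived below; hence the essential task is to exclude type I traps carrying an \emph{a priori} unbounded number $\ncp$ of switchings (recall that for fixed $T$ the bound $\ncp\le\pi/\alpha$ of Proposition~\ref{*proposition:n<=pi/alpha} is lost).

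For a type I extremal, Proposition~\ref{*proposition:corner_points_locations} still places every corner point through the angular variables with $\gamma_{i+1}=\gamma_1+i\eta$, and the endpoint- and time-preserving sliding variations span the admissible test cone $\Omega$. The trap condition is that the second variation, a quadratic form in the arc-duration increments $\delta\tau_i$ with diagonal weights $q_i=\cot^2(\gamma_i)-\cot^2(\tfrac{\eta}{2})$ (cf.~Proposition~\ref{*proposition:second-order_Q-criterion}), be sign-definite on $\Omega$; definiteness forces the $q_i$ to be, up to at most one exception, all of one sign. Representing each $q_i$ by the point $\zeta_i$ that advances by the fixed step $\pi+\eta$ on the circle of Fig.~\ref{@FIG.05}, a same-sign run of unbounded length is possible only when this orbit is nearly periodic with a very short period, i.e.\ only for $\eta$ close to one of the resonant values $0$, $-\tfrac{\pi}{3}$, $-\tfrac{\pi}{2}$, or $-\pi$.

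I would then eliminate these values in turn. The value $\eta\simeq-\pi$ is discarded because the defining formula for $\eta$ forces $\umax\to0$. For $\eta\simeq-\tfrac{\pi}{2}$ and $\eta\simeq-\tfrac{\pi}{3}$ the periodic images land on the boundary of the sign band, so the residual form is indefinite rather than definite and the configuration is a saddle point, not a trap. The resonance $\eta\simeq0$ leaves the two sub-branches $\theta\simeq0$ and $\theta\simeq2\pi$. In the branch $\theta\simeq2\pi$ one has $c_{i,1}>0$ in \eqref{02.-[rho,O]}, so a monotone time-stretch $u(\tau)\to u(\tau-\delta T(\tau))$ strictly raises $J$; the accompanying all-negative $q_i$ then supply a duration variation that restores the original $T$ without lowering $J$, and composing the two increases $J$ at fixed $T$, so $\tilde u$ is not a local maximum and this branch is again a saddle point. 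Only $\theta\simeq0$ remains compatible with arbitrarily many equal-sign $q_i$, and there each bang arc has length $\tfrac{\theta}{2}\cos\alpha\to0$, so the extremal collapses onto the equatorial singular motion and its total duration obeys $T\lesssim\pi$.

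Collecting the exclusions, every non-perfect-loop type I trap survives only for $T\lesssim\pi$, and the bang portions of any type II trap obey the same bound while its singular part is capped at $\pi/2$; hence for fixed $T\gg\pi$ (a fortiori for the assumed $T\gg\pi^2/\alpha$) no non-simple trap can exist, which is the assertion. I expect the real difficulty to lie not in enumerating the resonances but in the two saddle-point reductions: verifying indefiniteness of the residual form at $\eta\simeq-\tfrac{\pi}{2},-\tfrac{\pi}{3}$, and making rigorous the simultaneous stretch-and-shrink argument at $\theta\simeq2\pi$, which must couple the endpoint-preserving sliding directions with a global time reparametrization inside a single admissible variation.
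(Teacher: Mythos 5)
Your proposal reproduces the paper's own argument in Sec.~\ref{@SEC:traps(T=const)} essentially step for step: the reduction of the trap condition to sign-definiteness of the sliding-variation quadratic form with weights $q_i$ (via Proposition~\ref{*proposition:corner_points_locations} surviving the loss of the $\ncp\le\pi/\alpha$ bound), the enumeration of the resonant values $\eta\simeq0,-\tfrac{\pi}{3},-\tfrac{\pi}{2},-\pi$ on the circle of Fig.~\ref{@FIG.05}, the identical eliminations ($\umax\to0$ for $\eta\simeq-\pi$, saddle points at $-\tfrac{\pi}{2}$ and $-\tfrac{\pi}{3}$, and the composed stretch-and-shrink variation showing $\theta\simeq2\pi$ with $c_{i,1}>0$ is a saddle), and the final collapse of bang arcs at $\theta\simeq0$ yielding $T\lesssim\pi$. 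The approach is the same as the paper's, down to your closing observation that the $\theta\simeq2\pi$ saddle-point reduction is the least detailed step, which is precisely where the paper also only sketches the argument.
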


The spirit of this conclusion is in line with the results of numerical simulations performed in \cite{2012-Pechen}. With this, it is worth recalling that the general time-fixed problem may have a variety of perfect loop traps for any value of $\umax$ and, thus, is not trap-free in the strict sense. These traps were missed in the simulations in \cite{2012-Pechen} due to the specifics of numerical optimization procedure.

\section{Summary and conclusion\label{@SEC:conclusion}}

All stationary points of the time optimal control problem and all saddles and local extrema of the fixed-time optimal control problem are represented by the piecewise-constant controls of types I and II sketched in Fig.~\ref{@FIG.01} (the associated characteristic trajectories $\rho(\tau)$ on the Bloch sphere are shown in Figs.~\ref{@FIG.03} and \ref{@FIG.11}, correspondingly). 
We systematically explored the anatomy of stationary points of each type. Specifically, we identified
the locations and relative arrangements of corner points on the Bloch sphere (propositions~
\ref{*proposition:singular_sections_on_one_xz_side},
\ref{*proposition:s*r_x*r_y<0},
\ref{*proposition:corner_points_locations},
\ref{*proposition:second-order_Q-criterion},
\ref{*proposition-n_max_for_time-optimal_case}, 
\ref{*proposition-r_i-global_optimality_bounds})
and estimated their total number (propositions~
\ref{*proposition:n<=pi/alpha},
\ref{*proposition:n<=2_for_u>sqrt(1+sqrt(2))},
\ref{*proposition-n_max_for_time-optimal_case},
\ref{*proposition-n_max_for_time-optimal_case_refined},
\ref{*proposition-n_min-bounds}). These characteristics together with propositions \ref{*proposition:equatorial_always_type_II},
\ref{*proposition:type_II_no_interior_bangs},
\ref{*proposition-0^II_trajectories_are_locally_time-optimal} and
\ref{*proposition-psi-criteria_of_type_I_II_solutions} allow to determine whether the given extremal is a saddle point or locally optimal solution, and also to predict the shape of globally optimal solution. The presented results (except Proposition \ref{*proposition:n<=pi/alpha}) substantially generalize and refine the estimates obtained in previous studies \cite{2006-Boscain,2013-Hegerfeldt}.
Moreover, this study, to our knowledge, is the first example of a systematic analytic exploration of the overall topology of the quantum landscape $J[u]$ in the presence of constraints on the control $u$ and for the arbitrary initial quantum state $\rho_0$ and observable $\hat O$.
In particular, we distinguished 4 categories of traps tentatively called deadlock, topological, loop and perfect loop traps. The landscape can contain an infinite number of perfect loops whereas the number of traps of other types is always finite. Among them the number of deadlock traps and loops decreases with increasing value of the constraint $\umax$ in eq.~\eqref{01.-finite_time}. Nevertheless, we have shown by an explicit example that the traps of all categories can simultaneously complicate the landscape $J[u]$ of the time-optimal control problem regardless of the value of $\umax$. So, this is the case where the intuitive attempt to ``extrapolate'' the conclusions based on analysis of the case of unconstrained controls totally fails.

The fixed-time control problem is more intriguing. On one hand we formally showed that it is impossible to completely ``flatten'' all the traps in this case by increasing the value of $\umax$. This result is in line with generic experience concerning the optimal control in technical applications. However, if the control time is long enough (specifically, if $T{\gg}\pi^2/\arctan{\umax}$) the only traps which can survive are perfect loops. These traps can be easily escaped via simple modification of any gradient search algorithm at virtually no computational cost. Thus, the quantum landscape appears as trap-free from practical perspective, which supports the common viewpoint in quantum optimal control community. Since the controlled two-level system is the benchmark for quantum information processing this finding is relevant for efficient optimal control synthesis in a variety of experiments on cold atoms, Bose-Einstein condensates, superconducting qubits etc.

The key methodological feature of the presented derivations is introduction of the sliding variations which makes it possible to extensively rely on highly visual and intuitive geometrical arguments. For this reason, we believe that the mathematical aspect of the paper constitutes instructive introduction into high-order analysis of optimal processes.


\appendix

\section{Proof of Proposition~\texorpdfstring{\ref{*proposition:equatorial_always_type_II}}{Lg}\label{@APP:1}}
Here we consider the case $r^-_y,r^+_y{>}0$. The case $r^-_y,r^+_y{<}0$ can be treated similarly. Simple geometrical analysis leads to the following expression for the travel time difference $\delta T$ between bang-bang (orange) and ``equatorial'' (black) trajectories shown in Fig.~\ref{@FIG.02}a:
\begin{gather}
\delta T_{\idx{a}}{=}\cos (\alpha ) \left(\arcsin\left(\frac{\frac{\delta_z}{2} \sec (\alpha )-\cos (\alpha ) {r^+_z}}{\sqrt{1{-}\sin ^2(\alpha ) {r^+_z}^2}}\right)\right.+\notag\\
\arcsin(\frac{\cos (\alpha ) {r^+_z}}{\sqrt{1{-}\sin ^2(\alpha ) {r^+_z}^2}})
{-}\arcsin(\frac{\cos (\alpha ) {r^-_z}}{\sqrt{1{-}\sin ^2(\alpha ) {r^-_z}^2}}){+}\notag\\
\left.\arcsin(\frac{\frac{\delta_z}{2} {} \sec (\alpha ){+}\cos (\alpha ) {r^-_z}}{\sqrt{1-\sin ^2(\alpha ) {r^-_z}^2}})\right){-}
\arcsin({r^+_z}){+}\arcsin({r^-_z}),
\end{gather}
where $\delta_z{=}r^+_z-r^-_z$. Let us fix one of the endpoints $r^{\pm}$ and vary the position of another one. Note that $\delta T_{\idx{a}}|_{\delta_z{=}0}{=}0$ for any admissible value of $r^{\pm}_z$. Furthermore,
\begin{gather}\label{App.1.-equatorial_variation}
{\pm}\der{\delta T_{\idx{a}}}{r^{\pm}_z}{=}\frac{(1{-}{r^{\pm}_z}^2) \left(\sqrt{1{-}\frac{{r^{\pm}_z} \delta _z}{1{-}{r^{\pm}_z}^2}}{-}\sqrt{1{-}\frac{{r^{\pm}_z} \delta _z{+}\frac{\delta _z^2}{4} \sec ^2\alpha }{1{-}{r^{\pm}_z}^2}}\right)}
{\left(\csc ^2\alpha{-}{r^{\pm}_z}^2\right) \sqrt{1{-}{r^{\pm}_z} \delta _z{-}{r^{\pm}_z}^2{-}\frac{\delta _z^2}{4} \sec ^2\alpha }}{>}0.
\end{gather}
This allows to conclude that $\delta TT_{\idx{a}}{>}0$ for any $\delta_z{>}0$ which finishes the proof of Proposition for the case $r^-_yr^+_y{>}0$.

Consider now the case $r^-_yr^+_y{<}0$. For clarity, we will assume that $r^-_y{>}0$ $r^-_z{<}r^+_z$ (see Fig.~\ref{@FIG.02}b). The remaining cases can be analyzed similarly. The time difference $\delta T_{\idx{b}}$ between ``equatorial'' (black) and the green trajectories and its derivative with respect to the position of the endpoint $r^+_z$ read:
\begin{gather}
\delta T_{\idx{b}}{=}\arccos(r^+_z){-}\cos (\alpha )\arccos\left(\frac{r^+_z \cos (\alpha )}{\sqrt{1-{r^+_z}^2 \sin ^2(\alpha )}}\right);
\end{gather}
\begin{gather}
\pder{}{r^+_z}\delta T_{\idx{b}}{=}{-}\frac{2 \sqrt{1{-}{r^+_z}^2} \sin ^2(\alpha )}{{r^+_z}^2 \cos (2 \alpha ){-}{r^+_z}^2{+}2}.
\end{gather}
These expressions show that $\delta T_{\idx{b}}|_{r^+_z{=}1}=0$ and that $\pder{}{r^+_z}\delta T_{\idx{b}}{>}0$ for any admissible value of $r^+_z$. Thus, $\delta T_{\idx{b}}{>}0$ which proofs Proposition for the case $r^-_yr^+_y{<}0$.

\section{Proof of Proposition~\texorpdfstring{\ref{*proposition:s*r_x*r_y<0}}{Lg}\label{@APP:3}}
The proof is based on explicit construction of the second-order  McShane's (needle) variation of the control $\tilde u(\tau)$ which decreases $\tilde T$ if the inequality \eqref{04.-proposition:s*r_x*r_y<0} is violated. Choose arbitrary infinitesimal parameter
$\delta\tau^-{\to}0$ and denote $r_i^-{=}\tilde r(\tilde\tau_i{-}\delta\tau^-)$. Under assumptions of Proposition it is always possible (except for the trivial case $\tilde r_{i,y}{=}0$) to choose another small parameter $\delta\tau^{+}$ such that the state vector $r_i^+{=}\tilde r(\tilde\tau_i{+}\delta\tau^+)$ obeys the equality: $r_{i,x}^-{=}r_{i,x}^+$. It is evident that the Bloch vector $r_{i,x}^-$ can also reach $r_{i,x}^+$ in the course of free evolution with $u{=}0$ after certain time $\delta\tau^0$. If we require that $\delta \tau_i^+,\delta\tau_i^0|_{\delta\tau_i^-{\to}0}{=}0$ then both $\tau_i^+$ and $\tau_i^0$ are uniquely defined by $\delta\tau_i^-$:
\begin{gather}\label{app.3-variation_timings}
\delta\tau_i^+{=}\frac{\delta\tau_i^-(\tilde r_{i,y}{+}2 \delta\tau_i^{-}\tilde r_{i,z})}{\tilde r_{i,y}}+o({\delta\tau_i^-}^2);\notag\\
\delta\tau_i^0{=}\frac{2\delta\tau_i^-(\delta\tau_i^- (\tilde u_i^-  \tilde r_{i,x}{+}\tilde r_{i,z}){+}\tilde r_{i,y})}{\tilde r_{i,y}},
\end{gather}
and thus, $\delta\tau_i^0{-}\delta\tau^+{-}\delta\tau^-{=}{2 \tilde u_i^-{(\delta\tau_i^-)}^2 \tilde r_{i,x}}/{\tilde r_{i,y}}$. The latter quantity should be nonnegative for the locally time-optimal solution which leads to eq.~\eqref{04.-proposition:s*r_x*r_y<0}.

\section{Proof of Proposition~\texorpdfstring{\ref{*proposition:type_II_no_interior_bangs}}{Lg}\label{@APP:2}}
Consider any type $^s$II extremal with $s{>}0$. By definition, such extremals must contain at least one interior bang segment $\tau{\in}[\tilde\tau_i,\tilde\tau_{i{+}1}]$ of length $\tilde{\Delta\tau}_i{=}m{\pi}\cos(\alpha)$ ($m\in{\mathbb{N}},0{<}\tilde\tau_i,\tilde\tau_{i{+}1}{<}T$). Since $\tilde r_i{=}\tilde r_{i{+}1}$ both the value of the performance index $J$ and duration $T$ will not change if this segment will be ``translated'' in arbitrary new point $\tilde r(\tau_i'(\kappa))$ of extremal via the following continuous variation $\tilde u(\tau){\to}u(\kappa,\tau)$ (${-}{\tau_i}{<}\kappa{<}T{-}m\pi\cos\alpha$):
\begin{gather}
u(\kappa,\tau){=}
\begin{cases}
\tilde u(\tau), & \tau{<}\tilde \tau_i{+}\frac{\kappa{-}|\kappa|}{2}\lor \tau{>}\tilde\tau_{i{+}1}{+}\frac{\kappa{+}|\kappa|}{2};\\
\tilde u_i^+, & \tilde\tau_i{+}\kappa{<}\tau{<}\tilde\tau_{i{+}1}{+}\kappa;\\
u(\tau{-}\tilde{\Delta\tau}_i) & \mbox{otherwise,}
\end{cases}
\end{gather}
where $\tau'(\kappa){=}\tilde\tau_i{+}{\kappa}{+}\frac12(1{+}\frac{\kappa}{|\kappa|})\tilde{\Delta\tau}_i$. 

Suppose that $\tilde u(\tau)$ is locally time-optimal solution. Then all the family of control policies $\{u(\kappa, \tau),r(\kappa,\tau)\}$ should be locally time-optimal too. Since $s{>}0$ it is always possible to select the value $\kappa{=}\kappa_0$ such that $\tilde r(\tau'(\kappa_0))$ is interior point of the bang arc with $\tilde u(\tau'(\kappa_0)){=}{-}\tilde u_i^+$ and $\tilde r_x(\tau'(\kappa_0)){\ne}0$. However, the resulting trajectory $r(\kappa_0,\tau)$ is both $\Lambda$- and $V$-shaped in the neighborhood of point $r(\tilde \tau_i{+}\kappa_0){=}r(\tilde\tau_{i{+}1}{+}\kappa_0)$. According to Proposition \eqref{*proposition:s*r_x*r_y<0} such trajectory can not be time-optimal. The obtained contradiction finishes the proof.

\section{Proof of Proposition~\texorpdfstring{\ref{*proposition-0^II_trajectories_are_locally_time-optimal}}{Lg}\label{@APP:9}}

Let $u'(\tau)$ be the control strategy obtained via arbitrary McShane variation $\delta u(\tau)$ of the control $\tilde u(\tau)$. Let us show that $u'(\tau)$ is less time efficient than some member $u''(\tau)$ of the control family ${\cal F}^{[\mathrm k]}({u''}^{\idx{anz}}(\tau))$ with the same $\mathrm k$ but perhaps the different anzatz $\tilde {u''}^{\idx{anz}}$. For this we will need the following lemma which is complementary to Propositions~\ref{*proposition:equatorial_always_type_II} and \ref{*proposition:s*r_x*r_y<0}:
\begin{figure}[t!]
   \includegraphics[width=0.35\textwidth]{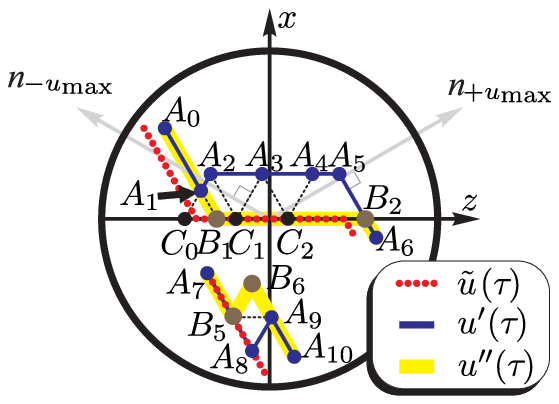}
\caption{Projections of the characteristic pieces of the original, varied and reduced trajectories $\tilde r(\tau)$, $r'(\tau)$ and $r''(\tau)$ on the $xz$ plane (it is assumed that $y$-components of all shown parts of trajectories are greater than zero). The color associations are indicated in the inset.
\label{@FIG.12}}
\end{figure}

\begin{llemma}\label{*lemma-antiequatorial_optimality}
Suppose that $r'(\tau')$ is junction point of two bang arcs of the trajectory $u(\tau)$ such that $r'_x{=}0$. Consider any two points $r^-(\tau^-)$ and $r^+(\tau^+)$ ($\tau^-{<}\tau'{<}\tau^+$) on adjacent arcs such that $r^-_x{=}r^+_x$ and the complete segment $r'_yr_y(\tau){>}0$ for any $\tau{\in}(\tau^-,\tau^+)$. Denote as $\widehat{\Delta\tau}$ the minimal duration of free evolution ($u{=}0$) required to reach $r^+$ starting from $r^-$. Then, $\widehat{\Delta\tau}{>}\tau^+{-}\tau^-$.
\end{llemma}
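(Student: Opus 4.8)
The plan is to convert the time comparison into a comparison of azimuthal angles swept about the $x$-axis, and then to exploit a reflection symmetry that collapses the two-arc estimate into a single sign-definite integral. I work on the Bloch sphere with the dynamics of Sec.~III: a bang arc with control $u$ rotates $\vec r$ about $\vec n_u\propto\{1,0,u\}$, so that $\dot r_x=-2u\,r_y$, while free evolution ($u{=}0$) is a rotation about the $x$-axis that preserves $r_x$. Writing the azimuthal angle through $\tan\phi=r_z/r_y$, a short computation gives $\dot\phi=2-\tfrac{2u\,r_xr_z}{1-r_x^2}$ on a bang arc and $\dot\phi=2$ under free evolution. Because the hypothesis $r'_y r_y(\tau){>}0$ confines the whole segment to the half-space $r_y{>}0$, one has $\phi^\pm\in(-\tfrac\pi2,\tfrac\pi2)$, hence $|\phi^+-\phi^-|<\pi$ and the minimal free-evolution time is exactly $\widehat{\Delta\tau}=(\phi^+-\phi^-)/2$. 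The claim $\widehat{\Delta\tau}>\tau^+-\tau^-$ is therefore equivalent to $\phi^+-\phi^->2(\tau^+-\tau^-)$. By the reflection $x\to-x$ I may assume the corner is ``V''-shaped with $r'_y{>}0$: the incoming arc ($u{=}{+}\umax$) reaches $r_x{=}0$ from $r_x{>}0$ and the outgoing arc ($u{=}{-}\umax$) leaves it back to $r_x{>}0$; the ``$\Lambda$''-case follows by this same reflection.

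The key step is the following symmetry. Let $\mu$ be the reflection through the $yz$-plane ($x\to-x$). Since $r'_x{=}0$, $\mu$ fixes $r'$, and since $\mu\vec n_{+\umax}=-\vec n_{-\umax}$ it conjugates a $\vec n_{+\umax}$-rotation into a $\vec n_{-\umax}$-rotation. Parametrizing time so that the corner sits at $\tau{=}0$ and writing $\vec r_{\mathrm{in}}(\tau)$ for the incoming arc, this yields $\vec r_{\mathrm{out}}(\tau)=\mu\,\vec r_{\mathrm{in}}(\tau)$: the outgoing arc is the mirror image of the \emph{analytic continuation} of the incoming circle past the corner. Consequently $r^+=\vec r_{\mathrm{out}}(\delta^+)=\mu\,\vec r_{\mathrm{in}}(\delta^+)$, and because $\mu$ leaves $r_y,r_z$ (hence $\phi$) untouched, $\phi^+=\phi(\vec r_{\mathrm{in}}(\delta^+))$. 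The angular sweep thus becomes a single integral along the incoming circle, $\phi^+-\phi^-=\int_{-\delta^-}^{\delta^+}\dot\phi_{\mathrm{in}}\,d\tau$, while the endpoint constraint $r^-_x=r^+_x$ turns into $r_{\mathrm{in},x}(-\delta^-)=-r_{\mathrm{in},x}(\delta^+)\equiv a$. Using the explicit $\dot\phi_{\mathrm{in}}$, the target inequality $\phi^+-\phi^->2(\delta^-+\delta^+)$ reduces to the single sign statement
\[
\int_{-\delta^-}^{\delta^+}\frac{r_{\mathrm{in},x}(\tau)\,r_{\mathrm{in},z}(\tau)}{1-r_{\mathrm{in},x}(\tau)^2}\,d\tau<0 .
\]

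To evaluate this I change variables to $s=r_{\mathrm{in},x}$, which is strictly monotone on the segment because $\dot r_x=-2\umax r_y<0$ when $r_y{>}0$, the endpoints becoming $s=\pm a$. On the $\vec n_{+\umax}$-circle the quantity $r_x+\umax r_z$ is conserved, so $r_z(s)=r'_z-s/\umax$ is \emph{linear} in $s$ and $r_y(s)=\sqrt{1-s^2-r_z(s)^2}>0$. Pairing $s$ with $-s$ then reduces the integral, up to the positive factor $1/(2\umax)$, to $\int_0^a\frac{s}{1-s^2}\big[h(r_z(s))-h(r_z(-s))\big]\,ds$ with $h(w)=w/\sqrt{(1-s^2)-w^2}$. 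Since $h'(w)=(1-s^2)\big((1-s^2)-w^2\big)^{-3/2}>0$, $h$ is strictly increasing, while $r_z(-s)-r_z(s)=2s/\umax>0$ for $s{>}0$; hence the bracket is negative on $(0,a)$ and the integral is negative, which gives the claim.

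I expect the main obstacle to be that a naive local (Taylor) estimate at the corner is inconclusive: the leading, and even the second-order, contributions to $\widehat{\Delta\tau}-(\tau^+-\tau^-)$ cancel (one finds $\delta^-=\delta^+$ to leading order), so the sign emerges only at third order. The device that circumvents this is precisely the reflection $\mu$, which replaces the delicate two-arc cancellation by a genuinely one-dimensional, manifestly sign-definite integral; the crux of the argument is verifying that $\mu$ carries the outgoing arc onto the continued incoming circle and that the confinement $r_y{>}0$ guarantees $\widehat{\Delta\tau}=(\phi^+-\phi^-)/2$ rather than the complementary long arc.
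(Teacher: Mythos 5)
Your proof is correct, and it takes a genuinely different route from the paper's. The proof given at the end of Appendix~\ref{@APP:9} is computational: it writes the travel-time difference in closed form as the sum of arcsines $\widehat{\delta\tau}(r^-_x,r'_z)$ of eq.~\eqref{app.9.-delta_tau} (obtained from spherical trigonometry; the definition there is misprinted and should read $\widehat{\delta\tau}{=}(\tau^+{-}\tau^-){-}\widehat{\Delta\tau}$), and then fixes its sign through the boundary value $\widehat{\delta\tau}(0,r'_z){=}0$ together with the monotonicity relations $\pder{}{r^-_x}\widehat{\delta\tau}(r^-_x,0){<}0$ and $r'_z\pder{}{r'_z}\widehat{\delta\tau}{<}0$, the derivatives being asserted rather than derived in detail. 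You never form the explicit time difference: you compare azimuthal rates about the $x$-axis ($\dot\phi{=}2$ under free evolution versus $\dot\phi{=}2{-}2u\,r_xr_z/(1{-}r_x^2)$ on a bang arc, both correct for the paper's conventions $\dot{\vec r}{=}2\{1,0,u\}{\times}\vec r$), use the reflection $\mu:x{\to}{-}x$ --- which fixes $r'$ because $r'_x{=}0$ and conjugates the $+\umax$ flow into the $-\umax$ flow since $\mu\vec n_{+\umax}{=}{-}\vec n_{-\umax}$ --- to fold the two arcs onto a single continued circle, and then the conserved quantity $r_x{+}\umax r_z$ makes $r_z$ linear in $s{=}r_x$, so the $s{\leftrightarrow}{-}s$ pairing with the strictly increasing function $h$ yields a manifestly sign-definite integral; I have checked each of these steps and they hold, including the point you flag as the crux (the mirrored outgoing arc does stay in $r_y{>}0$, so $s$ is monotone on the whole continued circle). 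Your version buys transparency exactly where the estimate is delicate --- the corner cancellation through second order is bypassed, strictness is explicit, and no unverified derivative estimates are needed --- while the paper's version buys the closed-form expression for $\widehat{\delta\tau}$, of the same family as the formulas of Appendix~\ref{@APP:1}. Two small repairs to make your write-up airtight: the reflection $x{\to}{-}x$ alone interchanges the V- and $\Lambda$-configurations at \emph{fixed} sign of $r'_y$, so the case $r'_y{<}0$ needs the additional $\pi$-rotation about the $x$-axis, $(y,z){\to}({-}y,{-}z)$, which sends $u{\to}{-}u$ and commutes with free evolution (the paper is equally terse here, dismissing the remaining cases as analogous); and since free evolution only increases $\phi$, the identification $\widehat{\Delta\tau}{=}(\phi^+{-}\phi^-)/2$ presupposes $\phi^+{>}\phi^-$, which is delivered only by the integral inequality you prove afterwards, so the positivity should be drawn from that inequality before the identification is invoked.
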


Since $\tilde u_{\idx{anz}}(\tau)$ is locally optimal by assumption it is sufficient to consider the variations of the $\delta u(\tau)$ which do not involve the vicinities of the trajectory endpoints. Moreover, it is sufficient to analyze the variations $\delta u(\tau)$ which are nonzero only in vicinities points where $\tilde r(\tau){=}0$. To show this consider the McShane variation in the arbitrary interior point $A_8$ of the bang arc (see Fig.~\ref{@FIG.12}). Consider the piece $A_7A_8A_9A_10$ of the varied trajectory $r'(\tau)$. According to Proposition~\ref{*proposition:s*r_x*r_y<0} (see eq.~\eqref{app.3-variation_timings}) the path $B_5B_6A_9$ is more time-efficient than $B_5A_8A_9$ if the varied segment $A_8A_9$ is sufficiently small. Thus, the trajectory $A_7B_5B_6A_10$ is more time-efficient than original segment $A_7A_8A_9A_10$. By repeated application of the same reasoning to the modified pieces of trajectory one can replace the control $u'(\tau)$ with the more effective strategy which differs from $\tilde u(\tau)$ only in vicinities of the points $r'$ with $r'_x{\to}0$. Since it is sufficient to consider only this modified control policy we will rename it as $u'(\tau)$ and will refer as the initial variation in the subsequent analysis.

The characteristic piece $A_0A_2A_5A_6$ of the resultant trajectory is shown in Fig.~\ref{@FIG.12}. Following the proof of Proposition~\ref{*proposition:equatorial_always_type_II} (see eq.~\eqref{App.1.-equatorial_variation}) the path $C_0A_2C_1$ is less time-efficient than $C_0A_1B_1C_1$. This implies that the path $A_1B_1C_1$ is more time-efficient than $A_1A_2C_1$. According to Lemma~\ref{*lemma-antiequatorial_optimality}, the path $A_2C_1A_3$ is more time-efficient than the path $A_2A_3$ associated with the free evolution. As a result, the trajectory segment $A_1A_2A_3$ of the $r'(\tau)$ is less time efficient than the combination of the segment $A_1B_1C_1$ of the trajectory $r''(\tau)$ with the segment $C_1A_3$. By continuing the similar analysis one finally comes to conclusion that the part of trajectory $r'(\tau)$ between the points $A_0$ and $A_5$ is less time efficient than the corresponding segment of $u''(\tau)$.
Applying the same reasoning to the entire trajectory $r'(\tau)$ we will reduce the original variation to the $^0II$ type control $u''(\tau)$ and trajectory $r''(\tau)$. Note that we must assume that all the singular segments where $u''(\tau){=}0$ are located on the same side with respect to $xz$ plane (otherwise the control time can be further reduced by eliminating some singular segments following the proof of Proposition~\ref{*proposition:singular_sections_on_one_xz_side}, see eq.~\eqref{app.3-variation_timings}). This mean, that all the interior bang sections of the control $u''(\tau)$ are of length $m\pi/cos(\alpha)$ $(m\in\mathbb{N})$. Thus, the trajectory $u''(\tau){=}0$ must belong to the family ${\cal F}^[\mathrm k]({u''}^{\idx{anz}}(\tau))$ with  the same index $\mathrm k$ as ${\cal F}^[\mathrm k]({u''}^{\idx{anz}}(\tau))$ and the anzatz ${u''}^{\idx{anz}}(\tau)$ related to $\tilde{u}^{\idx{anz}}(\tau)$ via infinitesimal variation. Since $\tilde{u}^{\idx{anz}}(\tau)$ is time-optimal the performances and control times associated with policies ${u''}^{\idx{anz}}$ and $\tilde u^{\idx{anz}}$ are related as $\tilde J^{\idx{anz}}{\geq}{J''}^{\idx{anz}}$ and $\tilde T^{\idx{anz}}{\leq}{T''}^{\idx{anz}}$. Consequently,  $\tilde J{\geq}J''$ and $\tilde T{\leq}T''$, so that the control policies $u''(\tau)$ and $u'(\tau)$ can not be more effective than $\tilde u(\tau)$. The latter conclusion completes the proof of Proposition~\ref{*proposition-0^II_trajectories_are_locally_time-optimal}.

\begin{proof}[Proof of the Lemma~\ref{*lemma-antiequatorial_optimality}]
For concreteness, consider the case $r'_y{>}0$, $r^-_x{>}0$. Denote $\widehat{\delta\tau}{=}r^+(\tau^+){-}r^+(\tau^+){-}\widehat{\Delta\tau}$. Using simple geometrical considerations one can find that
\begin{gather}
\widehat{\delta\tau}(r^-_x,r'_z){=}\frac{1}{2} \sum_{s{=}\pm1}\left(\arcsin\left(\frac{s r'_z{-}r^-_x \cot (\alpha )}{\sqrt{1{-}{r^-_x}^2}}\right){+}\right.\notag\\
\left.\frac{\arcsin\left(\frac{r^-_x \csc (\alpha ){-}sr'_z \cos (\alpha )}{\sqrt{1{-}{r'_z}^2 \sin^2(\alpha )}}\right)}{\sqrt{\tan ^2(\alpha ){+}1}}\right).\label{app.9.-delta_tau}
\end{gather}
By differentiating \eqref{app.9.-delta_tau} we find that $\pder{}{r^-_x}\widehat{\delta\tau}(r^-_x,r'_z{=}0){=}{-}\frac{x^2 \sin (2 \alpha ) \sqrt{1{-}x^2 \csc ^2(\alpha )}}{\left(x^2{-}1\right) \left(\cos (2 \alpha ){+}2 x^2{-}1\right)}{<}0$ for any admissible $r^-_x{>}0$. Similarly, one can show that $\widehat{\delta\tau}(r^-_x{=}0,r'_z){=}0$ and $r'_z\pder{}{r'_z}\widehat{\delta\tau}(r^-_x,r'_z){<}0$ for any admissible $r'_z{\ne}0$. Taken together, these relations lead to conclusion that $\widehat{\delta\tau}(r^-_x,r'_z){<}0$ for any admissible $r^-_x{>}0$ which completes the proof for the case $r'_y{>}0$, $r^-_x{>}0$. Other cases can be analyzed in the same way.
\end{proof}

\section{Proof of Proposition~\texorpdfstring{\ref{*proposition:corner_points_locations} and \ref{*proposition:second-order_Q-criterion}}{Lg}\label{@APP:4}}
One can directly check that the transformation ${\cal S_{\pm}}{=}\exp(\tilde{\Delta\tau}{\cal L}(\pm\umax))$ is equivalent to the composition of rotation ${\cal S}_{\vec \epsilon_z}(\mp\xi)$ around axis $\vec \epsilon_z$ by angle $\mp\xi$ with rotation ${\cal S}_{\vec n_{\pm\umax}}(\eta)$ around the normal vector $\vec n_{\pm\umax}$ to the plane $\lambda_{\pm\umax}$ by $\eta$:
\begin{gather}\label{04.-S+-decomposition}
{\cal S_{\pm}}{=}{\cal S}_{\vec n_{\pm1}}(\eta){\cal S}_{\vec \epsilon_z}(\mp\xi)~~~({-}\pi{<}\eta{<}0;~~0{<}\xi{<}\pi),
\end{gather}
where the domain restrictions on the values of $\eta$ and $\xi$ result from \eqref{03.-regular_segment_length}. Thus, the state transformation induced by any two subsequent bang arcs is equivalent to rotation around $\vec n_{\pm\umax}$ by angle $2\eta$. This proofs that the all odd (even) corner points are situated in the same plane orthogonal to $\vec n_{u_{1}^-}$ ($\vec n_{u_{1}^+}$) and parallel to $\vec\epsilon_z$. More specifically, they are located on the circles $\vec{r}\vec n_{\pm\umax}{=}c_0$ which are mirror images of each other in $xz$ plane.

In order to complete proof of Proposition \ref{*proposition:corner_points_locations} it remains to show that $\vec\epsilon_z{\in}\lambda_{\pm\umax}$ (i.e. that $c_0{=}0$). Since it is already shown that $\vec\epsilon_z{\parallel}\lambda_{\pm\umax}$ it is enough to prove that there exist an least one common point with axis $\vec\epsilon_z$. Consider the infinitesimal variations $\delta \tau_i^-$ and $\delta \tau_i^+$ of the durations $\tilde{\Delta\tau}_i$ and $\tilde{\Delta \tau}_{i{+}1}$ of the bang arcs adjacent to arbitrary corner point $\tilde r_i{=}\tilde r(\tilde\tau_i)$, such that the transformation ${\cal S}{=}\exp(\delta\tau_i^-{\cal L}(\tilde u_i^-))\exp(\delta\tau_i^+{\cal L}(\tilde u_i^+))$ moves the point $\tilde r_i$ into $r'_i{\in}\lambda_{\tilde u_{i}^-}$. In other words, we require that $\tilde r_i$ and $r'_i$ should relate by infinitesimal rotation ${\cal S}_{\vec n_{\tilde u_{i}^-}}(\delta\gamma_i)$. For convenience, we will call such variations as ''sliding'' ones. The form of decomposition \eqref{04.-S+-decomposition} indicates that the sliding variation at $r_i$ shifts the locations of all subsequent corner points $\tilde r_{j>i}{\to}r_j'$ by similar rotations
${\cal S}_{\vec n_{u_{j}^-}}(\delta\gamma_i)$ around the associated axes $\vec n_{u_{j}^-}$. Consider the arbitrary composition of the sliding variations, such that the trajectory start and end points remain fixed, i.e. $\sum_{i}{\delta\gamma_i}{=}0$. If the extremal $\tilde u$ is locally optimal then such variations should not allow the reduction of the control time $T$: $\sum_{i}\delta\tau_i{\leq}0$, where $\delta\tau_i{=}\delta \tau_i^-{+}\delta \tau_i^+$. This requirement leads to the following first-order (in $\delta\tau_i$) necessary optimality condition:
\begin{gather}\label{04.-corner_point_first_order_optimality}
\forall i,j: \der{\delta\gamma_i}{\delta \tau_i}{=}\der{\delta\gamma_j}{\delta \tau_j}
\end{gather}
Using simple geometrical analysis it is possible to explicitly calculate the derivatives in \eqref{04.-corner_point_first_order_optimality}:
\begin{gather}
\der{\delta\gamma_i}{\delta \tau_i}{=}\frac{2\sqrt{\cos ^2\left(\frac{\theta }{2}\right){+}\umax^2}}{\frac{\tilde r_{i,x}}{\tilde r_{i,y}}\tilde u_i^{-}\sin \left(\frac{\theta }{2}\right){-}\sqrt{1{+}\umax^2}\cos \left(\frac{\theta }{2}\right)}.
\end{gather}
We can conclude that equalities \eqref{04.-corner_point_first_order_optimality} are equivalent to condition: $\frac{\tilde r_{i,x}}{\tilde r_{i,y}}\tilde u_i^-{=}\mathrm{const}$ which directly leads to conclusion that $\vec\epsilon_z{\in}\lambda_{\pm1}$ and completes the proof of Proposition~\ref{*proposition:corner_points_locations}.

\emph{Remark 1.} It is worth stressing that the above proof of Proposition~\ref{*proposition:corner_points_locations} does not explicitly depend on the time optimality of the trajectory $\tilde u(\tau)$. Thus, its statement is generally valid for any type I extremal locally optimal with respect to small variations of control $\tilde u(\tau)$, including the case of fixed control time $T$.

The proof of Proposition~\ref{*proposition:second-order_Q-criterion} follows from the analysis of the higher-order terms in sliding variation along the extremal trajectory. Calculations result in the following expression:
\begin{gather}\label{app4.-sliding_variation-second-order-Tailor_expansion}
\delta\gamma_i{=}2\cos(\frac{\xi }{2})\delta\tau_i{-}\left|\frac{\sin^3(\frac{\xi }{2})}{\umax }\right| q_i\delta\tau_i^2{+}q_i^{(3)}\delta\tau_i^3{+}o(\delta\tau_i^3),
\end{gather}
where
\begin{align}
q_i^{(3)}{=}&\frac{1}{3}\umax^2\cos\left(\frac{\xi }{2}\right)\left[2\sec^2\left(\frac{\eta }{2}\right){-}3 q_i^2 \tan ^4\left(\frac{\eta }{2}\right){-}\right.\notag\\
&\left.6 \cot (\gamma_i ) \left(\tan \left(\frac{\eta }{2}\right){+}(q_i{+}1) \tan ^3\left(\frac{\eta }{2}\right)\right)\right]\label{app4.-sliding_variation-third-order_term}
\end{align}
The necessary condition of the local optimality is thus the inequality $\sum_{i{=}1}^nq_i\delta\tau_i^2{\geq}0$ in which the variations $\delta\tau_i$ are subject to constraint $\sum_{i{=}1}^n \delta\tau_i{=}0$. The power of sliding variation is in the fact that the quadratic form in the left-hand side of this inequality is diagonal (i.e{.} the contributions of the sliding variations $\delta\gamma_i$ are independent up to the second order in $\delta\tau_i$). Thus, optimality implies non-negativity of the following simple quadratic form:
\begin{gather}\label{app4.-Q-form}
Q_{kj}{=}\delta_{kj}q_k{+}q_n~~~(k,j{=}1,...,n{-}1),
\end{gather}
which can be easily rewritten in the form of statement of Proposition~\ref{*proposition:second-order_Q-criterion}.

\section{Proof of Proposition~\texorpdfstring{\ref{*proposition:n<=2_for_u>sqrt(1+sqrt(2))}}{Lg}\label{@APP:5}}
Let $q'{=}q_{i'}{<}0$ be the smallest term in the set $\{q_i\}$. By applying Proposition~\ref{*proposition:second-order_Q-criterion} to the corner points adjacent to ${i'}$-th we have: $q_{{i'}{\pm}1}{+}q_{{i'}}{<}0$. These inequalities can be rewritten after some algebra as:
\begin{gather}
\delta\gamma_{{i'}}{>}{-}\frac{\eta }{2}{-}\arccos\left(\sqrt{\sin ^2\left(\frac{\eta }{2}\right)(\cos (\eta)+2)}\right);\notag\\
\delta\gamma_{{i'}}{<}\frac{\eta }{2}+\cos ^{-1}\left(\sqrt{\sin ^2\left(\frac{\eta }{2}\right) (\cos (\eta )+2)}\right)\label{04.-corner_point_second_order_optimality_refined}
\end{gather}
where $\delta\gamma_{i'}{=}(\gamma_{i'} \mod \pi){-}\frac{\pi}{2}$ ($|\delta\gamma_{{i'}}|{<}\frac{\pi{+}\eta}{2}$). One can show that at least one of the inequalities  \eqref{04.-corner_point_second_order_optimality_refined} holds if $|\eta|{<}\arccos(\sqrt{2}{-}1)$. From the definition of $\eta$ it follows that the latter inequality holds for any $\umax{>}\sqrt{1{+}\sqrt{2}}$. This means that for this range of controls the $i'$-th corner point can be only either the left-most or the right-most corner point of time-optimal extremal. Using Fig.~\ref{@FIG.05} one can accordingly improve the estimate for $n_{\mathrm{max}}$: $n_{\mathrm{max}}{\leq}\left[\frac{|\eta|}{\pi{-}|\eta|}{+}2\right]{\leq}2$ for $\umax{>}\sqrt{1{+}\sqrt{2}}$ Q.E.D.

\section{Proof of Proposition ~\texorpdfstring{\ref{*proposition-n_max_for_time-optimal_case}}{Lg}\label{@APP:6}}

Suppose that $\tilde r_{i'}$ is interior corner point of the globally time-optimal solution. From \eqref{04.-switching_points_r_i-explicit} it follows that $\tilde r_{i,x}{=}\frac{|\tilde u_{1}^+|}{\tilde u_{1}^+}\sin (\zeta_i)\sin(\frac{\xi }{2}){\propto}c \sin (\zeta_i)$. where $c$ is some real constant. Since $|\sin(\zeta_{i'})|{<}\sin(\frac{\pi{+}\eta}{2})$ and $|\zeta_{i'}{-}\zeta_{i'{\pm}1}|{=}\frac{\pi{+}\eta}{2}$ the following inequality holds
\begin{gather}\label{04.-time-optimal_corner_point_condition}
\frac{\tilde r_{i',x}{-}\tilde r_{i'{\pm}1,x}}{\tilde r_{i',x}}{>}0.
\end{gather}
Proposition~\ref{*proposition:s*r_x*r_y<0} states that the trajectory curve in vicinity of $\tilde r_{i',x}$ should be $\lambda$-shaped ($V$-shaped) in the case of $\tilde r_{i',x}{<}0$ ($\tilde r_{i',x}{>}0$), as shown in Fig~\ref{@FIG.06}. Together with \eqref{04.-time-optimal_corner_point_condition} this means that both left and right adjacent arcs intersect the plane $x{=}\tilde r_{i',x}$ twice and have the second common point $\{\tilde r_{i',x},{-}\tilde r_{i',y},\tilde r_{i',z}\}$. However, the globally time optimal trajectories can not have intersections with themselves. This contradiction proves the statement of Proposition. The associated maximal number of switchings can be directly deduced using Fig.~\ref{@FIG.05}.

\begin{figure}[t]
\includegraphics[width=0.5\linewidth]{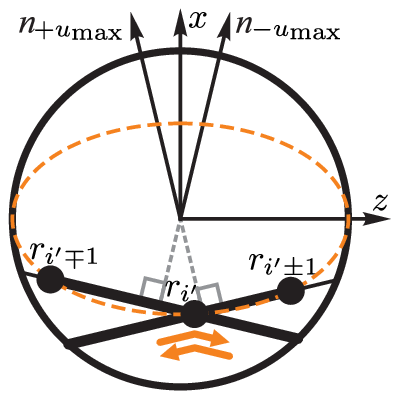}
\caption{Projection of the extremal in vicinity of the corner point $\tilde r_{i'}$ on $xz$-plane in the case $\tilde r_{i',x}{<}0$. Orange dashed ellipse is the projection of intersection of the Bloch sphere with the planes $\lambda_{\pm1}$. Arrows indicates the admissible routes of passing the point $\tilde r_{i'}$ according to Proposition~\ref{*proposition:s*r_x*r_y<0}.\label{@FIG.06}}
\end{figure}

\section{Proof of Proposition~\texorpdfstring{\ref{*proposition-r_i-global_optimality_bounds}}{Lg}\label{@APP:7}}
The statement of Proposition will be proven by contradiction. Suppose that the first of inequalities \eqref{*proposition-r_i-global_optimality_bounds_inequality} is violated (the case of violation of the second inequality can be treated similarly), i.e. $\exists i: ( \forall j :\tilde r_{i,x}{\leq}\tilde r_{j,x} \land \tilde r_{i,x}{<}0)$ . Using Proposition \ref{*proposition:s*r_x*r_y<0} we conclude that $\tilde r_{i,x}{\leq}\tilde r_{i{-}1,x},\tilde r_{i{+}1,x}$ and that the trajectory around $\tilde r_i$ is $\Lambda$-shaped: $\exists \epsilon,\forall \delta\tau{\in}({-}\epsilon,\epsilon):\tilde r_x(\tilde \tau_i{+}\delta\tau){<}\tilde r_x(\tau_i)$. Similarly to the proof of Proposition \ref{*proposition-n_max_for_time-optimal_case}, these observations mean that the both arcs $\tau{\in}(\tilde \tau_{i-1},\tilde \tau_i)$ and $\tau{\in}(\tilde \tau_i,\tilde \tau_{i+1})$ should cross the plane $x{=}\tilde r_{i,x}$ twice and thus have the common point $\{\tilde r_{i,x},{-}\tilde r_{i,y},\tilde r_{i,z}\}$. However, the latter contradicts with the assumed global time optimality of the trajectory $r(\tau)$.

\section{Proof of Proposition~\texorpdfstring{\ref{*proposition-n_max_for_time-optimal_case_refined}}{Lg}\label{@APP:8}}
Similarly to $\rP$ and $\rM$, let us introduce the new notations $r_{\pm}{=}\frac{\tilde r_1{+}\tilde r_n}2{\pm}\Sign(|\tilde r_{1,x}|{-}|\tilde r_{n,x}|)\frac{\tilde r_1{-}\tilde r_n}2$ for the
first and the last corner points $\tilde r_1$ and $\tilde r_{n}$ of trajectory $\tilde r(\tau)$, so that $|\tilde r_{+,x}|{\geq}|\tilde r_{-,x}|$. Using Fig.~\ref{@FIG.05} we find that:
\begin{gather}\label{app8.-ncp(phi)}
\ncp_{\mathrm{I}}{=}\left|\frac{\zeta_+-\zeta_-}{\pi+\eta}\right|{+}1{=}
|\frac{\arcsin(\tilde r_{+,x}\phi){-}\arcsin(\tilde r_{-,x}\phi)|}{2\arctan(\umax\phi)}|{+}1,
\end{gather}
where $\phi{=}\frac{1}{\sin(\frac{\xi}{2})}$. Eq.~\eqref{app8.-ncp(phi)}
can be rewritten as:
\begin{gather}\label{app8.-ncp(phi)-integral_form}
\ncp_{\mathrm{I}}{=}\frac{\int_0^{\phi}\left|\frac{\tilde r_{+,x}}{\sqrt{1{-}\phi^2\tilde r^2_{+,x}}}{-}\frac{\tilde r_{-,x}}{\sqrt{1{-}\phi^2\tilde r^2_{-,x}}}\right|d\phi}
{\int_0^{\phi}(\frac{\umax}{1{+}\umax^2\phi^2})d\phi}{+}1.
\end{gather}
The integrands in the numerator and denominator of \eqref{app8.-ncp(phi)-integral_form} are monotonically increasing and decreasing functions of $\phi$ in the range of interest. Since $\sin(\frac{\xi}{2}){\geq}|\tilde r_{+,x}|$ one obtains the upper estimate $\ncp_{\mathrm{I}}{\leq}\ncp_{\mathrm{I,max}}$, where:
\begin{gather}\label{app8.-n(r_-,r_+)}
\ncp_{\mathrm{I,max}}{=}\ncp|_{\phi{=}\frac1{|\tilde r_{+,x}|}}{=}\frac{\arccos(\frac{\tilde r_{-,x}}{\tilde r_{+,x}})}{2\arctan(\frac{\umax}{|\tilde r_{+,x}|})}{+}1.
\end{gather}
In order to make this result constructive we need estimates for $r_{\pm,x}$.
Elementary analysis shows that $\ncp_{\mathrm{I,max}}(\tilde r_{+,x},\tilde r_{-,x})$ is a monotonic function of $\tilde r_{-,x}$ and reaches a maximum when $\Sign(\tilde r_{+,x})\tilde r_{-,x}$ is minimal. At the same time,
$\ncp_{\mathrm{I,max}}(\tilde r_{+,x},\tilde r_{-,x})$ is a concave function of $\tilde r_{+,x}$ when $\tilde r_{+,x}\tilde r_{-,x}{<}0$ and monotonically increasing function of $|\tilde r_{+,x}|$ in the range $\tilde r_{+,x}\tilde r_{-,x}{>}0$. Thus, the upper estimate for $\ncp_{\mathrm{I,max}}$ can be calculated by substituting into \eqref{app8.-n(r_-,r_+)} appropriate upper and/or lower estimates for $\tilde r_{+,x}$ and $\tilde r_{-,x}$.
Specifically, according to Proposition~\ref{*proposition-r_i-global_optimality_bounds} $|\tilde r_{-,x}|{<}|\tilde r_{+,x}|{<}|\rP_x|$, and $0{<}|\tilde r_{-,x}|{<}|\rM_x|$. Substitution of these estimates results in \eqref{*proposition-n_max_for_time-optimal_case_refined(mp<0)} for the case $\rP_x\rM_x{<}0$ and the second of the estimates \eqref{*proposition-n_max_for_time-optimal_case_refined(mp>0)} for the case $\rP_x\rM_x{>}0$.

Note that the latter estimate directly accounts for the location of only one trajectory endpoint and can be further refined. Namely, due to \eqref{*proposition-r_i-global_optimality_bounds_inequality} the corner points in the case $\tilde r_{+,x}\tilde r_{-,x}{>}0$ are located in the range $\tilde r_{i,x}{\in}[0,\rP_x]$.
Since the $x$-coordinates of the corner points  are monotonic functions of the index $i$ (see Proposition~\ref{*proposition-n_max_for_time-optimal_case} and Fig.~\ref{@FIG.05}), the trajectory can be split into two continuous parts $R_1$ and $R_2$ such that all $\ncp_{R_1} (\ncp_{R_2})$ corner points in the segment $R_1 (R_2)$ belong to the range $\tilde r_{i,x}{\in}(\rM_x,\rP_x]$ ($\tilde r_{i,x}{=}[0,\rM_x]$), and their junction point $\tilde r_c$ is chosen such that $\tilde r_{c,x}{=}\rM_x$. Using these range estimates and the extremal properties of function~\eqref{app8.-n(r_-,r_+)} we obtain that $\ncp_{R_1}{\leq}\frac{\arccos(\frac{\rM_x}{\rP_x})}{|2\arctan(\frac{\umax}{\rP_x})|}{+}1$. Let us show that $\ncp_{R_2}{\leq}3$ (which will prove the first estimate in \eqref{*proposition-n_max_for_time-optimal_case_refined(mp>0)}). Indeed, the duration $\tilde{\Delta\tau}_{R_2}$ of this segment can not exceed $\pi$ (the maximal duration of the trajectory with $\tilde u(\tau){=}0$ connecting $\rM$ and $\tilde r_c$). At the same time, according to eq.~\eqref{03.-regular_segment_length} the minimal duration of each arc of the bang-bang trajectory is $\frac{\pi}{2}\cos\alpha$. Thus, the number of the interior bang segments of duration $\tilde{\Delta\tau}$ in the case $\umax{\leq}1$ can not exceed $[2\sqrt{2}]{=}2$, i.e. $\ncp_{R_2}{\leq}3$ (the same restriction for the case $\umax{>}1$ trivially follows from Proposition \eqref{*proposition:n<=pi/alpha}). Hence, Proposition is completely proven.

\end{document}